\newtheorem{theorem}{Theorem}
\newtheorem{thm}{Theorem}[section]
\newtheorem{lemma}{Lemma}[section]
\newtheorem{follow}{Corollary}[section]
\newtheorem{pr}{Proposition}[section]
\theoremstyle{definition}
\newcommand{\bel}{\begin{equation} \label}
\newcommand{\ee}{\end{equation}}
\theoremstyle{remark}
\newtheorem{remark}[theorem]{Remark}
\newtheorem{myremarks}[theorem]{Remarks}
\newcounter{numcount}
\newcommand{\labelnummer}{\mbox{\normalfont (\roman{numcount})}}%
\newenvironment{nummer}%
  {\let\curlabelspeicher\@currentlabel%
    \begin{list}{\labelnummer}%
      {\usecounter{numcount}\leftmargin0pt%
        \topsep0.5ex\partopsep2ex\parsep0pt\itemsep0ex\@plus1\p@%
        \labelwidth2.5em\itemindent3.5em\labelsep1em%
      }%
    \let\saveitem\item%
    \def\item{\saveitem%
      \def\@currentlabel{{\upshape\curlabelspeicher}$\,$\labelnummer}}%
    \let\savelabel\label%
    \def\label##1{\savelabel{##1}%
      \@bsphack%
        \ifmmode\else%
          \protected@write\@auxout{}%
          {\string\newlabel{##1item}{{\labelnummer}{\thepage}}}%
        \fi%
      \@esphack%
    }%
  }{\end{list}}%
\renewcommand{\appendix}{\def\thesection{\textsc{Appendix}}}
 \let\leq\le
 \let\geq\ge
\DeclareMathOperator{\tr}{tr\kern1pt}
\newcommand{\Dom}{\mathop\mathrm{Dom}\nolimits}
\newcommand{\Ran}{\mathop\mathrm{Ran}\nolimits}
\newcommand{\supp}{\mathop\mathrm{supp}\nolimits}
\newcommand{\eps}{\varepsilon}
\newcommand{\EE}{\mathcal{E}}
\newcommand{\FF}{\mathcal{F}}
\newcommand{\HH}{\mathcal{H}}
\newcommand{\UU}{\mathcal{U}}
\newcommand{\VV}{\mathcal{V}}
\newcommand{\ef}{\mathfrak{e}}
\newcommand{\re}{{\mathbb R}}
\newcommand{\N}{{\mathbb N}}
\newcommand{\proj}{{\mathbb P}}
\newif\ifper\pertrue
\def\per{.}
\def\bti{\@ifnextchar[\bbti\bbbti}
\def\bbti[#1]#2{#2, #1.}
\def\bbbti#1{#1.}
\def\z{\@ifnextchar[\zz\zzz}
\def\zz[#1]#2#3#4#5{\perfalse\emph{#2} \textbf{#3}, #4 (#5) [#1]}
\def\zzz#1#2#3#4{\emph{#1} \textbf{#2}, #3 (#4)\ifper\per\fi\pertrue}
\def\pub{\@ifstar\pubstar\pubnostar}
\def\pubnostar{\@ifnextchar[\@@pubnostar\@pubnostar}
\def\@@pubnostar[#1]#2#3#4{#2, #3, #4, #1\ifper\per\fi\pertrue}
\def\@pubnostar#1#2#3{#1, #2, #3\ifper\per\fi\pertrue}
\def\pubstar[#1]#2#3#4{\perfalse #2, #3, #4 [#1]\pertrue}
\newcommand{\beq}{\begin{equation}}
\newcommand{\eeq}{\end{equation}}
\newcommand{\ba}{\begin{array}}
\newcommand{\ea}{\end{array}}
\newcommand{\bea}{\begin{eqnarray}}
\newcommand{\eea}{\end{eqnarray}}
\newcommand{\beas}{\begin{eqnarray*}}
\newcommand{\eeas}{\end{eqnarray*}}
\newcommand{\Pre}[1]{\ensuremath{\mathrm{Re} \left( #1 \right)}}
\newcommand{\R}{\mathbb{R}}
\def\P{I\kern-.30em{P}}
\def\E{I\kern-.30em{E}}
\renewcommand{\E}{\mathbb{E}\mkern2mu}
\renewcommand{\P}{\mathbb{P}}
\begin{document}

\title[Edge currents for magnetic barriers]{Edge currents and eigenvalue estimates for magnetic barrier Schr\"odinger operators}

\author[N.\ Dombrowski]{Nicolas Dombrowski}
\address{Department of Mathematics,
    University of Helsinki,
    FI-00014 Helsinki, Finland }
\email{nicolas.dombrowski@helsinki.fi}

\author[P.\ D.\ Hislop]{Peter D.\ Hislop}
\address{Department of Mathematics,
    University of Kentucky,
    Lexington, Kentucky  40506-0027, USA}
\email{hislop@ms.uky.edu}

\author[E.\ Soccorsi]{Eric Soccorsi}
\address{CPT, CNRS UMR 7332, Aix Marseille Universit\'e, 13288 Marseille, France \&
Universit\'e du Sud Toulon-Var, 83957 La Garde, France}
\email{eric.soccorsi@univ-amu.fr}


\begin{abstract}
   We study two-dimensional magnetic Schr\"odinger operators with a magnetic field that is equal to $b>0$ for $x > 0$ and $-b$ for $x < 0$.
  This magnetic Schr\"odinger operator exhibits a magnetic barrier at $x=0$. The unperturbed system is invariant with respect to translations in the $y$-direction. As a result, the Schr\"odinger operator admits a
  direct integral decomposition. We analyze the band functions of the fiber operators as functions of the wave number and establish their asymptotic behavior. Because the fiber operators are reflection symmetric, the band functions may be classified as odd or even. The odd band functions have a unique absolute minimum. We calculate the effective mass at the minimum and prove that it is positive.
  The even band functions are monotone decreasing. We prove that the eigenvalues of an Airy operator, respectively, harmonic oscillator operator, describe the asymptotic behavior of the band functions for large negative, respectively positive, wave numbers.
 We prove a Mourre estimate for a family of magnetic and electric perturbations of the magnetic Schr\"odinger operator and establish the existence of absolutely continuous spectrum in certain energy intervals. We prove lower bounds on magnetic edge currents for states with energies in the same intervals. For a different class of perturbations, we also prove that these lower bounds imply stable lower bounds for the asymptotic edge currents. We study the perturbation by slowly decaying negative potentials. Using the positivity of the effective mass, we establish the asymptotic behavior of the eigenvalue counting function for the infinitely-many eigenvalues below the bottom of the essential spectrum.
\end{abstract}

\maketitle \thispagestyle{empty}


\tableofcontents

\vspace{.2in}

{\bf  AMS 2000 Mathematics Subject Classification:} 35J10, 81Q10,
35P20\\
{\bf  Keywords:}
magnetic Schr\"odinger operators, snake orbits, magnetic field, magnetic edge states, edge conductance\\


\section{Statement of the problem and results}\label{sec:introduction}
\setcounter{equation}{0}

We continue our analysis of the spectral and transport properties of perturbed magnetic Schr\"odinger
operators describing electrons in the plane moving under the influence of a transverse magnetic field.
In \cite{his-soc1}, two of us studied the original Iwatsuka model for which $0 < b_- < b_+ < \infty$.
The basic model treated in this paper consists of a transverse magnetic field that is constant in each half plane so that it is
equal to $b > 0$ for $x > 0$ and $- b < 0$ for $x<0$. We choose a gauge so that the corresponding vector potential has the form $(0, A_2(x,y))$.
The second component of the vector potential $A_2 (x,y)$ is obtained
by integrating the magnetic field so that $A_2(x,y) = b |x|$, independent of $y$.
The fundamental magnetic Schr\"odinger operator is:
\beq\label{eq:basic1}
H_0 := p_x^2 + ( p_y - b|x|)^2 , ~~p_x := -i \partial/ \partial x, ~~p_y := - i \partial / \partial y ,
\eeq
defined on the dense domain $C_0^\infty ( \R^2) \subset {\rm L}^2 (\R^2)$. This operator extends to a nonnegative self-adjoint operator in ${\rm L}^2 (\R^2)$.

The magnetic field is piecewise constant and equals $\pm b$ on the half-planes $\R_\pm^* \times \R$, where $\R_\pm^* := \R_\pm \backslash \{ 0 \}$.
The discontinuity in the magnetic field at $x=0$ is called a \emph{magnetic edge}. Classically, a particle moving within a distance of $\mathcal{O}(b^{-1/2})$ of the edge moves in a \emph{snake orbit} \cite{reijniers-peeters}.
Half of a snake orbit lies in the half-plane $x>0$,
 and the other half of the orbit lies in $x < 0$. We prove that the quantum model has current flowing along the magnetic edge at $x=0$
 and that the current is localized in a small neighborhood of size $\mathcal{O}(b^{-1/2})$ of $x=0$.


\subsection{Fiber operators and reflection symmetry}\label{subsec:fibre1}

Due to the translational invariance in the $y$-direction, the operator $H_0$ on ${\rm L}^2 (\R^2)$ is unitarily equivalent to the direct integral of
operators $h(k)$, $k \in \R$, acting on ${\rm L}^2 (\R)$.
This reduction is obtained using the partial Fourier transform with respect to the $y$-coordinate and defined as
$$
(\FF u)(x,k) = \hat{u}(x,k): = \frac{1}{\sqrt{2\pi}} \int_{\re} \mathrm{e}^{-i y k}
u(x,y) dy,~~\ (x,k) \in \re^2.
$$
Then we have $\FF H_0 \FF^* = {\mathcal H}_0$ where
$$
{\mathcal H}_0 : =  \int_{\re}^\oplus h(k) dk,
$$
and the fiber operator $h(k)$ acting in $\HH:=\mathrm{L}^2(\re)$ is
$$
h(k)  := p_x^2 + (k - b|x|)^2,~~\ k \in \re .
$$

Since the effective potential $(k - b|x|)^2$ is unbounded as $|x| \rightarrow \infty$,
the self-adjoint fiber operators $h(k)$ have compact resolvent. Consequently, the spectrum of $h(k)$ is discrete.
We write $\omega_j (k)$ for the eigenvalues listed in increasing order. They are all simple (see \cite[Appendix: Proposition A.2]{HS1})
and depend analytically on $k$.
As functions of $k \in \R$, these functions are called the {\it band functions} or {\it dispersion curves} and their properties
play an important role. For fixed $k \in \R$, we denote by $\psi_j(k)$ the ${\rm L}^2$-normalized eigenfunctions of $h(k)$ with eigenvalue $\omega_j(k)$.
These satisfy the eigenvalue equation:
\beq\label{eq:ev-eqn1}
h(k) \psi_j (x,k) = \omega_j (k) \psi_j(x,k), ~~\psi_j(x,k) \in {\rm L}^2 (\R), ~~\| \psi_j (\cdot,k) \| = 1 .
\eeq
We choose all $\psi_j(k)$ to be real, and $\psi_1(x,k) > 0$ for $x \in \re$ and $k \in \re$. The rank-one orthogonal projections $P_j(k) := \langle \cdot, \psi_j(k) \rangle \psi_j(k)$, $j \in \N^*$, depend analytically on $k$ by standard arguments.

The full operator $H_0$ exhibits reflection symmetry with respect to $x=0$.
Let $I_P$ be the parity operator:
\beq\label{eq:parity1}
(I_Pf)(x,y) := f(-x,y),
\eeq
so that $I_P^2 = 1$. The Hilbert space ${\rm L}^2 (\R^2)$ has an orthogonal decomposition
corresponding to the eigenspaces of $I_P$ with eigenvalue $\pm 1$. The Hamiltonian $H_0$ commutes with $I_P$ so each eigenspace of $I_P$
is an $H_0$-invariant subspace.

This symmetry passes to the fiber decomposition. For each $k \in \R$ we have $[ h(k) , I_P] = 0$, where $I_P$ is the restriction to ${\rm L}^2 (\R)$ of the operator defined in \eqref{eq:parity1}. Since the eigenvalues of $h(k)$ are simple, for each $k \in \R$, there is a map $\theta_j(k): \re \to \{ \pm 1 \}$
so that
$$
(I_P \psi_j)(x,k) = \theta_j(k) \psi_j(x,k),~~\ k \in \re,~~\ j \in  \N^*,
$$
as $\psi_j(x,k)$ is ${\rm L}^2(\R_x)$-normalized and real-valued. We show that $\theta_j(k)$ is independent of $k$.
Since the mapping $k \mapsto P_j(k)$, the orthogonal projector onto $\psi_j(\cdot,k)$,
is analytic, it follows that $\theta_j(k)= \theta_j(0)$ for every $k \in \re$.
Consequently, each eigenfunction $\psi_j(x , k)$ is either even or odd in $x$.

We have an $h(k)$-invariant
decomposition ${\rm L}^2 (\R) = \mathcal{H}_- \oplus \mathcal{H}_+$, according to the eigenvalues $\{ -1, +1\}$ of the
projection $(I_P f)(x) = f(-x)$.
From this then follows that
$h(k)=h^+(k) \oplus h^-(k)$, where
$$
h^{\pm}(k):=h(k)_{| \HH_{\pm}},\ \HH_{\pm}:= \{ f \in \HH,\ I_P f = \pm f \}.
$$
We analyze the spectrum of $h(k)$ by studying the spectrum of the restricted operators letting
$\sigma(h^{\pm}(k)) := \{ \omega_j^{\pm}(k),\ j \in \N^* \}$.
Bearing in mind that $\omega_j^+(k) < \omega_j^-(k)$ and $\sigma(h(k)) = \sigma(h^+(k)) \cup \sigma(h^-(k))$ for every $k \in \re$, we have
$$
\omega_j^+(k)=\omega_{2j-1}(k),\ \omega_j^-(k)=\omega_{2j}(k),\ j \in \N^*.
$$


\subsection{Effective potential}\label{subsec:eff-pot1}

The fiber operator $h(k)$ has an effective potential:
$$
V_{eff} (x,k) := ( k - b |x|)^2, ~~x, k  \in \R .
$$
The properties of this potential determine those of the band functions.

\vspace{.1in}
\noindent
{\em Positive $k > 0$}. There are two minima of $V_{eff}$ at $x_\pm := \pm k / b$. The potential consists of two parabolic potential
wells centered at $x_\pm$ and has value $V_{eff}(0,k) = k^2$. As $k \rightarrow + \infty$, the potential wells separate and the barrier between the two minima grows to infinity.
\vspace{.1in}

\noindent
{\em Negative $k< 0$}. The effective potential is a parabola centered at $x=0$ and $V_{eff}(0,k) = k^2$ is the minimum. Consequently, as $k \rightarrow - \infty$, the minimum of this potential well goes to plus infinity.


\subsection{Band functions}\label{subsec:disp-curves1}

The behavior of the effective potential determines the band functions. For $k > 0$, the symmetric double wells of $V_{eff}$
indicate that there are two eigenvalues near each level of a harmonic oscillator Hamiltonian. The splitting of these eigenvalues is
exponentially small in the tunneling distance in the Agmon metric between $x_\pm$. As $k \rightarrow + \infty$, this tunneling effect
is suppressed and these two eigenvalues approach the harmonic oscillator eigenvalue exponentially fast. For $k < 0$, there is a single potential well with a minimum that goes to infinity as $k \rightarrow - \infty$. Hence, the band functions
diverge to plus infinity in this limit. Several band functions along with the parabola $E = k^2$ are shown in Figure 1.

\subsection{Relation to edge conductance}\label{subsec:relation1}

Dombrowski, Germinet, and Raikov \cite{DGR} studied the quantization of the Hall edge conductance for a generalized family of Iwatsuka models including the model discussed here. Let us recall that the Hall edge conductance is defined as follows.
We consider the situation where the edge lies along the $y$-axis as discussed above.
Let $I := [a,b] \subset \R$ be a compact energy interval. We choose a smooth decreasing function $g$
so that ${\rm supp} ~g' \subset [a, b ]$.
Let $\chi = \chi (y)$ be an $x$-translation invariant smooth function with $\mbox{supp} ~\chi' \subset [-1/2, 1/2]$. The edge Hall conductance is
defined by
$$
\sigma_e^I (H) := - 2 \pi {\rm tr} ~ ( g'(H) i[ H, \chi ] ) ,
$$
whenever it exists. The edge conductance measures the current across the axis $y=0$ with energies below the energy interval $I$.

Theorem 2.2 of \cite{DGR} presents the quantization of edge currents for the generalized Iwastuka model.
For this model, the magnetic field $b(x)$ is simply assumed to be monotone and to have values $b_\pm$ at $\pm \infty$.
The energy interval $I$ is assumed to satisfy the following condition. There are two nonnegative integers $n_\pm \geq 0$ for which
\beq\label{eq:gen-iwatsuka1}
I \subset ( (2n_- - 1)|b_-|, (2n_- + 1)|b_-|) \cap ( (2n_+ - 1)|b_+|, (2 n_+ + 1)|b_+|), ~~n_\pm \neq 0 .
\eeq
If $n_\pm = 0$, the corresponding interval should be taken to be $(- \infty, |b_\pm|)$.
Under condition \eqref{eq:gen-iwatsuka1}, Dombrowski, Germinet, and Raikov \cite{DGR} proved
$$
\sigma_e^I (H) = ({\rm sign}~ b_-) n_- - ({\rm sign}~ b_+) n_+ .
$$
Applied to the model studied here where $b_+ > 0$ and $b_- = - b_+ < 0$, and under condition \eqref{eq:gen-iwatsuka1}, we have
$$
\sigma_e^I (H) = - ( n_-  + n_+) .
$$
In particular, if $b_+ = b > 0$, and $I \subset ( (2n - 1)b, (2n + 1)b )$, we have $\sigma_e^I (H) = - 2n$.


We complement this result by proving in sections \ref{sec:mourre-est1} and \ref{sec:edge-curr1} the existence
and localization of edge currents for $H_0$ and its perturbations. Following the notation of those sections, we prove,
roughly speaking, that there is a nonempty interval $\Delta_E (\delta_0)$
between the Landau levels $(2n-1)b$ and $(2n+1)b$
   and a finite constant $c_n > 0$, so that for any state $\psi = \mathbb{P}_0 (\Delta_E(\delta_0)) \psi$, where $\mathbb{P}_0(\Delta_E(\delta_0))$ is the spectral projector for $H_0$ and the interval $\Delta_E(\delta_0)$,
we have
$$
\langle \psi, v_y \psi \rangle \geq \frac{c_n}{2} b^{1/2} \| \psi \|^2 > 0, ~~ v_y := - (p_y - b|x|) .
$$
This lower bound indicates that such a state $\psi$ carries a nontrivial edge current for $H_0$.
We prove that this estimate is stable for a family of magnetic and electric perturbations of $H_0$.


\subsection{Contents}\label{subsec:contents1}
We present the properties of the band functions $\omega_j(k)$ for the unperturbed fiber operator $h(k)$ in section \ref{sec:dispersion-curves1}. The emphasis is on the behavior of the band functions as $k \rightarrow \pm \infty$. The basic Mourre estimate for the unperturbed operator $H_0$ is derived in section \ref{sec:mourre-est1} and its stability under perturbations is proven. As a consequence, this shows that there is absolutely continuous spectrum in certain energy intervals. Existence, localization, and stability of edge currents for a family of electric and magnetic perturbations is established in section \ref{sec:edge-curr1}. These edge currents and their lower bounds are valid for all times. We also prove a lower bound on the asymptotic velocity for a different class of perturbations in Theorem \ref{th:asymptotic1}. In section \ref{sec:ev-counting1}, we study perturbations by negative potentials decaying at infinity. We demonstrate that such potentials create infinitely-many eigenvalues that accumulate at the bottom of the essential spectrum from below. We establish the asymptotic behavior of the eigenvalue counting function for these eigenvalues accumulating at the bottom of the essential spectrum.


\subsection{Notation}\label{subsec:note1}

We write $\langle \cdot, \cdot \rangle$ and $\| \cdot \|$ for the inner product and norm on ${\rm L}^2(\R^2)$.
The functions are written with coordinates $(x,y)$, or, after a partial Fourier transform with respect to $y$,
we work with functions $f(x,k) \in {\rm L}^2(\R^2)$. We often view these functions $f(x,k)$ on ${\rm L}^2 (\R_x)$ as parameterized by $k \in \R$.
In this case, we also write $\langle f(\cdot, k), g(\cdot, k ) \rangle$ and $\| f(\cdot, k) \|$ for the inner product
and related norm on ${\rm L}^2 (\R_x)$. So whenever an explicit dependance on the parameter $k$ appears, the functions should be considered on ${\rm L}^2 (\R_x)$. We indicate explicitly in the notation, such as $\| \cdot \|_X$, for $X = {\rm L}^2 (\R_\pm)$, when we work on those spaces. We write $\| \cdot \|_\infty$ for $\| \cdot \|_{{\rm L}^\infty (X)}$ for $X= \R, \R_\pm, ~{\rm or} ~ \R^2$.
For a subset $X \subset \R$, we denote by $X^*$ the set $X^* := X \backslash \{ 0 \}$. Finally for all $n \in \N$ we put $\N_n := \{ j \in \N,\ j \leq n \} = \{ 0,1,\ldots, n \}$.

\subsection{Acknowledgements}\label{subsec:acknowledgements1}

ND is supported by the Center of Excellence in Analysis and Dynamics Research of the Finnish Academy.
PDH thanks the Universit\'e de Cergy-Pontoise and the Centre de Physique Th\'eorique, CNRS, Luminy, Marseille, France, for its hospitality. PDH was partially supported by the Universit\'e du Sud Toulon-Var, La Garde, France, and National Science Foundation grant 11-03104
during the time part of the work was done. ES thanks the University of Kentucky for hospitality.

\begin{remark}
After completion of this work, we learned of a similar analysis of the band structure by Nicolas Popoff \cite{popoff} in his 2012 thesis at the Universit\'e Rennes I. We thank Nicolas for many discussions and for letting us use his graph in Figure 1.
\end{remark}

\begin{remark} After completing this paper, we discovered the paper ``Dirichlet and Neumann eigenvalues for half-plane magnetic
Hamiltonians,'' by V.\ Bruneau, P.\ Miranda, and G.\ Raikov \cite{BMR}. Their Corollary 2.4, part (i), is similar to our Theorem \ref{thm-ea}.
\end{remark}

\section{Properties of the band functions}\label{sec:dispersion-curves1}
\setcounter{equation}{0}

In this section, we prove the basic properties of the band functions $k \in \R \mapsto \omega_j (k)$.
We have the basic identity:
$$
\omega_j (k) = \langle \psi_j (\cdot, k) , h(k) \psi_j ( \cdot, k) \rangle .
$$

According to section \ref{subsec:fibre1}, the eigenfunctions of $h(k)$ are either
\textbf{even} and lie in $\mathcal{H}_+$, or \textbf{odd} and lie in $\mathcal{H}_-$, with respect to the reflection $x \mapsto - x$.
We label the states so that the eigenfunctions $\psi_{2j-1} \in \mathcal{H}_+$ and $\psi_{2j} \in \mathcal{H}_-$, for $j = 1,2,3, \ldots$.
The restrictions of $h(k)$ to $\mathcal{H}_\pm$ are denoted by $h_\pm (k)$, with eigenvalues $\omega_j^+ (k) =
\omega_{2j-1}(k)$ and $\omega_j^- (k) = \omega_{2j}(k)$, respectively.

Following the qualitative description in section \ref{subsec:eff-pot1}, we have the following asymptotics for the band functions.
When $k \rightarrow  + \infty$, the band function satisfies $\omega_j(k) \rightarrow (2j-1)b$, whereas as $k \rightarrow - \infty$,
we have $\omega_j (k) \rightarrow + \infty$.

\begin{pr}\label{prop:disp-deriv1}
The band functions $\omega_j(k)$ are differentiable and the derivative satisfies
\beq\label{eq:ev-deriv1}
\omega_j ' (k) = \frac{-2}{b} \left[ ( \omega_j (k) - k^2) \psi_j (0,k)^2 + \psi_j ' (0,k)^2 \right] .
\eeq
As a consequence, we have a classification of states:
\begin{enumerate}
\item {\bf Odd states:} $\psi_{2j} (0,k) = 0$. The band functions satisfy:
\beq\label{eq:disp-odd1}
(\omega_j^-)' (k) = \omega_{2j}'(k) = \frac{-2}{b} \psi_{2j} ' (0,k)^2 < 0 .
\eeq
\item {\bf Even states:} $\psi_{2j-1} ' (0,k) = 0$. The band functions satisfy:
\beq\label{eq:disp-even1}
(\omega_j^+)' (k) = \omega_{2j-1} ' (k) = \frac{-2}{b} ( \omega_{2j-1} (k) - k^2) \psi_{2j-1} (0,k)^2.
\eeq
\end{enumerate}
\end{pr}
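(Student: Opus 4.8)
The plan is to derive \eqref{eq:ev-deriv1} from the Feynman--Hellmann formula combined with an integration by parts that is tailored to the corner of the effective potential $V_{eff}(x,k)=(k-b|x|)^2$ at $x=0$. Since, as recorded in Section~\ref{subsec:fibre1}, the simple eigenvalue $\omega_j(k)$ and the rank‑one projection $P_j(k)$ depend analytically on $k$, there is an analytic choice of real, normalized eigenfunction $\psi_j(\cdot,k)$, and $h(k)$ is a holomorphic family of type (A). Differentiating $\omega_j(k)=\langle\psi_j(\cdot,k),h(k)\psi_j(\cdot,k)\rangle$ in $k$, using the eigenvalue equation, self‑adjointness of $h(k)$, and $\partial_k\|\psi_j(\cdot,k)\|^2=0$, the terms involving $\partial_k\psi_j$ cancel and one is left with the Feynman--Hellmann identity
\[
\omega_j'(k)=\langle\psi_j(\cdot,k),(\partial_k h(k))\psi_j(\cdot,k)\rangle=2\int_{\R}(k-b|x|)\,\psi_j(x,k)^2\,dx .
\]
Because $V_{eff}(\cdot,k)$ is continuous, the eigenfunction $\psi_j(\cdot,k)$ lies in $C^2(\R)$, and $\psi_j$ together with $\psi_j'$ decays faster than any polynomial at $\pm\infty$ (so $V_{eff}\psi_j^2\to0$); hence all boundary contributions at $\pm\infty$ in what follows vanish, and the only delicate point is the non‑smoothness of $V_{eff}$ at $x=0$, which I would handle by splitting the integral over $\R_+^*$ and $\R_-^*$.

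The key structural observation is that on each half line $\partial_k V_{eff}$ is, up to the constant $\mp 1/b$, the $x$‑derivative of $V_{eff}$: one checks directly that $\partial_k V_{eff}=-\tfrac1b\,\partial_x V_{eff}$ on $\R_+^*$ and $\partial_k V_{eff}=+\tfrac1b\,\partial_x V_{eff}$ on $\R_-^*$. Thus the Feynman--Hellmann expression becomes
\[
\omega_j'(k)=-\frac1b\int_0^\infty(\partial_x V_{eff})(x,k)\,\psi_j(x,k)^2\,dx+\frac1b\int_{-\infty}^0(\partial_x V_{eff})(x,k)\,\psi_j(x,k)^2\,dx .
\]
To evaluate each half‑line integral I would use the energy‑density (Wronskian‑type) balance that follows from $-\psi_j''+V_{eff}\psi_j=\omega_j\psi_j$, namely
\[
\frac{d}{dx}\Big[\psi_j'(x,k)^2-\big(V_{eff}(x,k)-\omega_j(k)\big)\psi_j(x,k)^2\Big]=-(\partial_x V_{eff})(x,k)\,\psi_j(x,k)^2 .
\]
Integrating this over $(0,\infty)$ and over $(-\infty,0)$, using $V_{eff}(0,k)=k^2$ and the decay at $\pm\infty$, gives $\int_0^\infty(\partial_x V_{eff})\psi_j^2\,dx=\psi_j'(0,k)^2-(k^2-\omega_j(k))\psi_j(0,k)^2$ and $\int_{-\infty}^0(\partial_x V_{eff})\psi_j^2\,dx=-\big(\psi_j'(0,k)^2-(k^2-\omega_j(k))\psi_j(0,k)^2\big)$; substituting these into the previous display yields exactly \eqref{eq:ev-deriv1}.

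For the classification, recall from Section~\ref{subsec:fibre1} that each $\psi_j(\cdot,k)$ is either even or odd in $x$. If $\psi_{2j}(\cdot,k)$ is odd then $\psi_{2j}(0,k)=0$, the first bracket in \eqref{eq:ev-deriv1} vanishes, and \eqref{eq:disp-odd1} follows; the inequality is strict since $\psi_{2j}'(0,k)\neq0$, for otherwise $\psi_{2j}(0,k)=\psi_{2j}'(0,k)=0$ would force $\psi_{2j}(\cdot,k)\equiv0$ by uniqueness for the second‑order ODE, contradicting normalization. If $\psi_{2j-1}(\cdot,k)$ is even then $\psi_{2j-1}'(0,k)=0$, the second term in \eqref{eq:ev-deriv1} vanishes, and \eqref{eq:disp-even1} follows (here no sign information on $\omega_{2j-1}(k)-k^2$ is claimed). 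I do not anticipate a genuine obstacle: the only steps that need attention are the justification of the Feynman--Hellmann differentiation, which is covered by the analyticity already stated in the excerpt, and careful bookkeeping of the boundary terms at the corner $x=0$; the remainder is integration by parts together with the eigenvalue equation.
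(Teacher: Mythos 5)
Your proposal is correct and takes essentially the same route as the paper: Feynman--Hellmann, the observation that $2(k-b|x|)=\mp b^{-1}\partial_x V_{eff}$ on each half-line, and the eigenvalue equation to evaluate the resulting half-line integrals, with your energy-density identity being precisely the paper's integration by parts plus the ODE written as an exact derivative. The boundary bookkeeping at $x=0$ and $\pm\infty$ and the odd/even classification (with strictness of \eqref{eq:disp-odd1} via ODE uniqueness, which the paper notes right after the proposition) agree with the paper's argument.
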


\begin{proof}
The Feynman-Hellmann Theorem gives us
\beas
\omega_j ' (k) &=& \int_\R ~2 (k - b|x|) \psi_j(x,k)^2 ~dx \nonumber \\
 &=& \frac{-1}{b} \int_0^\infty \psi_j (x,k)^2 \frac{d}{dx} (k-bx)^2 ~dx \nonumber \\
  & & +  \frac{1}{b} \int_{-\infty}^0 \psi_j (x,k)^2 \frac{d}{dx} (k+bx)^2 ~dx .
\eeas
Integrating by parts, and using the ordinary differential equation \eqref{eq:ev-eqn1}, we obtain \eqref{eq:ev-deriv1}.
Note that $\lim_{x \rightarrow \pm \infty} x^2 \psi_j(x,k)^2 = 0$ since $\psi_j(k)$ is in the domain of $h(k)$ (see \cite[Lemma 3.5]{iwatsuka1}).
\end{proof}



\begin{figure}
  \centering
\begin{center}
\includegraphics[height=80mm,angle=0]{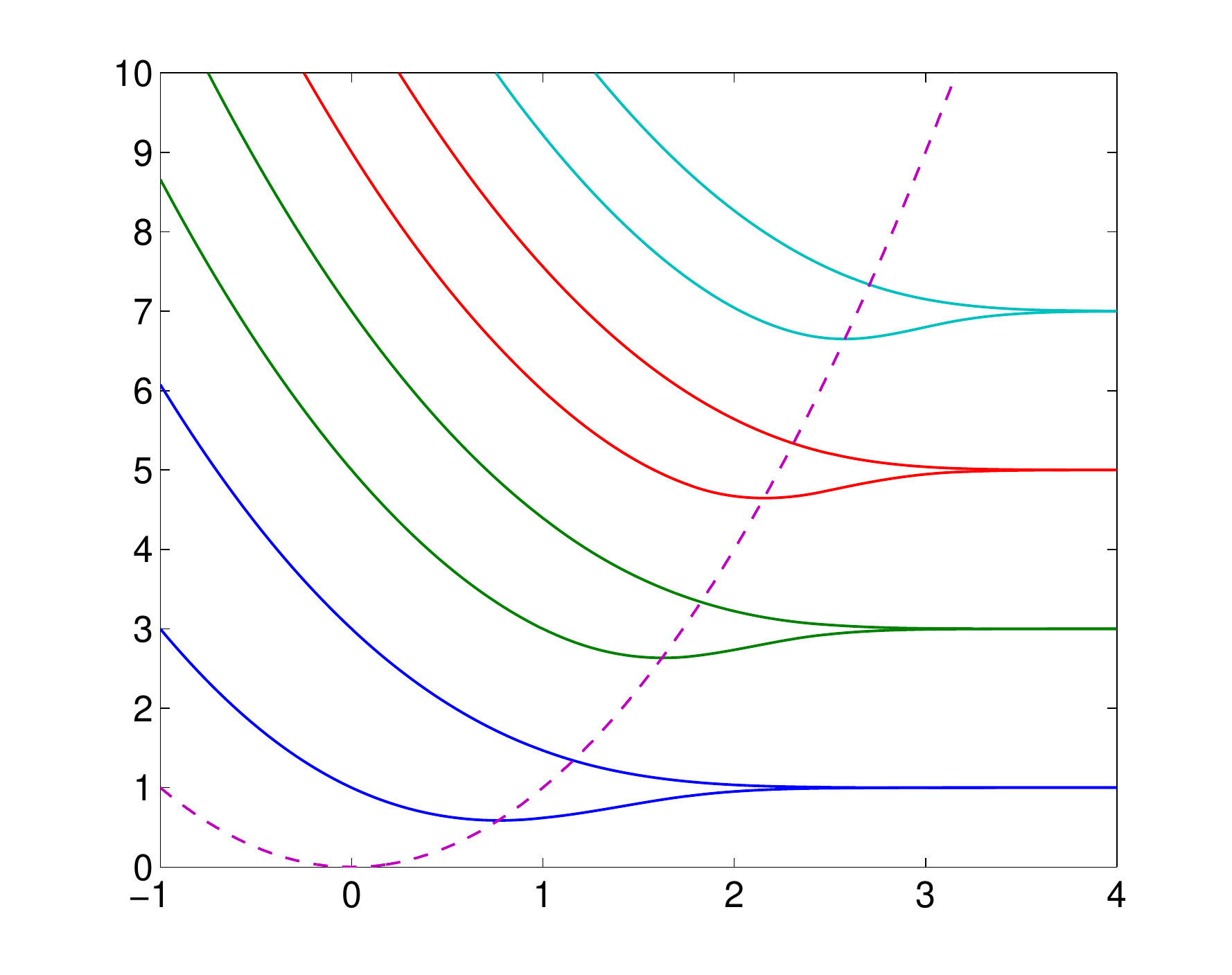}
\end{center}
  \caption{Approximate shape of the band functions $k \mapsto \omega_j(k)$, for $j=1, \ldots, 8$, of the Iwatsuka Hamiltonian with $-b < 0 < b$ and $b=1$. The dotted curve is $E = k^2$. Graph courtesy of N.\ Popoff.}
\end{figure}



Let us note that we cannot have both $\psi_j(0,k)=0$ and $\psi_j' (0,k)=0$.
As consequences, the band functions for odd states are strictly monotone decreasing $\omega_{2j}^\prime (k) < 0$.
For even states, there is a minimum at $k = \kappa_j$ satisfying
$$
\omega_{2j-1} (\kappa_j) = \kappa_j^2 .
$$
We will prove in Proposition \ref{prop:effective-mass1} that this
is the unique critical point of these band functions and that it is a non-degenerate minimum.
This shows that there is an effective mass at this point. This is essential for the discussion in section \ref{sec:ev-counting1}.


\subsection{Absolutely continuous spectrum for $H_0$}\label{subsec:ac-spectrum1}

The spectrum of $H_0$ is the union of the ranges of the band functions $\sigma (H_0) = \cup_{j \geq 1} \omega_j (\R) = [ \omega_1(\kappa_1), \infty )$.
The band functions are analytic and nonconstant by Propositions \ref{prop:disp-deriv1} and \ref{prop:effective-mass1}. Consequently, from \cite[Theorem XIII.86]{RS-IV},
the spectrum of $H_0$ is purely absolutely continuous.


\subsection{Band function asymptotics $k \rightarrow - \infty$.}

As $k \rightarrow - \infty$, we will prove that the fiber Hamiltonian $h(k)$ is well approximated by an Airy operator
\beq\label{eq:airy1}
h_{\rm Ai}(k) := p_x^2 + 2 b |k| |x| + k^2 ,
\eeq
in the sense that the band functions of $h(k)$ are close to the band functions of the Airy operator $h_{\rm Ai}(k)$.
In order to establish this, let $Ai(x)$ be the standard Airy function whose zeros are located on the negative real axis. The Airy function satisfies the Airy ordinary differential equation:
$$
Ai''(x) = x Ai(x).
$$
By scaling and translations, it follows that the Airy function $Ai(\gamma x + \sigma)$ satisfies
\beq\label{eq:airy3}
(p_x^2 + \gamma^3 x) Ai(\gamma x + \sigma) = - \gamma^2 \sigma Ai(\gamma x + \sigma), ~~\gamma, \sigma \in \R.
\eeq

The model Airy Hamiltonian $h_{\rm Ai}(k)$ in \eqref{eq:airy1}
has discrete spectrum $\tilde{\omega}_j(k)$ and eigenfunctions $\tilde{\Psi}_j^{\rm Ai}(x,k)$ satisfying
\beq\label{eq:airy4}
h_{\rm Ai}(k) {\tilde \Psi}_j^{\rm Ai}(x,k) = \tilde{\omega}_j(k) {\tilde \Psi}_j^{\rm Ai} (x,k) .
\eeq
It follows from \eqref{eq:airy3} that the eigenfunction ${\tilde \Psi}_j^{\rm Ai}(x,k)$
in \eqref{eq:airy4} is a multiple of the scaled and translated Airy function. The non-normalized
solution ${\tilde \Psi}_j^{\rm Ai} (x,k)$ for the eigenvalue $\tilde{\omega}_j(k)$ is
$$
{\tilde \Psi}_j^{\rm Ai}(x,k) = Ai \left( (2 b |k|)^{1/3} |x| + \frac{k^2 -
{\tilde \omega}_j(k)}{(2 b |k|)^{2/3}} \right),\ k < 0,\ x \in \R,
$$
with an eigenvalue given by
$$
\tilde{\omega}_j(k)  =  k^2 - (2 b |k|)^{2/3} \sigma .
$$


We determine $\sigma$ as follows. The operator $h_{\rm Ai}(k)$ commutes with the parity operator $I_P$ so its states are even or odd.
The \emph{odd eigenfunctions} $\Psi_j^{Ai,o}(x,k)$
of $h_{\rm Ai}(k)$ must satisfy $\Psi_j^{\rm Ai, o}(0,k) =0$. Consequently, the ${\rm L}^2 (\R)$-normalized
odd eigenfunctions $\Psi_j^{\rm Ai, o} (x,k) = \tilde{\Psi}_{2j}^{Ai}(x,k)$ are given by
\beq\label{eq:airy-kneg1}
\Psi_j^{\rm Ai, o} (x,k) = C_{Ai,j}(b,k) ( {\rm sign} ~x ) Ai ( (2 b |k|)^{1/3} |x| +  z_{Ai, j} ) , ~~\Psi_j^{\rm Ai,o} (0,k) = 0,
\eeq
where $z_{Ai,j}$ is the $j^{\rm th}$ zero of $Ai(x)$
and the corresponding eigenvalue is
$$
\tilde{\omega}_{2j}(k) = k^2 - (2 b |k|)^{2/3} z_{Ai,j} .
$$
The \emph{even eigenfunctions} $\Psi_j^{\rm Ai, e} (x,k) = \tilde{\Psi}_{2j-1}^{Ai}(x,k)$ of $h_{\rm Ai}(k)$ must have a vanishing derivative at $x=0$ and are given by
\beq\label{eq:airy-kneg2}
\Psi_j^{\rm Ai, e} (x,k) = C_{Ai',j}(b,k) Ai ( (2 b|k|)^{1/3} |x| +  z_{Ai', j} ) , ~~( \Psi_j^{\rm Ai, e} )^\prime  (0,k) = 0,
\eeq
and the corresponding eigenvalue is
$$
\tilde{\omega}_{2j-1}(k) = k^2 - (2 b |k|)^{2/3} z_{Ai',j} ,
$$
where $z_{Ai',j}$ is the $j^{\rm th}$ zero of $Ai^\prime (x)$.
The normalization constant $C_{X,j}(b,k)$, for $X = Ai ~{\rm or} ~Ai'$ is given by
\beq\label{eq:airy-normal1}
C_{X,j}(b,k) := \left( \frac{(2 b |k|)^{1/3}}{2 c_{X,j}} \right)^{1/2}, {\rm where} ~~c_{X,j} := \int_0^\infty Ai(v + z_{X,j})^2 ~dv.
\eeq

We now obtain estimates on the band functions $\omega_j(k)$ as $k \rightarrow - \infty$.

\begin{pr}\label{prop:dispersion-kminusinfty1}
For each $j \in \N^*$, as $k \rightarrow - \infty$, we have
\beq\label{eq:airy-est1}
\| ( h(k) - [k^2 - (2 b |k|)^{2/3} z_{X, j}]) \Psi_j^{\rm Ai,u} (\cdot,k) \| \leq \frac{b^{4/3}}{(2 |k|)^{2/3}} D_{X,j},
\eeq
where the constant $D_{X,j}$, given in \eqref{eq:airy-est5},
is independent of the parameters $(k,b)$, and $(X , u) = (Ai , e)$ or $(Ai',o)$, for even or odd states, respectively.
This immediately implies the eigenvalue estimate
$$
| \omega_j(k) - [k^2 - (2 b |k|)^{2/3} z_{X, j}] |  \leq \frac{b^{4/3}}{(2 |k|)^{2/3}} D_{X,j},
 ~~k \rightarrow - \infty.
$$
\end{pr}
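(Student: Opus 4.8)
The plan is to estimate directly how far the model Airy eigenfunction $\Psi_j^{\mathrm{Ai},u}(\cdot,k)$ is from being an eigenfunction of the true fiber operator $h(k)$, by comparing $h(k)$ with the Airy operator $h_{\mathrm{Ai}}(k)$ defined in \eqref{eq:airy1}. The key algebraic observation is that both operators have the same kinetic term $p_x^2$, and the difference of the potentials is
$$
V_{\mathrm{eff}}(x,k) - \bigl(2b|k|\,|x| + k^2\bigr) = (k - b|x|)^2 - 2b|k|\,|x| - k^2 = b^2 x^2 + 2bk|x| - 2b|k|\,|x|.
$$
For $k < 0$ this is exactly $b^2 x^2 + 2b|x|(k - |k|\operatorname{sign}(\text{?}))$; more carefully, since $k<0$ we have $|k| = -k$, so $2bk|x| = -2b|k|\,|x|$ and the cross terms combine to $-4b|k|\,|x|$... let me instead write it as $b^2x^2 + 2bk|x| + 2bk|x| = b^2 x^2 + 4bk|x|$ when $|k|=-k$; in any case the difference is $W(x,k) := b^2 x^2 + 2bk|x| - 2b|k|\,|x|$, a polynomial in $|x|$ of degree two with $k$-dependent coefficients, and this is where the decay in $|k|$ will come from after rescaling. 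Thus
$$
\bigl(h(k) - [k^2 - (2b|k|)^{2/3} z_{X,j}]\bigr)\Psi_j^{\mathrm{Ai},u}(\cdot,k) = \bigl(h(k) - h_{\mathrm{Ai}}(k)\bigr)\Psi_j^{\mathrm{Ai},u}(\cdot,k) = W(\cdot,k)\,\Psi_j^{\mathrm{Ai},u}(\cdot,k),
$$
using \eqref{eq:airy4} and the identification $\tilde\omega(k) = k^2 - (2b|k|)^{2/3}z_{X,j}$. So the whole proposition reduces to bounding $\| W(\cdot,k)\,\Psi_j^{\mathrm{Ai},u}(\cdot,k)\|$ in ${\rm L}^2(\R)$.

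The next step is the change of variables $v = (2b|k|)^{1/3}|x|$, which is the natural scaling built into the Airy eigenfunctions \eqref{eq:airy-kneg1}--\eqref{eq:airy-kneg2}. Under this substitution the normalized eigenfunction becomes $C_{X,j}(b,k)\cdot(\text{sign factor})\cdot Ai(v + z_{X,j})$ with $C_{X,j}(b,k)^2 = (2b|k|)^{1/3}/(2c_{X,j})$, and each factor of $|x|$ contributes $(2b|k|)^{-1/3}v$, while $dx = (2b|k|)^{-1/3}dv$ on each half-line. The polynomial $W$ becomes, after substitution, a combination of $b^2 (2b|k|)^{-2/3} v^2$ and $2bk(2b|k|)^{-1/3}v - 2b|k|(2b|k|)^{-1/3}v$; collecting the powers of $b$ and $|k|$, one finds that the leading (and only surviving, since the linear-in-$v$ terms in $W$ partially cancel for $k<0$) contribution scales like $b^{4/3}|k|^{-2/3}$, matching the claimed right-hand side $b^{4/3}/(2|k|)^{2/3}\,D_{X,j}$. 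The constant $D_{X,j}$ is then an explicit weighted ${\rm L}^2$-norm of the Airy function, namely something of the form
$$
D_{X,j} = \frac{1}{\sqrt{c_{X,j}}}\left(\int_0^\infty v^4 Ai(v + z_{X,j})^2\,dv\right)^{1/2}
$$
up to a numerical constant — finite because $Ai$ and all its polynomial moments decay super-exponentially on $[z_{X,j},\infty)$ — and by construction it is independent of $b$ and $k$. This is the content of the referenced equation \eqref{eq:airy-est5}.

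Finally, the eigenvalue estimate follows from the operator-theoretic fact that if $\|(h(k) - \mu)\phi\| \le \eta\|\phi\|$ for a unit vector $\phi$, then $\operatorname{dist}(\mu, \sigma(h(k))) \le \eta$; one still has to identify which eigenvalue $\omega_j(k)$ of $h(k)$ is the close one. For that I would use the parity decomposition $h(k) = h^+(k)\oplus h^-(k)$ established in Section~\ref{subsec:fibre1}: the even model function $\Psi_j^{\mathrm{Ai},e}$ lies in $\HH_+$ and the odd one in $\HH_-$, so the close eigenvalue is automatically in the correct parity sector, i.e.\ it is $\omega_{2j-1}(k)$ or $\omega_{2j}(k)$ respectively; combined with the known ordering of eigenvalues within each sector and a rough a priori control (the model eigenvalues $k^2 - (2b|k|)^{2/3}z_{X,j}$ are themselves ordered and spaced by $O(|k|^{2/3})$, which dwarfs the error $O(|k|^{-2/3})$ for $|k|$ large), one concludes the matching is with the $j$-th eigenvalue in that sector. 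I expect the only real subtlety to be this bookkeeping of the spectral index together with making sure the rescaling bookkeeping in $W$ is done correctly so that the linear terms cancel as claimed and genuinely no worse-than-$b^{4/3}|k|^{-2/3}$ term appears; the rest is a routine weighted-norm estimate for the Airy function.
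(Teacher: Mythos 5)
Your overall strategy is exactly the paper's: subtract the Airy operator \eqref{eq:airy1}, use that $\Psi_j^{\rm Ai,u}(\cdot,k)$ is an exact ${\rm L}^2$-normalized eigenfunction of $h_{\rm Ai}(k)$ with eigenvalue $k^2-(2b|k|)^{2/3}z_{X,j}$, and bound the residual $\|(h(k)-h_{\rm Ai}(k))\Psi_j^{\rm Ai,u}\|$ after the rescaling $v=(2b|k|)^{1/3}|x|$; the constant you guess for $D_{X,j}$ is precisely the paper's \eqref{eq:airy-est5}. However, the central algebraic step is wrong as written, and it is the one thing the whole estimate hinges on. Expanding correctly, $(k-b|x|)^2-(2b|k|\,|x|+k^2)=b^2x^2-2bk|x|-2b|k|\,|x|$, and for $k<0$ one has $-2bk|x|=+2b|k|\,|x|$, so the linear terms cancel \emph{exactly} and $h(k)-h_{\rm Ai}(k)=b^2x^2$. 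You instead write the cross term with the opposite sign, arrive at a surviving term of order $b|k|\,|x|$, and then only assert that the linear terms ``partially cancel'' and that the bookkeeping should work out. This cannot be left as a hope: a surviving linear term would rescale to $4b|k|(2b|k|)^{-1/3}v\sim (b|k|)^{2/3}v$, which \emph{grows} as $k\to-\infty$ and would destroy the claimed $b^{4/3}(2|k|)^{-2/3}$ bound, so the exactness of the cancellation is the proof, not a detail. Once $h(k)-h_{\rm Ai}(k)=b^2x^2$ is established, your rescaling computation gives exactly $\|b^2x^2\,\Psi_j^{\rm Ai,u}(\cdot,k)\|^2=\frac{b^{8/3}}{(2|k|)^{4/3}}D_{X,j}^2$ with $D_{X,j}^2=c_{X,j}^{-1}\int_0^\infty v^4 Ai(v+z_{X,j})^2\,dv$, which is the paper's computation verbatim.

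On the second assertion, your remark that the residual bound only gives $\dist\bigl(k^2-(2b|k|)^{2/3}z_{X,j},\,\sigma(h(k))\bigr)\leq b^{4/3}(2|k|)^{-2/3}D_{X,j}$, and that identifying the nearby eigenvalue as the $j$-th one in the correct parity sector uses the decomposition $h(k)=h^+(k)\oplus h^-(k)$ together with the $O(|k|^{2/3})$ spacing of the model eigenvalues, is correct and is in fact more explicit than the paper, which simply states that the eigenvalue estimate follows immediately. (Note also that in the paper's conventions the odd states go with the zeros of $Ai$ and the even states with the zeros of $Ai'$, cf.\ \eqref{eq:airy-kneg1}--\eqref{eq:airy-kneg2}.)
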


\begin{proof}
In order to prove \eqref{eq:airy-est1}, we note that
$$
h(k) - h_{\rm Ai}(k) = b^2 x^2,
$$
so that with the definition of $\Psi_j^{\rm Ai,u}(x,k)$ in \eqref{eq:airy-kneg1} for $u=o$ and \eqref{eq:airy-kneg2} for $u=e$, and the normalization constant $C_{X,j}$ in \eqref{eq:airy-normal1},
we have
\beas
\| [ h(k) - h_{\rm Ai}(k) ] \Psi_j^{\rm Ai,u}(\cdot,k)\|^2 & =&  \frac{ 2  b^4}{(2 |k| b)^{5/3}} C_{X,j}^2
 \int_0^\infty v^4 Ai(v+z_{X,j})^2 ~dv \nonumber \\
 &=& \frac{b^{8/3}}{(2 |k|)^{4/3}} D_{X,j}^2,
\eeas
where the constant $D_{X,j}$, given by
\beq\label{eq:airy-est5}
D_{X,j} := \left( \frac{\int_0^\infty v^4 Ai(v+z_{X,j})^2 ~dv}{c_{X,j}} \right)^{1 \slash 2},
\eeq
is finite since $Ai(v) \sim e^{-v^{3/2}}$ as $v \rightarrow + \infty$.
\end{proof}


\subsection{Band functions asymptotics $k \rightarrow + \infty$}\label{subsec:positive-k1}

For $k \geq 0$, the effective potential consists of two double wells that separate as $k \rightarrow +\infty$. Consequently $\omega_j^+ (k)$ approaches $\omega_j^- (k)$ as $k \rightarrow + \infty$. The eigenvalues of the double well potential consists of pairs of eigenvalues whose differences are exponentially small as $k \rightarrow \infty$. Thus, the effective Hamiltonian for $k = + \infty$ is the harmonic oscillator Hamiltonian:
$$
h_{\rm HO}(k):=-\frac{d^2}{dx^2} + (bx-k)^2 .
$$
We let $\ef_0(b):=0$ and $\ef_j(b):=(2j-1)b$, for every $j \in \N^*$, denote the energy levels of the harmonic oscillator.
Let $\Psi_j^{\rm HO} (k)$ denote the $j^{\rm \tiny th}$ normalized eigenfunction of the harmonic oscillator
so that $h_{\rm HO}(k)\Psi_j^{\rm HO}(k)=\ef_j(b) \Psi_j^{\rm HO}(k)$. It can be explicitly expressed as
\bel{a2}
\Psi_j^{\rm HO} (x,k):=\frac{1}{(2^j j!)^{1 \slash 2}} \left( \frac{b}{\pi} \right)^{1 \slash 4} {\rm e}^{-b \slash 2(x-k \slash b)^2} H_{j}(b^{1 \slash 2}(x-k \slash b)),
\ee
where $H_j$ is the $j^{\rm \tiny th}$ Hermite polynomial.

\begin{pr}\label{prop:kplusinfty1}
For each $j \in \N$, there exists a constant  $0 < C_j < \infty$, depending only on $j$, so that for $k \geq 0$, we have,
\beq\label{eq:ho4.1}
\| (h(k)-\ef_j(b)) \Psi_j^{\rm HO}(\pm x;k) \| \leq C_j b {\rm e}^{-k^2 \slash (4b)} .
\eeq
This immediately implies the eigenvalue estimate
\bel{a0}
0 < \mp (\omega_j^{\pm}(k) - \ef_j(b) ) \leq C_j b {\rm e}^{-\frac{k^2}{4b}},\ k \geq \kappa_j,
\ee
and the difference of the two eigenvalues is bounded as
\beq
0 \leq \omega_j^- (k) - \omega_j^+ (k) \leq 2 C_j b {\rm e}^{-\frac{k^2}{4b}},\ k \geq \kappa_j.
\eeq
\end{pr}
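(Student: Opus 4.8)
\medskip\noindent\textbf{Proof strategy.}
The plan is to use the $j$th eigenfunction $\Psi_j^{\rm HO}(x;k)$ of the displaced oscillator $h_{\rm HO}(k)$, together with its reflection $\Psi_j^{\rm HO}(-x;k)$, as quasimodes for $h(k)$, exploiting the fact that $h(k)$ and $h_{\rm HO}(k)$ \emph{literally coincide} on the half-line where such a quasimode is concentrated. Indeed, for $x>0$ one has $(k-b|x|)^2=(k-bx)^2=(bx-k)^2$, so $h(k)$ and $h_{\rm HO}(k)$ agree on $\{x>0\}$, while on $\{x<0\}$ their difference is the multiplication operator $(k+bx)^2-(bx-k)^2=4kbx$. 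Since $h_{\rm HO}(k)\Psi_j^{\rm HO}(\cdot;k)=\ef_j(b)\Psi_j^{\rm HO}(\cdot;k)$, it follows that $(h(k)-\ef_j(b))\Psi_j^{\rm HO}(x;k)$ equals $4kbx\,\Psi_j^{\rm HO}(x;k)$ for $x<0$ and vanishes for $x\ge0$; thus the whole error is carried by the part of the Gaussian $\Psi_j^{\rm HO}(\cdot;k)$ (centred at $x=k/b>0$) lying in $\{x<0\}$, which is exponentially small. For the reflected quasimode, the identities $I_P h(k)I_P=h(k)$, $\Psi_j^{\rm HO}(-x;k)=(I_P\Psi_j^{\rm HO}(\cdot;k))(x)$ and $\|I_Pf\|=\|f\|$ give the same error bound with the two half-lines interchanged, so it suffices to treat $\Psi_j^{\rm HO}(\cdot;k)$.

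To establish \eqref{eq:ho4.1} I would then simply compute
\[
  \|(h(k)-\ef_j(b))\,\Psi_j^{\rm HO}(\cdot;k)\|^2=16\,k^2b^2\int_{-\infty}^{0}x^2\,\Psi_j^{\rm HO}(x;k)^2\,dx ,
\]
substitute $u=b^{1/2}(x-k/b)$, under which the Gaussian in \eqref{a2} turns into $e^{-u^2}$ and the Hermite factor into $H_j(u)$, and set $s:=kb^{-1/2}\ge0$, so that $k^2=bs^2$, $x^2=b^{-1}(u+s)^2$ and $x=0$ becomes $u=-s$. After the further change $v=-(u+s)$ the right-hand side equals a constant (depending only on $j$) times $b^2s^2\int_0^\infty v^2e^{-(v+s)^2}H_j(v+s)^2\,dv$. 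Bounding $e^{-(v+s)^2}\le e^{-v^2}e^{-s^2}$ and integrating the polynomial $v^2H_j(v+s)^2$ against $e^{-v^2}$ estimates this integral by $e^{-s^2}P_j(s)$ with $P_j$ a polynomial of degree $2j$; since $s^2P_j(s)e^{-s^2/2}$ is bounded on $[0,\infty)$, the whole quantity is $\le C_j\,b^2e^{-s^2/2}=C_j\,b^2e^{-k^2/(2b)}$, which is \eqref{eq:ho4.1}. I expect the only point here requiring care to be precisely this last absorption: the prefactor $x^2$ and the Hermite polynomial produce powers of $s=kb^{-1/2}$ that must be soaked up by half of $e^{-s^2}$, which costs a $j$-dependent constant but leaves the exponential rate at $k^2/(4b)$.

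To pass to the eigenvalue estimate \eqref{a0} I would symmetrize the quasimode. Set $\Phi_j^{\pm}(x):=\tfrac1{\sqrt2}\bigl(\Psi_j^{\rm HO}(x;k)\pm\Psi_j^{\rm HO}(-x;k)\bigr)\in\HH_{\pm}$, so that $\Psi_j^{\rm HO}(\cdot;k)=\tfrac1{\sqrt2}(\Phi_j^++\Phi_j^-)$ with $\Phi_j^+\perp\Phi_j^-$ and $\|\Phi_j^{\pm}\|\to1$ as $k\to+\infty$, the overlap $\langle\Psi_j^{\rm HO}(\cdot;k),\Psi_j^{\rm HO}(-\,\cdot\,;k)\rangle$ being exponentially small. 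Since $h(k)=h^+(k)\oplus h^-(k)$ respects this decomposition, \eqref{eq:ho4.1} splits into $\|(h^{\pm}(k)-\ef_j(b))\Phi_j^{\pm}\|\le\sqrt2\,C_j b\,e^{-k^2/(4b)}$, whence by the elementary bound $\dist(\ef_j(b),\spec(h^{\pm}(k)))\le\|(h^{\pm}(k)-\ef_j(b))\Phi_j^{\pm}\|/\|\Phi_j^{\pm}\|$ each $h^{\pm}(k)$ has, for $k$ large, a spectral point within $C_j'b\,e^{-k^2/(4b)}$ of $\ef_j(b)$. That point has to be $\omega_j^{\pm}(k)$, because the neighbouring bands satisfy $\omega_i^{\pm}(k)\to\ef_i(b)$ as $k\to+\infty$ while the nearest levels $\ef_{j\pm1}(b)$ lie at distance $2b$ from $\ef_j(b)$; hence $|\omega_j^{\pm}(k)-\ef_j(b)|\le C_j'b\,e^{-k^2/(4b)}$. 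The strict signs are read off from the monotonicity established earlier: $\omega_j^-=\omega_{2j}$ is strictly decreasing with limit $\ef_j(b)$, so $\omega_j^-(k)>\ef_j(b)$, while by Proposition \ref{prop:effective-mass1} the even band $\omega_j^+=\omega_{2j-1}$ has a non-degenerate minimum at $\kappa_j$ as its only critical point and is therefore strictly increasing on $(\kappa_j,\infty)$ with the same limit, so $\omega_j^+(k)<\ef_j(b)$ for $k\ge\kappa_j$; together these give $0<\mp(\omega_j^{\pm}(k)-\ef_j(b))$, and the difference bound follows from $0\le\omega_j^-(k)-\omega_j^+(k)=(\omega_j^-(k)-\ef_j(b))+(\ef_j(b)-\omega_j^+(k))$. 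To cover the whole range $k\ge\kappa_j$ with a constant depending only on $j$, I would finally invoke the scaling $x\mapsto b^{1/2}x$, under which $h(k)=b\,\widetilde h(kb^{-1/2})$ for the $b=1$ fiber operator $\widetilde h$ and $\kappa_j(b)=b^{1/2}\kappa_j(1)$; this reduces the claim to $|\widetilde\omega_j^{\pm}(\kappa)-\ef_j(1)|\le C_j e^{-\kappa^2/4}$ for $\kappa\ge\kappa_j(1)$, and the remaining bounded range of $\kappa$ is absorbed into $C_j$ by continuity of the band functions, $e^{-\kappa^2/4}$ being bounded below there. The only genuinely non-routine ingredient, I believe, is the identification of the nearby spectral point with $\omega_j^{\pm}(k)$, which relies on the large-$k$ separation of the low eigenvalues of $h^{\pm}(k)$; everything else is the elementary quasimode computation above.
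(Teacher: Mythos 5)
Your proposal is correct and takes essentially the same route as the paper: you use the shifted harmonic-oscillator eigenfunction (and its reflection) as a quasimode for $h(k)$, note that the error term lives on the opposite half-line and is a Gaussian tail of size $O(b\,\mathrm{e}^{-k^2/(4b)})$, symmetrize into $\HH_{\pm}$ to get $\dist(\sigma(h^{\pm}(k)),\ef_j(b))\le C_j b\,\mathrm{e}^{-k^2/(4b)}$, and then identify the nearby spectral point with $\omega_j^{\pm}(k)$ and read off the strict signs from the monotonicity facts of Propositions \ref{prop:effective-mass1} and \ref{prop:even-monotone1}. The identification step that you single out as the only non-routine ingredient is handled in the paper's Step 3 with exactly the same tools (the minimax comparison giving $\omega_j^-(k)>\ef_j(b)$ and the monotone increase of $\omega_j^+$ on $(\kappa_j,\infty)$) and with comparable brevity, so your sketch matches the paper's argument in both substance and level of detail.
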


\begin{proof}

\noindent
1. Since $\Psi_j^{\rm HO}(k)$ is the eigenfunction of the harmonic oscillator Hamiltonian, we have for all $x \in \re$,
$$
(h(k)-\ef_j(b)) \Psi_j^{\rm HO}(\pm x,k) = ( (b x \pm k)^2 - (b x \mp k)^2 ) \chi_{\re_{\mp}}(x) \Psi_j^{\rm HO}(\pm x,k),
$$
so that for any $k \geq 0$, we have
\beq\label{eq:ho2}
 \| (h(k)-\ef_j(b)) \Psi_j^{\rm HO}(\pm x,k) \| \leq \| (bx \mp k)^2 \Psi_j^{\rm HO}(\pm x,k) \|_{{\rm L}^2(\re_{\mp})}.
\eeq
Here $\chi_I$ stands for the characteristic function of $I \subset \R$.
From \eqref{eq:ho2},
the identity
$$
\| (bx \mp k)^2 \Psi_j^{\rm HO}(\pm x,k) \|_{{\rm L}^2(\re_{\mp})}= \| (bx -k)^2 \Psi_j^{\rm HO}(x,k) \|_{{\rm L}^2(\re_-)},
$$
and \eqref{a2}, it follows that
\beq\label{eq:ho4}
\| (h(k)-\ef_j(b)) \Psi_j^{\rm HO}(\pm x,k) \| \leq c_j b {\rm e}^{-k^2 \slash (4b)},\ k \geq 0,
\eeq
for some constant $c_j>0$ depending only on $j$.

\noindent
2. Let $(\Psi_j^{\rm HO})^{\pm} (x,k):=(\Psi_j^{\rm HO}(x,k)\pm\Psi_j^{\rm HO}(-x,k)) \slash 2 \in \HH_{\pm}$.
In light of \eqref{eq:ho4} we have
\bel{a3}
\| (h(k)-\ef_j(b)) (\Psi_j^{\rm HO})^{\pm}(k) \|_{\HH} \leq c_j b {\rm e}^{-k^2 \slash (4b)},\ k \geq 0 .
\ee
Further since
$$
\| (\Psi_j^{\rm HO})^{\pm}(k) \|^2=\left( 1 \pm \int_{\re} \Psi_j^{\rm HO}(x,k) \Psi_j^{\rm HO}(-x,k) dx \right) \slash 2 ,
$$
with
$$
\left| \int_{\re} \Psi_j^{\rm HO}(x,k) \Psi_j^{\rm HO}(-x,k) dx \right| \leq \tilde{c}_j {\rm e}^{- k^2 \slash (4b)},
$$
for some constant $\tilde{c}_j>0$ depending only on $j$, we deduce from \eqref{a3} that
\bel{a3b}
{\rm dist}(\sigma(h^{\pm}(k)),\ef_j(b)) \leq C_j b {\rm e}^{-k^2 \slash (4b)},\ k \geq 0,
\ee
where $C_j>0$ depends only on $j$.

\noindent
3. As $\omega_j^-(k) > \ef_j(b)$ for each $k \in \re$, from the minimax principle,
the result \eqref{a0} for $\omega_j^-(k)$ follows readily from \eqref{a3b}. The case of $\omega^+$ is more complicated. In the section \ref{subsec:effective1}, we prove in the derivation of Proposition \ref{prop:effective-mass1} that the band function $\omega_j^+(k)$ has a unique absolute minimum at a value $\kappa_j \in (0, \ef_{2j-1}(b)^{1/2})$. Furthermore, $\omega_j^+(\kappa_j) \in ( \ef_{j-1}(b), \ef_j(b))$. We also prove that $(\omega_j^+)'(k) < 0$ for $k < \kappa_j$ and $(\omega_j^+)'(k) > 0$ for $k > \kappa_j$. The facts that the analytic band function is monotone increasing for $k > \kappa_j$ and converges to $\ef_j(b)$ as $k \rightarrow \infty$ due to \eqref{a3b} imply the result \eqref{a0} for $\omega_j^+(k)$.
\end{proof}


\subsection{Even band functions $\omega_j^+(k)$: the effective mass}\label{subsec:effective1}

We prove that the even states in $\mathcal{H}_+$, with band functions $\omega_j^+ (k) = \omega_{2j-1}(k)$,
have a unique positive minimum at $\kappa_j$. We prove that the even band function $\omega_j^+(k)$ is concave at $\kappa_j$.
This convexity means that there is a positive effective mass. This positive effective mass plays an important role in the perturbation theory and creation of the discrete spectrum discussed in section \ref{sec:ev-counting1}.

\begin{pr}\label{prop:effective-mass1}
The band functions $\omega_j^+(k) = \omega_{2j-1}(k)$, corresponding to the even states of $h(k)$, each have a unique extremum $\EE_j \in (\ef_{j-1}(b),\ef_j(b))$ that is a strict minimum. The minimum is attained at a single point $\kappa_j \in (0,\ef_{2j-1}(b)^{1 \slash 2})$. This point is the unique real solution of $\omega_{2j-1} (k) - k^2 = 0$, and $\EE_j = \kappa_j^2$. The concavity of the band function
at $\kappa_j$ is strictly positive and given by:
\beq\label{eq:eff-mass1}
(\omega_j^+)''(\kappa_j) = \omega_{2j-1}''(\kappa_j) = \frac{4 \kappa_j}{b} \psi_{2j-1} (0, \kappa_j)^2 > 0.
\eeq
We also have
$\pm (\omega_j^+)'(k)<0$ for $\pm (k -\kappa_j) <0$. 
\end{pr}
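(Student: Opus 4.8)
The plan is to reduce everything to the sign analysis of the single function $g_j(k):=\omega_{2j-1}(k)-k^2$. Since $\psi_{2j-1}(\cdot,k)$ is even we have $\psi_{2j-1}'(0,k)=0$, and because an eigenfunction of $h(k)$ cannot vanish together with its derivative at $x=0$, it follows that $\psi_{2j-1}(0,k)\neq0$ for all $k$. Hence, by \eqref{eq:disp-even1}, $(\omega_j^+)'(k)=-\tfrac{2}{b}g_j(k)\,\psi_{2j-1}(0,k)^2$, so the critical points of $\omega_j^+=\omega_{2j-1}$ are exactly the zeros of $g_j$, and $(\omega_j^+)'(k)$ has the same sign as $-g_j(k)$. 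Once we know that $g_j$ has exactly one zero $\kappa_j$, that $\kappa_j>0$, and that $g_j$ passes from $+$ to $-$ there, all the qualitative claims of the Proposition follow.

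First I would record the behavior at the two ends and at the origin. As $k\to-\infty$, Proposition \ref{prop:dispersion-kminusinfty1} gives $\omega_{2j-1}(k)=k^2-(2b|k|)^{2/3}z_{Ai',j}+O(|k|^{-2/3})$; since all zeros of $Ai'$ are negative, $-z_{Ai',j}>0$ and therefore $g_j(k)\to+\infty$, in particular $g_j(k)>0$ for $k$ sufficiently negative. For $k\ge0$ the pointwise inequality $|k-b|x||\le|bx-k|$ gives $h(k)\le h_{\mathrm{HO}}(k)$ as quadratic forms on the common (harmonic-oscillator) form domain, whence by the min--max principle $\omega_{2j-1}(k)\le(4j-3)b=\ef_{2j-1}(b)$ for all $k\ge0$; thus $g_j(k)\le\ef_{2j-1}(b)-k^2\to-\infty$, so $g_j(k)<0$ for $k$ large, while $g_j(0)=\omega_{2j-1}(0)>0$ since $h(0)=p_x^2+b^2x^2$ is a strictly positive operator. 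By the intermediate value theorem $g_j$ has at least one zero in $(0,\infty)$.

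The heart of the matter is uniqueness, and the key point is that at every zero $k_0$ of $g_j$, \eqref{eq:disp-even1} forces $\omega_{2j-1}'(k_0)=0$, so that
\[
 g_j'(k_0)=\omega_{2j-1}'(k_0)-2k_0=-2k_0 .
\]
Consequently $g_j$ is strictly increasing at each of its negative zeros, has no zero at $0$, and is strictly decreasing at each of its positive zeros. A continuous function that is positive near $-\infty$ and at $0$ and increasing at each negative zero has no negative zero: take the smallest negative zero $k_*$ (finitely many, by analyticity of $\omega_{2j-1}$), so $g_j>0$ on $(-\infty,k_*)$ and hence $g_j'(k_*)\le0$, contradicting $g_j'(k_*)=-2k_*>0$. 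Likewise a function decreasing at each of its positive zeros has at most one positive zero, since two of them would force an intermediate zero at which $g_j$ is nondecreasing, contradicting $g_j'=-2k_0<0$ there. Hence $g_j$ vanishes at a unique point $\kappa_j>0$, the unique real root of $\omega_{2j-1}(k)=k^2$, and this is the only critical point of $\omega_j^+$. Combined with the tail signs, $(\omega_j^+)'<0$ on $(-\infty,\kappa_j)$ and $(\omega_j^+)'>0$ on $(\kappa_j,\infty)$ --- i.e. $\pm(\omega_j^+)'(k)<0$ for $\pm(k-\kappa_j)<0$ --- so $\kappa_j$ is the unique, strict, global minimum and $\EE_j:=\omega_j^+(\kappa_j)=\omega_{2j-1}(\kappa_j)=\kappa_j^2$.

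It remains to compute the curvature and to locate $\EE_j$. Because $h(k)=h(0)+\big(k^2-2bk|x|\big)$ with the perturbation $h(0)$-bounded of relative bound $0$, the eigenvalue $\omega_{2j-1}(\cdot)$ and the simple eigenprojection, hence $k\mapsto\psi_{2j-1}(\cdot,k)\in\mathcal D(h(0))\hookrightarrow C(\R)$ and $k\mapsto\psi_{2j-1}(0,k)$, are real-analytic; differentiating \eqref{eq:disp-even1} at $k=\kappa_j$ and using $\omega_{2j-1}(\kappa_j)-\kappa_j^2=0$ together with $\omega_{2j-1}'(\kappa_j)=0$ annihilates one term and yields
\[
 (\omega_j^+)''(\kappa_j)=-\tfrac{2}{b}\big(\omega_{2j-1}'(\kappa_j)-2\kappa_j\big)\psi_{2j-1}(0,\kappa_j)^2=\tfrac{4\kappa_j}{b}\,\psi_{2j-1}(0,\kappa_j)^2>0,
\]
since $\kappa_j>0$ and $\psi_{2j-1}(0,\kappa_j)\neq0$. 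For the localization: $\EE_j=\kappa_j^2<\omega_j^+(0)=\omega_{2j-1}(0)=\ef_{2j-1}(b)$ because $\kappa_j\neq0$ and $\kappa_j$ is a strict minimum, which gives $\kappa_j\in(0,\ef_{2j-1}(b)^{1/2})$; next, $\omega_j^+$ is increasing on $(\kappa_j,\infty)$ with $\omega_j^+(k)\to\ef_j(b)$ as $k\to+\infty$ (Proposition \ref{prop:kplusinfty1}, whose derivation of this limit for $\omega^+$ uses only the monotonicity just established together with \eqref{a3b}, so there is no circularity), whence $\EE_j<\ef_j(b)$; and for $j\ge2$ the simplicity of the spectrum gives $\omega_j^+=\omega_{2j-1}>\omega_{2j-2}=\omega_{j-1}^-$, with $\omega_{j-1}^-$ strictly decreasing by \eqref{eq:disp-odd1} and tending to $\ef_{j-1}(b)$, so $\EE_j=\omega_j^+(\kappa_j)>\omega_{j-1}^-(\kappa_j)>\ef_{j-1}(b)$ (and $\EE_1=\kappa_1^2>0=\ef_0(b)$). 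The main obstacle is the uniqueness step: the whole argument hinges on the identity $g_j'(k_0)=-2k_0$ at zeros of $g_j$; without it, ruling out additional critical points of $\omega_j^+$ would require much finer and far less structural control of the band function.
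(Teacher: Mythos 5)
Your proof is correct, and its skeleton coincides with the paper's: both reduce everything to the sign of $\omega_{2j-1}(k)-k^2$ through \eqref{eq:disp-even1} together with $\psi_{2j-1}(0,k)\neq 0$, both get existence of a positive zero from the comparison $h(k)\le h_{\rm HO}(k)$ for $k\ge 0$ (so $\omega_{2j-1}(k)\le\ef_{2j-1}(b)$ there) plus the intermediate value theorem, both obtain \eqref{eq:eff-mass1} by differentiating \eqref{eq:disp-even1} at $\kappa_j$, and the localization $\EE_j\in(\ef_{j-1}(b),\ef_j(b))$ is argued exactly as in the paper, including your (correct) remark that the use of Proposition \ref{prop:kplusinfty1} for $\omega_j^+$ is non-circular. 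The genuinely different step is uniqueness/monotonicity. The paper uses the Feynman--Hellmann formula \eqref{a5} to get the global inequality $(\omega_j^+)'(k)<2k$ for every $k$, i.e.\ $f_j^+(k):=k^2-\omega_j^+(k)$ is strictly increasing; integrating from a zero immediately shows there is no other zero and gives the sign of $(\omega_j^+)'$ on both sides of $\kappa_j$, with no input about $k\to-\infty$. You instead use only the local identity $g_j'(k_0)=-2k_0$ at zeros of $g_j=\omega_{2j-1}-k^2$ (a consequence of \eqref{eq:disp-even1} alone) together with a crossing argument, and you then import the Airy asymptotics of Proposition \ref{prop:dispersion-kminusinfty1} to exclude negative zeros. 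That is valid, with two small remarks: (i) your parenthetical ``finitely many, by analyticity'' is imprecise --- analyticity gives only discreteness, and it is the positivity of $g_j$ near $-\infty$ that bounds the negative zero set and guarantees a smallest zero exists; (ii) the detour through the Airy asymptotics can be avoided entirely, either by the paper's global inequality or by the elementary observation that for $k<0$ one has $(k-b|x|)^2=(|k|+b|x|)^2\ge k^2$, hence $\omega_j(k)>k^2$ for all $k<0$ and negative zeros cannot occur. The paper's route is thus slightly more economical; yours trades the global Feynman--Hellmann bound for asymptotic information established earlier in the paper, and both are sound.
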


\begin{proof}
\noindent
1. We first prove that there exists a unique minimum for the band function. The Feynman-Hellmann formula yields
\bel{a5}
(\omega_j^+)'(k)=-2 \int_{\re} (b|x|-k) \psi_j^+(x,k)^2 dx,\ k \in \R.
\ee
Next, recalling \eqref{eq:disp-even1}, we get that
\bel{a6}
(\omega_j^+)'(k) = \frac{2}{b} f_j^+(k) \psi_j^+(0,k)^2,\ f_j^+(k):=k^2-\omega_j^+(k),
\ee
since $\psi_j^+(0,k) \neq 0$ and $(\psi_j^+)'(0,k) = 0$.
Moreover, taking into account that $h(0)=h_{\rm HO}(0)$ we see that
\bel{a6b}
\omega_j^+(k) \leq \omega_j^+(0) = \ef_{2j-1}(b),\ k \in \re^+,
\ee
as $h(k) \leq h_{\rm HO}(k)$ in this case. Therefore we have $f_j^+(0)=-\ef_{2j-1}(b)<0$ and
$f_j^+(k)>0$ for all $k > \ef_{2j-1}(b)^{1 \slash 2}$ from \eqref{a6b}.
The function $f_j^+$ is continuous in $\re$ hence there exists $\kappa_j \in (0, \ef_{2j-1}(b)^{1 \slash 2})$ such that $f_j^+(\kappa_j)=0$. Moreover, $f_j^+$ being real analytic, the set $\{ t \in \re,\ f_j^+(t)=0 \}$ is at most discrete so we may assume without loss of generality that $\kappa_j$ is its smallest element.

\noindent
2. We next prove that $\omega_j^+(k)$ is decreasing for $k < \kappa_j$ and increasing for $k > \kappa_j$.
It follows from \eqref{a5} that $(\omega_j^+)'(k)< 2k$. Integrating this inequality over the interval $[\kappa_j, k]$, we obtain
$$
\omega_j^+(k) < \omega_j^+(\kappa_j) + \int_{\kappa_j}^k 2 t dt = \omega_j^+(\kappa_j) + (k^2 - \kappa_j^2), ~~ k >\kappa_j,
$$
and hence $f_j^+(k) > f_j^+(\kappa_j)$ for all $k > \kappa_j$.
This result with the fact that $f_j^+ (\kappa_j) =0$ and \eqref{a6} imply that $(\omega_j^+)'(k) > 0$ for $k > \kappa_j$.

\noindent
3. To study the concavity of the band function and establish \eqref{eq:eff-mass1}, we differentiate \eqref{a6} with respect to $k$ and obtain
\beq\label{eq:eff-mass2}
(\omega_j^+)''(k)= -\frac{2}{b} \left( [(\omega_j^+)'(k)-2k ] \psi_j^+(0,k)^2 - 2 f_j^+(k) \psi_j^+(0,k) \partial_k \psi_j^+(0,k) \right),
\ \ k \in \re.
\eeq
We evaluate \eqref{eq:eff-mass2} at $\kappa_j$, recalling that $f_j^+(\kappa_j) = 0$ and that $(\omega_j^+)' (\kappa_j) = 0$,
in order to obtain \eqref{eq:eff-mass1}.

%
%

\noindent
4. We turn now to proving that $\EE_j(b) \in (\ef_{j-1}(b),\ef_j(b))$. Since $(\omega_j^+)'(k)>0$ for all $k \geq \kappa_j$ from Step 2 it follows readily from
\eqref{a0} that $\omega_j^+(\kappa_j) < \ef_j(b)$. Further it is clear that $\omega_1^+(\kappa_1)>0$ and we have in addition
$$\omega_j^+(k) =\omega_{2j-1}(k) > \omega_{2(j-1)}(k) = \omega_{j-1}^-(k) > \ef_{j-1}(b),\ k \in \R,\ j \geq 2, $$
so the result follows.
\end{proof}

%

In light of Proposition \ref{prop:effective-mass1}, we say that there exists an effective mass at $k=\kappa_1$, borrowing this term from the solid state physics. Further, it follows readily from Propositions \ref{prop:disp-deriv1} and \ref{prop:effective-mass1} that the spectrum of $H_0$ is absolutely continuous and equal to a half-line:
$$
\sigma(H_0) = \sigma_{\mathrm{ac}}(H_0) = [ \EE_1,+\infty).
$$


\subsection{Odd band functions $\omega_j^-(k)$: strict monotonicity}\label{subsec:even-monotone1}

The behavior of the odd band functions is much simpler.

\begin{pr}\label{prop:even-monotone1}
The odd band functions $\omega_j^-(k) = \omega_{2j}(k)$ are strictly monotone decreasing functions of $k \in \R$:
$$
(\omega_j^-)'(k)<0, ~~k \in \re .
$$
\end{pr}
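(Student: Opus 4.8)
The plan is to start from the derivative formula already established in Proposition~\ref{prop:disp-deriv1}. By \eqref{eq:disp-odd1}, the odd band function satisfies
$$
(\omega_j^-)'(k) = \omega_{2j}'(k) = -\frac{2}{b}\,\psi_{2j}'(0,k)^2 \leq 0, \qquad k \in \re,
$$
since $\psi_{2j}$ is odd and hence $\psi_{2j}(0,k)=0$. Thus the only thing that needs to be checked is that the inequality is \emph{strict}, i.e.\ that $\psi_{2j}'(0,k)\neq 0$ for every $k\in\re$.

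To see this, I would argue by the uniqueness theorem for the second-order linear ODE \eqref{eq:ev-eqn1}. Suppose, for contradiction, that $\psi_{2j}'(0,k_0)=0$ for some $k_0$. Since $\psi_{2j}$ is odd we also have $\psi_{2j}(0,k_0)=0$. Then $\psi_{2j}(\cdot,k_0)$ solves the homogeneous linear equation $h(k_0)\psi = \omega_{2j}(k_0)\psi$ with vanishing Cauchy data at $x=0$, so by uniqueness $\psi_{2j}(\cdot,k_0)\equiv 0$, contradicting $\|\psi_{2j}(\cdot,k_0)\|=1$. This is precisely the observation recorded after Proposition~\ref{prop:disp-deriv1} that one cannot have both $\psi_j(0,k)=0$ and $\psi_j'(0,k)=0$. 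Hence $\psi_{2j}'(0,k)^2>0$ for all $k$, and therefore $(\omega_j^-)'(k)<0$ for all $k\in\re$, as claimed.

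There is essentially no obstacle here: the statement is an immediate corollary of Proposition~\ref{prop:disp-deriv1} together with the ODE uniqueness remark, and the proof is a couple of lines. The only point worth being careful about is invoking the correct labelling convention, namely that the odd eigenfunctions are exactly the $\psi_{2j}$ (equivalently, the eigenfunctions spanning $\HH_-$), so that $\psi_{2j}(0,k)=0$ genuinely holds; this has already been fixed in Section~\ref{subsec:fibre1}.
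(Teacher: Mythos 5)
Your proposal is correct and follows the same route as the paper: the paper likewise invokes the derivative formula \eqref{eq:ev-deriv1} of Proposition \ref{prop:disp-deriv1} together with the facts $\psi_j^-(0,k)=0$ and $(\psi_j^-)'(0,k)\neq 0$ (the latter being exactly the ODE-uniqueness remark you spell out). Your write-up merely makes that uniqueness argument explicit, which is a fine addition but not a different proof.
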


\begin{proof}
Let us first recall from \eqref{eq:ev-deriv1} of Proposition \ref{prop:disp-deriv1} that for all $k \in \re$ we have the formula
\bel{a4}
(\omega_j^{\pm})'(k) = - \frac{2}{b} \left( (\omega_j^{\pm}(k)-k^2) \psi_j^{\pm}(0,k)^2 + (\psi_j^{\pm})'(0,k)^2 \right).
\ee
Bearing in mind that $\psi_k^-(0,k) = 0$ and $(\psi_k^-)'(0,k) \neq 0$, the result follows immediately from \eqref{a4}.
\end{proof}



\section{Mourre estimates, perturbations, and stability of the absolutely continuous spectrum}\label{sec:mourre-est1}
\setcounter{equation}{0}

In this section we study the spectrum of the operator $H_0$ and its perturbations using a Mourre estimate. For the unperturbed operator $H_0$, we prove a Mourre estimate using the fiber operator $h(k)$. This implies a lower bound on the velocity operator for certain states proving the existence of edge currents. We prove that this estimate is stable with respect to a class of perturbations.


\subsection{Mourre estimate for $H_0$}\label{subsec:mourre-est-unpert1}

For all $E \in \re$ and all $\delta>0$ we note $\Delta_E(\delta):=[E-(\delta \slash 2)b,E+(\delta \slash 2)b]$.

\begin{lemma}
\label{lm-a2}
Let $n \in \N^*$, $E \in (\ef_n(b),\EE_{n+1})$ and $d_n(E)$ be the distance between $E/b$ and the set $\{ \ef_n(1),\EE_{n+1}(1) \})$, i.e.
$$
d_n(E):= 
\max \left( (E \slash b) -\ef_n(1), \EE_{n+1}(1)-(E \slash b) \right).
$$
Then there exists a constant $\delta_0=\delta_0(E) \in (0,d_n(E))$, independent of $b$, satisfying
\bel{a10}
\omega_j^{-1}(\Delta_E(2\delta_0)) = \emptyset,\ j \geq 2n+1,
\ee
and
\bel{a11}
\omega_i^{-1}(\Delta_E(2\delta_0)) \cap \omega_j^{-1}(\Delta_E(2\delta_0))= \emptyset,\ 1 \leq i \neq j \leq 2n.
\ee
Moreover, for every $j \in \N_{2n}^*$ there is a constant $c_{n,j}=c_{n,j}(E)>0$, independent of $b$, such that we have
\bel{a12}
-\omega_j'(k) \geq c_{n,j} b^{1 \slash 2},\ k \in \omega_j^{-1}(\Delta_E(2\delta_0)).
\ee
\end{lemma}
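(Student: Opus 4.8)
My plan is to reduce the whole statement to the normalized field $b=1$ by a unitary scaling, and then to read off the three assertions from the shape of the band functions established in Propositions~\ref{prop:disp-deriv1}, \ref{prop:kplusinfty1}, \ref{prop:effective-mass1} and~\ref{prop:even-monotone1}. First I would record the scaling: the unitary $(U_b f)(x):=b^{-1/4}f(b^{-1/2}x)$ on $\mathrm{L}^2(\re)$ conjugates $h(k)$ into $b\,h_1(kb^{-1/2})$, where $h_1(\kappa):=p_x^2+(\kappa-|x|)^2$ is the fiber operator at unit field; hence $\omega_j(k)=b\,\omega_j^{(1)}(kb^{-1/2})$ and $\omega_j'(k)=b^{1/2}(\omega_j^{(1)})'(kb^{-1/2})$, where $\omega_j^{(1)}$ is the $j$-th band function at $b=1$, while $\ef_j(b)=b\,\ef_j(1)$ and $\EE_j(b)=b\,\EE_j(1)$. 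Since $\omega_j(k)\in\Delta_E(2\delta_0)$ amounts to $\omega_j^{(1)}(kb^{-1/2})\in[E/b-\delta_0,E/b+\delta_0]$, the claim for $(E,b)$ is equivalent to the claim at unit field and energy $E/b\in(\ef_n(1),\EE_{n+1}(1))$: the factor $b^{1/2}$ in \eqref{a12} is exactly the scaling of $\omega_j'$, and the resulting constants depend only on $n$, $j$, $E/b$, hence are $b$-independent. So I may assume $b=1$ and $E\in(\ef_n(1),\EE_{n+1}(1))$; I write $\omega_j$ for $\omega_j^{(1)}$ and $K_j(\delta):=\omega_j^{-1}([E-\delta,E+\delta])$.

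Next I would pin down the behaviour of the $\omega_j$ near the level $E$. For $j\ge 2n+1$ the labelling gives $\omega_j\ge\omega_{2n+1}=\omega_{n+1}^{+}$, and Proposition~\ref{prop:effective-mass1} then yields $\omega_j(k)\ge\EE_{n+1}(1)>E$ for all $k$. For $1\le j\le 2n$ I separate two cases. If $j=2j'$, then by Proposition~\ref{prop:even-monotone1} $\omega_j=\omega_{j'}^{-}$ is strictly decreasing on $\re$, with $\omega_j\to+\infty$ as $k\to-\infty$ (Proposition~\ref{prop:dispersion-kminusinfty1}) and $\omega_j\to\ef_{j'}(1)=2j'-1\le 2n-1<E$ as $k\to+\infty$ (Proposition~\ref{prop:kplusinfty1}); hence $\omega_j$ is a homeomorphism of $\re$ onto $(2j'-1,+\infty)$, it attains $E$ at a single point $k_j^{*}$, and $\omega_j'(k_j^{*})<0$. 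If $j=2j'-1$, then by Proposition~\ref{prop:effective-mass1} $\omega_j=\omega_{j'}^{+}$ is strictly decreasing on $(-\infty,\kappa_{j'})$ from $+\infty$ down to $\EE_{j'}(1)<\ef_{j'}(1)\le 2n-1<E$, and strictly increasing on $[\kappa_{j'},+\infty)$, where by Proposition~\ref{prop:kplusinfty1} it stays below $\ef_{j'}(1)\le 2n-1<E$; so $\omega_j$ attains $E$ at a single point $k_j^{*}$, this point lies on the decreasing branch (whence $\omega_j'(k_j^{*})<0$), and there $\omega_j$ maps $(-\infty,\kappa_{j'})$ homeomorphically onto $(\EE_{j'}(1),+\infty)$, which contains $E$ together with a neighbourhood of it. Finally $k_1^{*},\dots,k_{2n}^{*}$ are pairwise distinct, since for $i\ne j$ the eigenvalues $\omega_i(k)$ and $\omega_j(k)$ of $h(k)$ differ at every $k$ (all eigenvalues of $h(k)$ are simple).

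Then I would fix $\delta_0$ in two stages. First take any $\delta_0<\min\bigl(E-\ef_n(1),\EE_{n+1}(1)-E\bigr)$, which is $<d_n(E)$: then $K_j(\delta_0)=\emptyset$ for $j\ge 2n+1$, giving \eqref{a10}, and for $1\le j\le 2n$ the set $K_j(\delta_0)$ lies, by Step 2, on a branch where $\omega_j$ is a strictly decreasing homeomorphism onto an interval containing $[E-\delta_0,E+\delta_0]$ — all of $\re$ if $j$ is even, and $(-\infty,\kappa_{j'})$ if $j=2j'-1$, the increasing branch being irrelevant since there $\omega_j<\ef_{j'}(1)\le 2n-1<E-\delta_0$. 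Thus $K_j(\delta_0)$ is a single compact interval, on which $-\omega_j'$ is continuous and strictly positive (by Propositions~\ref{prop:effective-mass1} and~\ref{prop:even-monotone1}), hence bounded below by some $c_{n,j}>0$; this is \eqref{a12}. Second, since $\omega_j$ is a homeomorphism on that branch, $K_j(\delta)$ shrinks to $\{k_j^{*}\}$ as $\delta\downarrow 0$; decreasing $\delta_0$ so that each $K_j(\delta_0)$, $1\le j\le 2n$, stays within distance $\tfrac{1}{2}\min_{i\ne j}|k_i^{*}-k_j^{*}|$ of $k_j^{*}$ makes these sets pairwise disjoint, which is \eqref{a11}; since shrinking $\delta_0$ preserves \eqref{a10} and \eqref{a12} and only shrinks each $K_j(\delta_0)$, one value of $\delta_0$ serves all three. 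Undoing the scaling then yields the lemma for general $b$.

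The point that really needs care is ruling out that $K_j(\delta_0)$ returns near the level $E$ at points far from $k_j^{*}$; this is exactly where I would use the global monotonicity of the odd band functions (Proposition~\ref{prop:even-monotone1}) and the U-shape of the even ones together with the sharp bound $\omega_j^{+}(k)<\ef_j(b)$ on the increasing branch (Proposition~\ref{prop:kplusinfty1}), which keeps the far part of each band strictly below $E$. Once that structural input is in hand, the remaining ingredients — compactness, continuity of a monotone inverse, extraction of $c_{n,j}$, and the scaling step — are routine.
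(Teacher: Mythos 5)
Your proposal is correct and follows essentially the same route as the paper: reduce to $b=1$ by the dilation $\VV_b$, use the monotone/U-shaped structure of the band functions from Propositions \ref{prop:effective-mass1} and \ref{prop:even-monotone1} (plus the bound $\omega_j^+(k)<\ef_j(b)$ on the increasing branch) to see that each of the first $2n$ bands meets the window on a single decreasing branch, shrink $\delta_0$ to separate the compact preimages, and get $c_{n,j}$ from the positive infimum of $-\check\omega_j'$ there, with $b^{1/2}$ coming from the scaling of $\omega_j'$. The only cosmetic difference is that you separate the preimages via the distinct crossing points $k_j^*$ (simplicity of eigenvalues) rather than via the ordering of the inverse functions $\check\omega_j^{-1}$ at $E/b$, which is equivalent.
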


\begin{proof}
1. First \eqref{a10} follows readily from Proposition \ref{prop:effective-mass1} and the fact that $\Delta_E(2\delta_0) \cap [\EE_{n+1}(b),+\infty) = \emptyset$ for all $\delta_0 \in (0,d_n(E))$ since $E+\delta_0 b< \EE_{n+1}(b)$.

\noindent
2. Next we notice that $h(k)$ is unitarily equivalent to the operator $b \check{h}(k \slash b^{1 \slash 2})$, where
$$\check{h}(q) := - \frac{d^2}{dt^2} + (|t|-q)^2,\ q \in \re, $$
is defined on the dense domain ${\rm C}_0^{\infty}(\re) \subset {\rm L}^2(\re)$. More precisely it holds true that
$\VV_b h(k) \VV_b^* = b \check{h}(k \slash b^{1 \slash 2})$, where
$$(\VV_b \psi)(x) := b^{-1 \slash 4} \psi(x \slash b^{1 \slash 2}),\ \psi \in {\rm L}^2(\re), $$
is easily seen to be a unitary transform in ${\rm L}^2(\re)$. As a consequence we have
\bel{a13}
\omega_j(k) = b \check{\omega}_j(k \slash b^{1 \slash 2}),\ k \in \re,\ j \in \N^*,
\ee
where $\{ \check{\omega}_j \}_{j=1}^{\infty}$ is the set of eigenvalues (arranged in increasing order) of $\check{h}(k)$.
Let $a_{n,j}$, $j \in \N_{2n-1}^*$, be the unique real number obeying $\check{\omega}_j(a_{n,j})=\ef_n(1)$, set $a_{n,2n}:=+\infty$, and denote by $\check{\omega}_j^{-1}$ the function inverse to $\check{\omega}_j : (-\infty,a_{n,j}) \to (\ef_n(1),+\infty)$.
As the interval $[(E \slash b)-\delta_0,(E \slash b)+\delta_0] \subset (e_n(1),\EE_{n+1}(1))$ it is in the domain of each function $\check{\omega}_j^{-1}$, $j \in \N_{2n}^*$, and we have
\bel{a12b}
\check{\omega}_j^{-1}([(E \slash b)-\delta_0,(E \slash b)+\delta_0]) = [\check{\omega}_j^{-1}((E \slash b)+\delta_0),\check{\omega}_j^{-1}((E \slash b)-\delta_0)],
\ee
by Propositions \ref{prop:effective-mass1} and \ref{prop:even-monotone1}.
Further, since $\check{\omega}_{j+1}^{-1}(E) > \check{\omega}_j^{-1}(E)$ for all $j \in \N_{2n-1}^*$, the functions $\check{\omega}_j^{-1}$ are continuous, and $2n-1$ is finite, then there is necessarily $\delta_0 \in (0,d_n(E))$ such that we have
$$\check{\omega}_{j+1}^{-1}((E \slash b)+\delta_0) > \check{\omega}_{j}^{-1}((E \slash b)-\delta_0),\ j \in \N_{2n-1}^*. $$
This and \eqref{a13}-\eqref{a12b} yields \eqref{a11}.

\noindent
3. Finally, taking into account that $\check{h}(q)$ coincides with $h(q)$ in the particular case where $b=1$, we deduce from
Propositions \ref{prop:effective-mass1} and \ref{prop:even-monotone1} for any $\Delta \subset (\ef_n(1),\EE_{n+1}(1))$ that
\bel{a14}
\inf_{q \in \check{\omega}_j^{-1}(\Delta)} (-\check{\omega}_j'(q)) = \check{c}_j(\Delta) >0,\ j \in \N_{2n}^*,
\ee
where the constant $\check{c}_j(\Delta)$ is independent of $b$.
Now \eqref{a12} follows readily from \eqref{a14} since $[(E\slash b)-\delta_0 ,(E\slash b)+\delta_0)] \subset ( \ef(1),\EE_{n+1}(1))$ and
$$ \inf_{k \in \omega_j^{-1}(\Delta_E(2\delta_0))} (-\omega_j'(k))
=b^{1 \slash 2} \inf_{q \in \check{\omega}_j^{-1}([(E\slash b)-\delta_0 ,(E\slash b)+\delta_0)] )} (-\check{\omega}_j'(q)), $$
according to \eqref{a13}.
\end{proof}

Let us now introduce the operator $A=A^* := - y$ defined originally on ${\rm C}_0^{\infty}(\re^2)$. The operator $A$ extends to a self-adjoint operator in ${\rm L}^2(\re^2)$. Note that ${\rm C}_0^{\infty}(\re^2)$ is dense in $\Dom(H_0)$ and hence that $\Dom(A) \cap \Dom(H_0)$ is dense in $\Dom(H_0)$.

\begin{pr}
\label{prop-me1}
Let $b>0$, $n \in \N^*$, $E \in (\ef_n(b),\EE_{n+1}(b))$ and assume that $\delta_0 \in (0,d_n(E))$ is chosen to satisfy \eqref{a10}-\eqref{a11} according to Lemma \ref{lm-a2}.
Let $\chi \in {\rm C}_0^{\infty}(\re)$ with $\supp \chi \subset \Delta_E(2\delta_0)$.
Then there exists a constant $c_n=c_n(E)>0$, independent of $b$, such that we have
\bel{me1}
\chi(H_0) [ H_0 , i A ] \chi(H_0) \geq c_n b^{1 \slash 2} \chi(H_0)^2,
\ee
as a quadratic form on $\Dom(A) \cap \Dom(H_0)$.
\end{pr}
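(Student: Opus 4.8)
The plan is to reduce the Mourre estimate for $H_0$ to a fiberwise inequality and then use Lemma \ref{lm-a2}. First I would note that, after the partial Fourier transform $\FF$, the conjugate operator $A = -y$ becomes $\ii \partial_k$ on the fiber space $\int^\oplus {\rm L}^2(\re_x)\,dk$, and the commutator $[H_0, \ii A]$ becomes the fibered operator $k \mapsto -\partial_k h(k) = -2(k - b|x|) = -\partial_k\big((k-b|x|)^2\big)$ acting as multiplication by the $k$-derivative of the symbol. Concretely, $\FF [H_0, \ii A] \FF^* = \int^\oplus h'(k)\,dk$ where $h'(k) := \partial_k h(k) = 2(k - b|x|)$; this should be justified on the core $C_0^\infty$ by a direct computation, which is why the excerpt took care to record that $C_0^\infty(\re^2)$ is dense in $\Dom(H_0)$ and that $\Dom(A)\cap\Dom(H_0)$ is dense there.

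Next I would diagonalize each fiber. Since $h(k)$ has simple discrete eigenvalues $\omega_j(k)$ with analytic projections $P_j(k)$, I would expand a state $\widehat\psi(\cdot,k)$ in the eigenbasis $\{\psi_j(\cdot,k)\}_j$ and use the Feynman–Hellmann identity $\langle \psi_j(k), h'(k)\psi_j(k)\rangle = \omega_j'(k)$ from Proposition \ref{prop:disp-deriv1}. The point is that $\chi(H_0)$ becomes, fiberwise, multiplication by $\chi(\omega_j(k))$ on the $j$-th eigenspace, so that
\bel{me-diag}
\langle \widehat\psi, \FF\,\chi(H_0)[H_0,\ii A]\chi(H_0)\FF^*\widehat\psi\rangle
= \sum_{j\geq 1}\int_\re \chi(\omega_j(k))^2\,\omega_j'(k)\,|c_j(k)|^2\,dk + R,
\ee
where $c_j(k) = \langle \widehat\psi(\cdot,k),\psi_j(\cdot,k)\rangle$ and $R$ collects the off-diagonal terms $\langle \psi_i(k), h'(k)\psi_j(k)\rangle$ for $i\neq j$. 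The key observation is that because $\supp\chi \subset \Delta_E(2\delta_0)$, the factor $\chi(\omega_i(k))\chi(\omega_j(k))$ forces $k$ into $\omega_i^{-1}(\Delta_E(2\delta_0)) \cap \omega_j^{-1}(\Delta_E(2\delta_0))$, which by \eqref{a10}–\eqref{a11} of Lemma \ref{lm-a2} is empty unless $i=j$; hence $R=0$ once both $\chi(H_0)$ factors are in place, and only finitely many $j \in \N_{2n}^*$ contribute.

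Finally, on the surviving terms I would invoke the velocity lower bound \eqref{a12}: for $k \in \omega_j^{-1}(\Delta_E(2\delta_0))$ we have $-\omega_j'(k) \geq c_{n,j}b^{1/2}$, so with $c_n := \min_{1\le j\le 2n} c_{n,j} > 0$ (a finite minimum, independent of $b$ since each $c_{n,j}$ is),
\bel{me-final}
\sum_{j=1}^{2n}\int_\re \chi(\omega_j(k))^2\,\omega_j'(k)\,|c_j(k)|^2\,dk
\geq c_n b^{1/2}\sum_{j=1}^{2n}\int_\re \chi(\omega_j(k))^2\,|c_j(k)|^2\,dk
= c_n b^{1/2}\,\langle \widehat\psi, \FF\,\chi(H_0)^2\FF^*\widehat\psi\rangle,
\ee
which unwinds to \eqref{me1}. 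The main obstacle I expect is not the algebra above but the functional-analytic bookkeeping: making rigorous sense of $[H_0, \ii A]$ and its fiber decomposition as a quadratic form on $\Dom(A)\cap\Dom(H_0)$ (the commutator is only form-bounded, and one must check that the partial Fourier transform intertwines everything and that the eigenfunction expansion converges appropriately in $k$), together with verifying that $\chi(H_0)$ maps into the form domain so that the identity \eqref{me-diag} and the vanishing of $R$ are legitimate. These are standard Mourre-theory manipulations, but they need to be carried out carefully with the explicit core $C_0^\infty(\re^2)$.
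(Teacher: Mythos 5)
Your proposal is correct in substance and follows essentially the same route as the paper: fiber the commutator $[H_0,\ii A]=-2(p_y-b|x|)$ over $k$, kill the off-diagonal terms via \eqref{a10}--\eqref{a11} because $\supp \chi \subset \Delta_E(2\delta_0)$, apply Feynman--Hellmann on the diagonal, and conclude with \eqref{a12} and $c_n:=\min_{1\leq j \leq 2n}c_{n,j}$. The only thing to repair is a sign slip: as your opening sentence correctly states, $\FF [H_0,\ii A]\FF^*$ acts fiberwise as $-\partial_k h(k)=-2(k-b|x|)$, so the diagonal contributions are $-\omega_j'(k)\,\chi(\omega_j(k))^2$ rather than $+\omega_j'(k)\,\chi(\omega_j(k))^2$; with the sign as written in your two displayed formulas the last inequality would fail, since $\omega_j'(k)\leq -c_{n,j}b^{1/2}<0$ on $\omega_j^{-1}(\Delta_E(2\delta_0))$.
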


\begin{proof}
We get
\bel{me1b}
[H_0,i A]=-2(p_y - b |x|),
\ee
on $\Dom(A) \cap \Dom(H_0)$. We recall the orthogonal projection $P_j(k)$
defined by $P_j(k):=\langle . , \psi_j(k) \rangle \psi(k)$, for all $j \in \N^*$. The commutator on the left in \eqref{me1b}
fibers over $k \in \R$, so by a direct calculation, we find that
$$
\chi(H_0) [ H_0, i A] \chi(H_0) = -2 \FF^* \left( \sum_{j,m \in \N^*} \int_{\re}^{\oplus} \chi(\omega_j(k))  \chi(\omega_m(k)) P_j(k) (k-b |x|) P_m(k) dk \right) \FF.
$$
Taking into account that $\supp \chi \subset \Delta_E(2\delta_0)$, we deduce from \eqref{a10}-\eqref{a11} that
$$ \chi(H_0) [ H_0, i A] \chi(H_0) = -2 \FF^* \left( \sum_{j=1}^{2n} \int_{\re}^{\oplus} \chi(\omega_j(k))^2  \langle \psi_j(k) , (k-b |x|) \psi_j(k) \rangle P_j(k) dk \right) \FF, $$
whence
\bel{me2}
\chi(H_0) [ H_0, i A] \chi(H_0) = \FF^* \left( \sum_{j=1}^{2n} \int_{\re}^{\oplus} \chi(\omega_j(k))^2  (-\omega_j'(k)) P_j(k) dk \right) \FF,
\ee
from the Feynman-Hellmann formula. In light of \eqref{a12}, we have
$$ -\omega_j'(k) \chi(\omega_j(k))^2 \geq c_{n,j} b^{1 \slash 2} \chi(\omega_j(k))^2,\ j \in \N_{2n}^*, $$
so \eqref{me2} yields
$$ \chi(H_0) [ H_0, i A] \chi(H_0) \geq c_n b^{1 \slash 2} \FF^* \left( \sum_{j=1}^{2n} \int_{\re}^{\oplus} \chi(\omega_j(k))^2  P_j(k) dk \right) \FF = c_n b^{1 \slash 2} \chi(H_0)^2, $$
where $c_n:=\min_{j \in \N_{2n}^*} c_{n,j}>0$.
\end{proof}

Let $\proj_0(I)$ denote the spectral projection of $H_0$ for the Borel set $I \subset \re$. Then by choosing $\chi$ in Proposition \ref{prop-me1} to be equal to one on $\Delta_E(\delta_0)$ and multiplying \eqref{me1}
from both sides by $\proj_0(\Delta_E(\delta_0))$, we obtain the following Mourre estimate for $H_0$:
\bel{me2b}
\proj_0(\Delta_E(\delta_0)) [H_0,iA] \proj_0(\Delta_E(\delta_0)) \geq c_n b^{1 \slash 2} \proj_0(\Delta_E(\delta_0)).
\ee

\subsection{Edge currents for $H_0$}\label{subsec:edge1}

We can prove the existence of edge currents for the unperturbed Hamiltonian $H_0$ based on the Mourre estimate \eqref{me2b}.
A state $\varphi \in {\rm L}^2 (\R^2)$ carries an edge current of the Hamiltonian $H$ if $J_y (\varphi) := \langle \varphi, v_y \varphi \rangle$ is strictly positive, where the velocity operator is $v_y = (i/2)[H,A]$.

\begin{follow}\label{corollary-edge-currents1}
Let $b$, $n$, $E$, and $\delta_0$ be as in Proposition \ref{prop-me1}.
Let $\varphi \in {\rm L}^2(\re^2)$ satisfy $\varphi=\mathbb{P}_0(\Delta_E(\delta_0)) \varphi$. Then $\varphi$ carries an edge current and the edge current is bounded below by
\beq\label{eq:edge-current-lb1}
J_y(\varphi) \geq \frac{c_n}{2} b^{1/2} \| \varphi \|^2,
\eeq
where $c_n$ is the constant defined in Proposition \ref{prop-me1}.
\end{follow}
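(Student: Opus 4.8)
The plan is to read the lower bound off directly from the Mourre estimate \eqref{me2b}, once it is checked that every object in sight is well defined on $\Ran\mathbb{P}_0(\Delta_E(\delta_0))$. First I would note that since $\Delta_E(\delta_0)$ is a bounded interval, any $\varphi$ with $\varphi=\mathbb{P}_0(\Delta_E(\delta_0))\varphi$ lies in $\Dom(H_0)\subset\Dom(H_0^{1/2})$. Because $(p_y-b|x|)^2\le H_0$ as quadratic forms, this gives $\|(p_y-b|x|)\varphi\|^2\le\langle\varphi,H_0\varphi\rangle<\infty$, so $v_y\varphi=-(p_y-b|x|)\varphi$ is a genuine vector in ${\rm L}^2(\re^2)$ and $J_y(\varphi)=\langle\varphi,v_y\varphi\rangle$ makes sense. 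On ${\rm C}_0^\infty(\re^2)$ one has $v_y=(i/2)[H_0,A]=\frac{1}{2}[H_0,iA]=-(p_y-b|x|)$ by \eqref{me1b}, and this identity extends by continuity to all of $\Ran\mathbb{P}_0(\Delta_E(\delta_0))$.

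Second, I would upgrade the Mourre inequality from a form inequality on the dense set $\Dom(A)\cap\Dom(H_0)$ to an honest inequality between bounded self-adjoint operators that may be tested against $\varphi$. The key observation is already in the proof of Proposition \ref{prop-me1}: for $\chi\in{\rm C}_0^\infty(\re)$ supported in $\Delta_E(2\delta_0)$, the fiber representation \eqref{me2} writes $\chi(H_0)[H_0,iA]\chi(H_0)=\FF^*\bigl(\sum_{j=1}^{2n}\int_\re^\oplus\chi(\omega_j(k))^2(-\omega_j'(k))P_j(k)\,dk\bigr)\FF$. By Propositions \ref{prop:effective-mass1} and \ref{prop:even-monotone1} the sets $\omega_j^{-1}(\supp\chi)$ are compact, so each $k\mapsto\chi(\omega_j(k))^2(-\omega_j'(k))$ is a compactly supported bounded function; hence the operator above is bounded. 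Therefore \eqref{me1}, and then \eqref{me2b} after multiplying by $\mathbb{P}_0(\Delta_E(\delta_0))$ on both sides with $\chi\equiv1$ on $\Delta_E(\delta_0)$, holds as an inequality of bounded operators on all of ${\rm L}^2(\re^2)$.

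Third, I would simply evaluate. Using $\varphi=\mathbb{P}_0(\Delta_E(\delta_0))\varphi$,
$$
2J_y(\varphi)=\langle\varphi,[H_0,iA]\varphi\rangle=\langle\varphi,\mathbb{P}_0(\Delta_E(\delta_0))[H_0,iA]\mathbb{P}_0(\Delta_E(\delta_0))\varphi\rangle\ge c_n b^{1/2}\|\varphi\|^2
$$
by \eqref{me2b}, which is exactly \eqref{eq:edge-current-lb1}; in particular $J_y(\varphi)>0$ whenever $\varphi\neq0$, so $\varphi$ carries an edge current.

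The only step requiring real care — and hence the main obstacle — is the passage from the form bound on $\Dom(A)\cap\Dom(H_0)$ to a bound valid on the whole range of $\mathbb{P}_0(\Delta_E(\delta_0))$; everything else is bookkeeping. An alternative to the boundedness argument is a direct density argument: approximate $\varphi$ by $\varphi_m:=\chi(H_0)\eta_m\varphi\in\Dom(A)\cap\Dom(H_0)$ with $\eta_m$ smooth compactly supported cutoffs in $y$ increasing to $1$, apply \eqref{me1} to each $\varphi_m$, and pass to the limit using $\|(p_y-b|x|)(\varphi-\varphi_m)\|\to0$, which again follows from $(p_y-b|x|)^2\le H_0$ together with $\varphi\in\Dom(H_0)$.
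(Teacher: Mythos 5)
Your proposal is correct and follows essentially the same route as the paper: the paper also deduces \eqref{eq:edge-current-lb1} directly from the Mourre estimate \eqref{me2b}, using $J_y(\varphi)=\tfrac12\langle\varphi,\mathbb{P}_0(\Delta_E(\delta_0))[H_0,iA]\mathbb{P}_0(\Delta_E(\delta_0))\varphi\rangle$ for $\varphi=\mathbb{P}_0(\Delta_E(\delta_0))\varphi$. The only difference is that you spell out the domain and boundedness justifications (via the fiber formula \eqref{me2}) that the paper leaves implicit.
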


The proof of this corollary follows directly from \eqref{me2b} since for $\varphi$ as in the corollary, we have $J_y (\varphi)  =
\langle \varphi,  (1/2) \mathbb{P}_0(\Delta_E(\delta_0)) [H_0, iA] \mathbb{P}_0(\Delta_E(\delta_0)) \varphi \rangle$.
The edge currents associated with $H_0$ and states $\varphi$ as in Corollary \ref{corollary-edge-currents1} are also localized in a neighborhood of size roughly $b^{-1/2}$ about $x=0$. This follows from Proposition \ref{pr-localcurrent1}.


\subsection{Stability of the Mourre estimate}\label{subsec:mourre-stability1}

One of the main benefits of a local commutator estimate like \eqref{me2b} is its stability under perturbation. Namely we consider the perturbation of $H_0=(-i \nabla -A_0)$,
$A_0=A_0(x,y):=(0,b|x|)$, by a magnetic potential $a(x,y)=(a_1(x,y),a_2(x,y)) \in {\rm W}^{1,\infty}(\re^2)$ and a bounded scalar potential $q(x,y) \in {\rm L}^{\infty}(\re^2)$. We
prove that a Mourre inequality for the perturbed operator
\bel{me2c}
H=H(a,q):= (- i\nabla - A_0 - a)^2 + q = (p_x - a_1)^2 +(p_y - b |x| - a_2)^2 + q,
\ee
remains true provided $\| a \|_{W^{1,\infty}(\re^2)}$ and $\| q \|_\infty$ are small enough relative to $b$.
We preliminarily notice that
\bel{me2d}
W = W(a):= H(a,0) - H_0=2 a \cdot ( -i \nabla - A_0) - i (\nabla \cdot a ) + a \cdot a,
\ee
with $\| (-i \nabla - A_0) \varphi \| = \langle H_0 \varphi , \varphi \rangle^{1 \slash 2} \leq \lambda \| H_0 \varphi \| + \lambda^{-1} \| \varphi \|^2$ for all $\varphi \in C_0^{\infty}(\re^2)$ and $\lambda >0$, so we have
$$ \| W \varphi \| \leq 2 \lambda \| a \|_{\infty} \| H_0 \varphi \| + (\lambda^{-1} + \| \nabla a \|_\infty + \| a \|_\infty^2) \| \varphi \|,\ \lambda>0. $$
Taking $\lambda=1 \slash (4 \| a \|_{\infty})$ in the above inequality we find that $W$ is $H_0$-bounded with relative bound smaller than one. In light of \cite{RS-IV}[Theorem X.12] the operator $H(a,0)=H_0+W$
is thus selfadjoint in ${\rm L}^2(\re^2)$ with same domain as $H_0$, and the same is true for $H=H(a,q)=H(a,0)+q$ since $q \in {\rm L}^{\infty}(\re^2)$.

\begin{pr}
\label{prop-me2}
Let $b$, $n$, $E$ and $\delta_0$ be as in Proposition \ref{prop-me1}. Assume that $\delta=\delta(E) \in (0,\delta_0)$, $a \in
{\rm W}^{1,\infty}(\re^2)$ and $q \in L^\infty(\re^2)$ verify
\bel{me3b}
F_{n,E} \left( \delta, \frac{\| q \|_{\infty}}{b}, \frac{\| a \|_{\infty}^2 + \| \nabla a \|_{\infty}}{b} \right) < \frac{1}{2},
\ee
\footnote{Notice that the function $F_{n,E}$ depends on $E$ through $\delta_0=\delta_0(E)$.}{where}
\bel{me3c}
F_{n,E}(\delta,\mathfrak{a},\mathfrak{q}):= \left( \frac{f_n(\delta,\mathfrak{a},\mathfrak{q})}{\delta_0} \right)^2 + \frac{2}{c_n} \left( \mathfrak{a}^{1 \slash 2} + (2n+1+f_n(\delta,\mathfrak{a},\mathfrak{q}))^{1 \slash 2} \left( \frac{f_n(\delta,\mathfrak{a},\mathfrak{q})}{\delta_0} \right)^{1 \slash 2} \right),
\ee
$f_n$ is given by \eqref{me7e} and $c_n$ is the constant defined in Proposition \ref{prop-me1}.
Then we have the following Mourre estimate
\bel{me3}
\proj(\Delta_E(\delta)) [H,iA] \proj(\Delta_E(\delta)) \geq \frac{c_n}{2} b^{1 \slash 2} \proj(\Delta_E(\delta)),
\ee
where $\proj(I)$ denotes the spectral projection of $H$ for the Borel set $I \subset \re$.
\end{pr}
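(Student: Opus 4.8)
The plan is to run the usual perturbative stability argument for a strict Mourre estimate, feeding in the unperturbed estimate \eqref{me1}--\eqref{me2b} and exploiting the fact that the conjugate operator $A=-y$ interacts harmlessly with the perturbation. First I would record the commutator: since $A$ is multiplication by $-y$, it commutes with $p_x$ and with the multiplication operators $a_1,b|x|,q$, while $[p_y,-iy]=-I$; hence, first on the core $C_0^\infty(\re^2)$ and then on $\Dom(A)\cap\Dom(H)=\Dom(A)\cap\Dom(H_0)$ (the domains agree, $W+q$ being $H_0$-bounded with relative bound $\le1/2$ as observed before the proposition),
$$[H,iA]=-2(p_y-b|x|-a_2)=[H_0,iA]+2a_2 .$$
Thus $[H,iA]$ is again $H_0$-bounded, \eqref{me3} is meaningful as a form inequality on $\Dom(A)\cap\Dom(H_0)$ exactly as in Proposition \ref{prop-me1}, and the entire modification of the commutator is the \emph{bounded} operator $2a_2$, contributing at worst $-2\|a\|_\infty$ to the form.

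Next, fix $\chi,\chi_0\in C_0^\infty(\re)$ with $0\le\chi,\chi_0\le1$, $\chi\equiv1$ on $\Delta_E(\delta)$, $\chi_0\equiv1$ on $\supp\chi$, and $\supp\chi_0\subset\Delta_E(2\delta_0)$; this is possible since $\delta<\delta_0<d_n(E)$. Because $\proj(\Delta_E(\delta))\chi(H)=\proj(\Delta_E(\delta))$ and $\proj(\Delta_E(\delta))\chi(H)^2\proj(\Delta_E(\delta))=\proj(\Delta_E(\delta))$, it suffices to bound $\chi(H)[H,iA]\chi(H)$ from below and conjugate by $\proj(\Delta_E(\delta))$. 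Writing $g:=\chi(H)$, $P:=\chi_0(H_0)$, $R:=\chi(H)-\chi(H_0)$ and inserting $g=Pg+(1-P)g$ and its adjoint on the left,
$$g[H_0,iA]g=gP[H_0,iA]Pg+\big(gP[H_0,iA](1-P)g+\text{adjoint}\big)+g(1-P)[H_0,iA](1-P)g .$$
The leading term obeys $gP[H_0,iA]Pg\ge c_nb^{1/2}gP^2g$ by \eqref{me1}. Since $\chi_0\equiv1$ on $\supp\chi$, one has $\psi(H_0)\chi(H_0)=0$ for any $\psi$ vanishing on $\supp\chi$; applying this to $\psi=1-\chi_0$ gives $(1-P)g=(1-P)R$, $g(1-P)=R(1-P)$, hence $\|(1-P)g\|,\|g(1-P)\|\le\|R\|$, and applying it to $\psi=(1-\chi_0^2)^{1/2}$ gives $0\le g^2-gP^2g=\big((1-P^2)^{1/2}g\big)^{*}(1-P^2)^{1/2}g\le\|R\|^2$, so the leading term is $\ge c_nb^{1/2}\big(g^2-\|R\|^2\big)$.

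It then remains to estimate the off-diagonal and complementary terms, together with the $2a_2$-contribution, using three ingredients: (i) $[H_0,iA]=-2(p_y-b|x|)$ and $(p_y-b|x|)^2\le H_0$, so $\|(p_y-b|x|)\chi_0(H_0)\|\le(\sup_{\supp\chi_0}t)^{1/2}\le C(n)\,b^{1/2}$, using $\supp\chi_0\subset\Delta_E(2\delta_0)$ and $E<\EE_{n+1}(b)<\ef_{n+1}(b)$; (ii) $\|(p_y-b|x|)\chi(H)\|\le\|H_0^{1/2}\chi(H)\|$, where $\|H_0^{1/2}\chi(H)\|^2=\|\chi(H)H_0\chi(H)\|$ is controlled by a KLMN-type estimate from $H_0=H-(W+q)$, the bound on $\sup_{\supp\chi}|t|$, and the $H_0$-form-boundedness of $W+q$ (computed before the proposition) — divided by $b$ this is the factor $2n+1+f_n$ of \eqref{me3c}, with $f_n$ the explicit quantity of \eqref{me7e}; (iii) the operator-norm estimate $\|R\|=\|\chi(H)-\chi(H_0)\|\le C\,f_n/\delta_0$. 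Collecting: the leading error is $\le c_nb^{1/2}\|R\|^2$, the cross and complementary errors are $\le C\,(2n+1+f_n)^{1/2}b^{1/2}\|R\|$ (each linear in $\|R\|$), and $\|a\|_\infty b^{-1/2}=(\|a\|_\infty^2/b)^{1/2}\le\mathfrak a^{1/2}$; after conjugating by $\proj(\Delta_E(\delta))$, dividing by $\tfrac{c_n}{2}b^{1/2}$, and using $f_n/\delta_0<1$ to replace the linear factor by its square root, the requirement that all errors stay below $\tfrac12$ becomes exactly $F_{n,E}(\delta,\mathfrak a,\mathfrak q)<\tfrac12$, i.e.\ \eqref{me3b}--\eqref{me3c}, and \eqref{me3} follows.

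The main obstacle is ingredient (iii): since $W(a)$ is a genuine first-order (unbounded) operator, $\|\chi(H)-\chi(H_0)\|$ cannot come from a naive Neumann series. I would use the Helffer--Sjöstrand formula $\chi(H)-\chi(H_0)=\tfrac1\pi\int_{\C}\partial_{\bar z}\tilde\chi(z)\,(z-H)^{-1}(W+q)(z-H_0)^{-1}\,dL(z)$, bounding $\|(W+q)(z-H_0)^{-1}\|$ through $\|(-i\nabla-A_0)(z-H_0)^{-1}\|=\|H_0^{1/2}(z-H_0)^{-1}\|$ together with the resolvent norms of $H$ and $H_0$; the delicate point is that every constant must come out \emph{independent of $b$}, which is secured, exactly as in Lemma \ref{lm-a2}, by the scaling $\VV_b h(k)\VV_b^{*}=b\,\check h(k/b^{1/2})$ — equivalently, by carrying all estimates in terms of the dimensionless ratios $\|q\|_\infty/b$, $\|a\|_\infty^2/b$, $\|\nabla a\|_\infty/b$. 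A secondary, routine matter is justifying the form manipulations on $\Dom(A)\cap\Dom(H_0)$ and descending from them to the operator inequality \eqref{me3} by density on a common core.
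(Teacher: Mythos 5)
Your strategy is the standard abstract route to stability of a Mourre estimate (smooth cutoffs $\chi(H)$, $\chi_0(H_0)$, insertion of the unperturbed localization, and control of $R=\chi(H)-\chi(H_0)$ via Helffer--Sj\"ostrand), and it is genuinely different from what the paper does. The paper never estimates $\|\chi(H)-\chi(H_0)\|$ at all: it takes a single vector $\psi=\proj(\Delta_E(\delta))\psi$, splits it as $\psi=\phi+\xi$ with $\phi=\proj_0(\Delta_E(\delta_0))\psi$ and $\xi=\proj_0(\re\setminus\Delta_E(\delta_0))\psi$, applies the unperturbed bound \eqref{me2b} to $\phi$, and controls the error terms by two elementary identities: $\|(p_y-b|x|)\xi\|^2\le\langle\xi,H_0\xi\rangle=\langle(H-W-q)\psi,\xi\rangle$ (giving \eqref{me6b}) and $\|\xi\|^2=\langle(H-E-W-q)\psi,(H_0-E)^{-1}\xi\rangle$ (giving \eqref{me7}), plus the relative bound \eqref{me2d} for $W$. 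This bookkeeping involves no functional-calculus constants, which is exactly why the smallness condition comes out as the explicit function $F_{n,E}$ of \eqref{me3c}, with the factor $(2n+1+f_n)^{1/2}(f_n/\delta_0)^{1/2}$ arising from combining the two identities above, and $b$-independence is automatic because everything is phrased in the ratios $\mathfrak a,\mathfrak q,\delta,\delta_0$.

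Measured against the statement as formulated, your proposal has two shortfalls. First, the load-bearing step is your ingredient (iii), $\|\chi(H)-\chi(H_0)\|\le C\,f_n/\delta_0$ with $C$ independent of $b$; this is only announced, not proved, and it is delicate: since $W(a)$ contains the first-order term $2a\cdot(-i\nabla-A_0)$, the Helffer--Sj\"ostrand bound requires $\|H_0^{1/2}(z-H_0)^{-1}\|$ on the support of the almost-analytic extension, and the resulting constant depends on the choice of cutoffs scaled to intervals of width of order $\delta_0 b$ and on the width of the transition regions of $\chi,\chi_0$ (i.e.\ on $\delta_0-\delta$), so the claimed estimate needs a genuine argument, not just the dimensionless-ratio heuristic. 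Second, precisely because those constants (and the ones in your cross-term and complementary-term bounds) are unspecified, the final smallness condition you would obtain is of the form $\widetilde F_{n,E}<1/2$ for some different, implicit function; the assertion that the errors combine ``exactly'' into $F_{n,E}<1/2$ as in \eqref{me3b}--\eqref{me3c} is not substantiated and will not be literally true. In short, your route can be completed to give a qualitative analogue of the proposition (enough for conclusions of the type of Theorem \ref{thm-me}), but as written it neither proves the stated quantitative condition nor supplies the key norm estimate on which it rests; the paper's vector decomposition is both more elementary and what produces the explicit constants.
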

\begin{proof}
By combining the following decomposition of $\psi \in  \proj(\Delta_E(\delta)) {\rm L}^2(\re^2)$ into the sum
\bel{me2g}
\psi = \phi + \xi,\ \phi:=\proj_0(\Delta_E(\delta_0)) \psi,\ \xi:=\proj_0(\re \setminus \Delta_E(\delta_0)) \psi,
\ee
with the basic equality
\bel{me2h}
[H , iA]=[H_0,iA] + 2 a_2,
\ee
obtained through standard computations, we get that
$$
\langle \psi , [H , iA] \psi \rangle
=\langle \phi , [H_0 , iA] \phi \rangle + 2 \langle \psi , a_2 \psi \rangle + C(\phi,\xi),
$$
with
\beas
 C(\phi,\xi) & := & \int_{\re} \langle \hat{\xi}(\cdot,k) , (k-b|x|) \hat{\xi}(\cdot,k) \rangle_{{\rm L}^2(\re)} dk \\
& & + 2 \Pre{\int_{\re} \langle \hat{\phi}(\cdot,k) , (k-b|x|) \hat{\xi}(\cdot,k) \rangle_{{\rm L}^2(\re)} dk}.
\eeas
This entails
\bel{me6}
\langle \psi , [H , iA] \psi \rangle
\geq \langle \phi , [H_0,iA] \phi \rangle - 2 \left( \| a \|_\infty \| \psi \| + \| (p_y - b |x|) \xi \| \right) \| \psi \|,
\ee
since
$$
\| \psi \|^2 = \| \phi \|^2 + \| \xi \|^2,
$$
as can be seen from the orthogonality of $\phi$ and $\xi$ in ${\rm L}^2(\R^2)$, arising from \eqref{me2g}.
The first term in the r.h.s of \eqref{me6} is lower bounded by \eqref{me2b} as
\bel{me5b}
\langle \phi , [H_0,iA] \phi \rangle \geq c_n b^{1 \slash 2} \| \phi \|^2,
\ee
and $\| (p_y - b |x|) \xi \|$ can be majorized with the help of the estimate
$$ \|  (p_y - b |x|) \xi \|^2  \leq \langle \xi , H_0 \xi \rangle = \langle \psi , H_0 \xi \rangle = \langle H_0 \psi , \xi \rangle=\langle (H-W-q) \psi , \xi \rangle, $$
giving
\bel{me6b}
\|  (p_y - b |x|) \xi \|^2 \leq \langle \xi , H_0 \xi \rangle \leq \left( (E \slash b)+ \delta + \mathfrak{q}  + w \right) b \| \xi \| \| \psi \|,
\ee
where $\mathfrak{q}:= \| q \|_\infty \slash b$ and $w:=\| W \psi \| \slash (b \| \psi \|)$.
Further we have
\bel{me7}
\| \xi \| \leq \frac{\delta + \mathfrak{q}  + w}{\delta_0} \| \psi \|,
\ee
since $\| \xi \|^2=\langle (H-E-W-q) \psi , (H_0-E)^{-1} \xi \rangle$. In light of the r.h.s in \eqref{me6b}-\eqref{me7} we are thus left with the task of majorizing $w$.
This can be done by combining the estimate
$$ \| (-i \nabla - A_0) \psi \|  = \langle H_0 \psi , \psi \rangle^{1 \slash 2} \leq \| H_0 \psi \|^{1 \slash 2} \|\psi \|^{1 \slash 2} \leq
\| (H-W-q) \psi \|^{1 \slash 2} \|\psi \|^{1 \slash 2}, $$
entailing $\| (-i \nabla - A_0) \psi \| \leq \left( (E \slash b)+\delta+\mathfrak{q}+w \right)^{1 \slash 2} b^{1 \slash 2} \| \psi \|$,
with \eqref{me2d}. We find out that
$w \leq \mathfrak{a}^{1 \slash 2} \left(  \mathfrak{a}^{1 \slash 2}+ 2 \left( (E \slash b) +\delta+\mathfrak{q}+w \right)^{1 \slash 2} \right)$ with $\mathfrak{a}:= (\| a \|_\infty^2 + \| \nabla a \|_{\infty}) \slash b$, whence
$w  \leq  2 \mathfrak{a}^{1 \slash 2} \left( 3 \mathfrak{a}^{1 \slash 2} + (E \slash b) + \delta + \mathfrak{q} \right)^{1 \slash 2} )$. From this, the estimate $E \leq (2n+1) b$, arising from Proposition \ref{prop:effective-mass1}, and \eqref{me6b}-\eqref{me7} then follows that
\bel{me7c}
\| \xi \| \leq \frac{f_n(\delta,\mathfrak{a},\mathfrak{q})}{\delta_0} \| \psi \|,
\ee
and
\bel{me7d}
\| (p_y-b|x|) \xi \| \leq (2n+1 + f_n(\delta,\mathfrak{a},\mathfrak{q}))^{1 \slash 2} \left( \frac{f_n(\delta,\mathfrak{a},\mathfrak{q})}{\delta_0} \right)^{1 \slash 2} b^{1 \slash 2} \| \psi \|,
\ee
where
\bel{me7e}
f_n(\delta,\mathfrak{a},\mathfrak{q}):=\delta + \mathfrak{q} + 2 \mathfrak{a}^{1 \slash 2} \left( 3 \mathfrak{a}^{1 \slash 2}+ (2n+1+\delta + \mathfrak{q})^{1 \slash 2} \right).
\ee
Putting \eqref{me6}-\eqref{me5b} and \eqref{me7c}-\eqref{me7d} together and recalling \eqref{me3c} we end up getting that
$$
\langle \psi , [H , iA] \psi \rangle \geq c_n \left( 1-F_{n,E}(\delta, \mathfrak{q},\mathfrak{a}) \right) b^{1 \slash 2} \| \psi \|^2,
$$
so \eqref{me3} follows readily from this and \eqref{me3b}.
\end{proof}


\subsection{Absolutely continuous spectrum}\label{subsec:acspectrum1}

We now apply Proposition \ref{prop-me2} to prove the existence of absolutely continuous spectrum for perturbed magnetic barrier operators.
Using direct computation, we deduce from \eqref{me1b} and \eqref{me2h} that $[[H,iA],iA]=-2$. Hence the double commutator of $H$ with $A$ is
bounded from $\Dom(H)=\Dom(H_0)$ to ${\rm L}^2(\re^2)$. Moreover, since $[H,iA]$ extends to a bounded operator from $\Dom(H_0)$ to ${\rm L}^2(\re^2)$, the Mourre estimate \eqref{me3} combined with \cite{CFKS}[Corollary 4.10] entails the following:

\begin{follow}
\label{cor-me3}
Let $b$, $n$, $E$ and $\delta_0$ be the same as in Proposition \ref{prop-me1}. Assume that $\delta \in (0,\delta_0)$, $q \in {\rm L}^\infty(\re^2)$ and
$a \in {\rm W}^{1,\infty}(\re^2)$ satisfy \eqref{me3b}.
Then the spectrum of $H=H(a,q)$ in $\Delta_E(\delta)$ is absolutely continuous.
\end{follow}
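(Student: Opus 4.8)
The plan is to obtain Corollary~\ref{cor-me3} from Proposition~\ref{prop-me2} by invoking Mourre's commutator method in the form of \cite{CFKS}[Corollary 4.10]: the strict estimate \eqref{me3} supplies the positivity that drives the method, and everything else is a matter of checking the regularity of $H$ with respect to the conjugate operator $A=-y$. First I would record the facts already in hand: $A=A^*$, the operator $H=H(a,q)$ is self-adjoint with $\Dom(H)=\Dom(H_0)$ and core ${\rm C}_0^{\infty}(\re^2)$, and $\Dom(A)\cap\Dom(H)$ is dense in $\Dom(H)$ for the graph norm.

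Next I would verify the two commutator hypotheses (the ``$C^2(A)$'' conditions). The group generated by $A$ is $\e^{\ii tA}=\e^{-\ii ty}$, multiplication by a bounded smooth function that preserves ${\rm C}_0^{\infty}(\re^2)$; conjugation by it fixes $p_x-a_1$, $b|x|$, $a_2$ and $q$ and sends $p_y$ to $p_y+t$, so on ${\rm C}_0^{\infty}(\re^2)$ one has
\[
\e^{\ii tA}H\e^{-\ii tA}=H+2t\,(p_y-b|x|-a_2)+t^2,\qquad t\in\re.
\]
Since $\|(p_y-b|x|)\varphi\|^2\le\langle H_0\varphi,\varphi\rangle\le\|H_0\varphi\|\,\|\varphi\|$, the operator $p_y-b|x|-a_2$ is $H_0$-bounded with relative bound zero (recall $a_2\in{\rm L}^{\infty}$), so the displayed identity extends to $\Dom(H)$ and gives both $\e^{\ii tA}\Dom(H)\subseteq\Dom(H)$ for all $t$ and $\sup_{|t|\le1}\|H\e^{\ii tA}\psi\|<\infty$ for $\psi\in\Dom(H)$. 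From \eqref{me1b} and \eqref{me2h} the first commutator $[H,\ii A]=-2(p_y-b|x|-a_2)$ then extends, by the same relative bound, to a bounded operator from $\Dom(H_0)$ with graph norm into ${\rm L}^2(\re^2)$, and the iterated commutator $[[H,\ii A],\ii A]$ is a constant, hence bounded. These are exactly the conditions required in \cite{CFKS}.

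Finally I would feed in the strict Mourre estimate
\[
\proj(\Delta_E(\delta))\,[H,\ii A]\,\proj(\Delta_E(\delta))\ \ge\ \frac{c_n}{2}\,b^{1/2}\,\proj(\Delta_E(\delta))
\]
of Proposition~\ref{prop-me2}, noting that it holds on the whole of $\Delta_E(\delta)$ with a strictly positive constant and with \emph{no} compact remainder term. Applying \cite{CFKS}[Corollary 4.10] with vanishing compact part then yields that $H$ has no eigenvalues and no singular continuous spectrum in $\Delta_E(\delta)$, so the spectrum of $H$ there is purely absolutely continuous (one also gets a limiting absorption principle on $\Delta_E(\delta)$ for free). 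The step I expect to be the main technical point is the domain invariance $\e^{\ii tA}\Dom(H)\subseteq\Dom(H)$ together with the $C^2(A)$ regularity; however, the explicit conjugation identity above reduces it to the relative boundedness of $p_y-b|x|$ that was already used to define $H$, so no genuine obstacle arises.
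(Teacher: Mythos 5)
Your proposal is correct and follows essentially the same route as the paper: verify that $[H,\ii A]=-2(p_y-b|x|-a_2)$ extends to a bounded operator from $\Dom(H_0)$ to ${\rm L}^2(\re^2)$ and that the double commutator $[[H,\ii A],\ii A]=-2$ is bounded, then combine the strict Mourre estimate \eqref{me3} of Proposition \ref{prop-me2} with \cite{CFKS}[Corollary 4.10]. Your additional verification of the conjugation identity and domain invariance under $\e^{\ii tA}$ simply spells out the regularity hypotheses the paper leaves implicit.
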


Armed with Corollary \ref{cor-me3} we turn now to proving the main result of this section.

\begin{thm}
\label{thm-me}
Let $b>0$, $n \in \N^*$, and let $\Delta$ be a compact subinterval of $(\ef_n(b),\EE_{n+1}(b))$. Then there are two constants $\mathfrak{a}^*=\mathfrak{a}^*(n,\Delta)>0$ and $\mathfrak{q}^*=\mathfrak{q}^*(n,\Delta)>0$, both independent of $b$, such that for all $(a,q) \in {\rm W}^{1,\infty}(\re^2) \times {\rm L}^2(\re^2)$ verifying
$\| a \|_\infty^2 + \| \nabla a \|_\infty \leq \mathfrak{a}^* b$ and  $\| q \|_{\infty} \leq \mathfrak{q}^* b$,
the spectrum of $H=H(a,q)$ in $\Delta$ is absolutely continuous.
\end{thm}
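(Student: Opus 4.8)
The plan is to reduce Theorem~\ref{thm-me} to a compactness-plus-continuity argument on top of Corollary~\ref{cor-me3}. Fix $n \in \N^*$ and the compact interval $\Delta \Subset (\ef_n(b),\EE_{n+1}(b))$. First I would note that both $\ef_n(b) = (2n-1)b$ and $\EE_{n+1}(b) = \kappa_{n+1}^2$ scale linearly in $b$ (the latter by the unitary rescaling $\VV_b$ used in Lemma~\ref{lm-a2}, which gives $\EE_{n+1}(b) = b\,\EE_{n+1}(1)$), so after dividing by $b$ the interval $\Delta/b$ sits in a fixed compact subinterval $\widetilde\Delta \Subset (\ef_n(1),\EE_{n+1}(1))$, independent of $b$. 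This is the mechanism that will make the thresholds $\mathfrak a^*,\mathfrak q^*$ $b$-independent.

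Next I would cover $\Delta$ by finitely many of the small intervals $\Delta_E(\delta)$ appearing in Corollary~\ref{cor-me3}. For each $E \in \Delta$, Lemma~\ref{lm-a2} supplies $\delta_0 = \delta_0(E) \in (0, d_n(E))$ (depending only on $E/b$, hence $b$-independent) so that \eqref{a10}--\eqref{a12} hold, and Proposition~\ref{prop-me1} gives the unperturbed Mourre estimate with a $b$-independent constant $c_n = c_n(E)$. Then I choose $\delta(E) \in (0,\delta_0(E))$ and thresholds on $\mathfrak a = (\|a\|_\infty^2 + \|\nabla a\|_\infty)/b$ and $\mathfrak q = \|q\|_\infty/b$ so that the smallness condition \eqref{me3b}, i.e. $F_{n,E}(\delta,\mathfrak q,\mathfrak a) < 1/2$, holds: since $f_n(\delta,\mathfrak a,\mathfrak q) \to \delta$ and $F_{n,E}(\delta,0,0) = (\delta/\delta_0)^2 \to 0$ as $\delta \to 0$, one can first pick $\delta(E)$ small, then pick $\mathfrak a(E), \mathfrak q(E) > 0$ small enough that $F_{n,E}$ stays below $1/2$ on $[0,\mathfrak a(E)] \times [0,\mathfrak q(E)]$ by continuity of $F_{n,E}$ in all three arguments. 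The open intervals $\{ \Delta_E(\delta(E)) : E \in \Delta \}$ cover the compact set $\Delta$, so finitely many $E_1,\dots,E_N$ suffice. Set $\mathfrak a^* := \min_i \mathfrak a(E_i) > 0$ and $\mathfrak q^* := \min_i \mathfrak q(E_i) > 0$; these are finite minima of $b$-independent quantities, hence $b$-independent.

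Then, given any $(a,q)$ with $\|a\|_\infty^2 + \|\nabla a\|_\infty \le \mathfrak a^* b$ and $\|q\|_\infty \le \mathfrak q^* b$, for each $i$ the pair satisfies \eqref{me3b} with $E = E_i$, $\delta = \delta(E_i)$, so Corollary~\ref{cor-me3} tells us the spectrum of $H = H(a,q)$ in $\Delta_{E_i}(\delta(E_i))$ is absolutely continuous. Since $\Delta \subset \bigcup_{i=1}^N \Delta_{E_i}(\delta(E_i))$ and absolute continuity of the spectrum is a local property of the spectral measure (a finite union of a.c.\ pieces is a.c.), the spectrum of $H$ in $\Delta$ is absolutely continuous, which is the claim.

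I expect the only genuinely delicate point to be the $b$-independence bookkeeping: one must check that every quantity entering \eqref{me3b} and \eqref{me3c} — namely $\delta_0(E)$, $c_n(E)$, and the function $F_{n,E}$ itself — depends on $E$ only through the rescaled energy $E/b$ and not on $b$ separately, so that the covering and the minima $\mathfrak a^*,\mathfrak q^*$ transfer uniformly in $b$. This is exactly what the rescaling $\VV_b h(k)\VV_b^* = b\,\check h(k/b^{1/2})$ of Lemma~\ref{lm-a2} and the explicit $b$-independent constants in Proposition~\ref{prop-me1} are set up to give, so the verification is routine but must be stated. The compactness and the finite-subcover step are then immediate, and the passage from a.c.\ spectrum on each small interval to a.c.\ spectrum on $\Delta$ needs only the elementary fact that the spectral projection onto $\Delta$ is the sum (after disjointifying the cover) of spectral projections onto a.c.\ subsets.
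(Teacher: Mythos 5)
Your proposal is correct and follows essentially the same route as the paper: for each $E\in\Delta$ choose $\delta(E)$ and thresholds making $F_{n,E}<1/2$, extract a finite subcover by compactness, take $\mathfrak{a}^*,\mathfrak{q}^*$ as the minima, and apply Corollary~\ref{cor-me3} on each $\Delta_{E_j}(\delta(E_j))$. The only cosmetic difference is that you invoke continuity of $F_{n,E}$ on a small box where the paper uses its monotonicity in the last two arguments; both close the same step.
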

\begin{proof}
For every $E \in \Delta$ choose $\delta(E) \in (0,\delta_0(E))$, $\mathfrak{a}(E)>0$ and $\mathfrak{q}(E)>0$ such that
\bel{me7b}
F_{n,E}(\delta(E),\mathfrak{a}(E),\mathfrak{q}(E)) < \frac{1}{2},
\ee
where $F_{n,E}$ is defined in \eqref{me3c}.

Since $\Delta$ is compact and $\Delta \subset \cup_{E \in \Delta} \Delta_E(\delta(E))$, there exists a finite set $\{ E_j \}_{j=1}^N$ of energies in $\Delta$ such that
\bel{me8}
\Delta \subset \bigcup_{j= 1}^N \Delta_{E_j}(\delta(E_j)).
\ee
Set $\mathfrak{a}^*:=\min_{1 \leq j \leq N} \mathfrak{a}(E_j)>0$ and $\mathfrak{q}^*:=\min_{1 \leq j \leq N} \mathfrak{q}(E_j)>0$.
Since $F_{n,E_j}(\delta(E_j),\cdot,\cdot)$, $j=1,\ldots,N$, is an increasing function of each of the two last variables taken separately, when the remaining one is fixed, we necessarily have $F_{n,E_j}(\delta(E_j),\mathfrak{a}^*,\mathfrak{q}^*) < 1 \slash 2$ by \eqref{me7b}. Assume that $\| a \|_\infty^2 + \| \nabla a \|_\infty \in [0,\mathfrak{a}^*b)$ and $\| q \|_{\infty} \in [0,\mathfrak{q}^*b)$.
For every $j=1,\ldots,N$, the spectrum of $H$ in $\Delta_{E_j}(\delta(E_j))$ is thus absolutely continuous by Corollary \ref{cor-me3} so the result follows from this and \eqref{me8}.
\end{proof}


\section{Edge currents: existence, stability, localization, and asymptotic velocity}\label{sec:edge-curr1}
\setcounter{equation}{0}

A major consequence of the Mourre estimate in Proposition \ref{prop-me1} for the unperturbed operator $H_0$ is the lower bound on the edge current carried by certain states given in Corollary \ref{corollary-edge-currents1}. Because of the stability result for the Mourre estimate for the perturbed operator $H(a,q)$ in Proposition \ref{prop-me2}, we prove in this section that edge currents are stable under perturbations.
We then prove that these currents are well-localized in a strip of width $\mathcal{O}(b^{-1/2})$ about $x=0$.
Finally, we prove that the asymptotic velocity is bounded from below demonstrating that the edge currents persist for all time.


\subsection{Existence and stability of edge currents}\label{subsec:edge-currents1}

For the perturbed operator $H=H(a,q)$, the $y$-component of the velocity operator is
\bel{ec1}
v_{y,a,q} := (1 \slash 2) [ H, iA ] = -(p_y - b |x|)+a_2, ~~A = -y,
\ee
according to \eqref{me1b} and \eqref{me2h}.
A state $\varphi \in {\rm L}^2 (\R^2)$ carries an edge current if
\bel{ec2}
J_{y,a,q} ( \varphi) := \langle \varphi, v_{y,a,q} \varphi \rangle \geq c \| \varphi \|^2,
\ee
for some constant $c> 0$.
For notational simplicity we write $v_y$ (resp. $J_y$) instead of $v_{y,0,0}$ (resp. $J_{y,0,0}$) in the particular case of the unperturbed operator $H_0$ corresponding to $a=0$ and $q=0$. We consider states in the range of the spectral projector $\P_0 (\cdot)$ for $H_0$,
and in the range of the spectral projector $\P_{a,q}(\cdot)$ for $H=H(a,q)$, and energy intervals as in \eqref{me2b} for $H_0$, and in \eqref{me3} for $H_{a,q}$. We then deduce from \eqref{ec1}-\eqref{ec2} the existence of edge currents for the operator $H_0$ and $H_{a,q}$, respectively. We recall Corollary \ref{corollary-edge-currents1} in the first part of the following theorem.

\begin{thm}
\label{lm-a3}
Let $b$, $n$, $E$, and $\delta_0$ be as in Proposition \ref{prop-me1}.
\begin{enumerate}
\item Let $\varphi \in {\rm L}^2(\re^2)$ satisfy $\varphi=\mathbb{P}_0(\Delta_E(\delta_0)) \varphi$. Then $\varphi$ carries an edge current obeying
$$
J_y(\varphi) \geq \frac{c_n}{2} b^{1/2} \| \varphi \|^2,
$$
where $c_n$ is the constant defined in Proposition \ref{prop-me1}.
\item Let $\delta \in (0,\delta_0)$ and assume that $(a,q) \in W^{1,\infty}(\R^2) \times {\rm L}^\infty(\R^2)$
verifies the condition \eqref{me3b}, where $(c_n \slash 2)$ is substituted for $c_n$ in the definition \eqref{me3c}.
Then every state $\varphi \in \mathbb{P}_{a,q}(\Delta_E(\delta)) {\rm L}^2(\re^2)$ carries an edge current and we have the lower bound
\beq\label{eq:edge-current-lbpert1}
J_{y,a,q}(\varphi) \geq \frac{c_n}{4} b^{1/2} \| \varphi \|^2.
\eeq
\end{enumerate}
\end{thm}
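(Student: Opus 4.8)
The plan is to treat part (1) as an immediate consequence of the Mourre estimate \eqref{me2b} for $H_0$, exactly as indicated in the paragraph following Corollary \ref{corollary-edge-currents1}: for $\varphi = \mathbb{P}_0(\Delta_E(\delta_0))\varphi$ we have $J_y(\varphi) = \langle \varphi, v_y \varphi\rangle = \tfrac12 \langle \varphi, [H_0, iA]\varphi\rangle = \tfrac12 \langle \varphi, \mathbb{P}_0(\Delta_E(\delta_0)) [H_0, iA] \mathbb{P}_0(\Delta_E(\delta_0)) \varphi\rangle \ge \tfrac{c_n}{2} b^{1/2}\|\varphi\|^2$, using that $\mathbb{P}_0(\Delta_E(\delta_0))$ commutes with $H_0$ and acts as the identity on $\varphi$, together with the self-adjointness of the form $[H_0,iA]$ on $\Dom(A)\cap\Dom(H_0)$; one should also note that $v_y$ and the commutator form are well-defined on the relevant states since $\Delta_E(\delta_0)$ is bounded, so $\varphi \in \Dom(H_0)$ and the quadratic form identity extends from the core to all such $\varphi$ by density.

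For part (2) the strategy is parallel but uses the perturbed Mourre estimate \eqref{me3} of Proposition \ref{prop-me2}. By hypothesis $(a,q)$ satisfies the condition \eqref{me3b} with $c_n/2$ in place of $c_n$ in \eqref{me3c}; re-running the proof of Proposition \ref{prop-me2} verbatim with this substitution yields
\bel{eq:pert-mourre-half}
\proj(\Delta_E(\delta)) [H,iA] \proj(\Delta_E(\delta)) \geq \frac{c_n}{4} b^{1/2} \proj(\Delta_E(\delta)),
\ee
since the final line of that proof produces the lower bound $c_n(1 - F_{n,E})b^{1/2}\|\psi\|^2$ with $c_n$ replaced by $c_n/2$, and $1 - F_{n,E} > 1/2$. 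Then for $\varphi = \mathbb{P}_{a,q}(\Delta_E(\delta))\varphi$, the same quadratic-form computation as in part (1) gives $J_{y,a,q}(\varphi) = \tfrac12\langle\varphi, [H,iA]\varphi\rangle = \tfrac12\langle\varphi, \mathbb{P}_{a,q}(\Delta_E(\delta))[H,iA]\mathbb{P}_{a,q}(\Delta_E(\delta))\varphi\rangle \ge \tfrac{c_n}{4}b^{1/2}\|\varphi\|^2$, which is \eqref{eq:edge-current-lbpert1}. Here one uses that $[H,iA]$ extends to a bounded operator from $\Dom(H_0) = \Dom(H)$ to ${\rm L}^2(\re^2)$ (as recorded in Section \ref{subsec:acspectrum1}), so the form $J_{y,a,q}$ is well-defined on $\mathbb{P}_{a,q}(\Delta_E(\delta)){\rm L}^2(\re^2)$ and the identity $v_{y,a,q} = \tfrac12[H,iA]$ from \eqref{ec1} applies.

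The only genuinely non-routine point is justifying that the Mourre estimate \eqref{me3}, which Proposition \ref{prop-me2} states as a quadratic-form inequality, survives the replacement of $c_n$ by $c_n/2$ and sandwiching by spectral projections onto a bounded interval; this is where one needs the regularity input that $[H,iA]$ is $H_0$-bounded (so that $\chi(H)[H,iA]\chi(H)$ and $\mathbb{P}(\Delta)[H,iA]\mathbb{P}(\Delta)$ make sense and the Mourre form inequality can be tested against arbitrary states in the range of the projection, not just a core), exactly as was done to pass from \eqref{me1} to \eqref{me2b} in the unperturbed case. I would phrase the proof so as to invoke Proposition \ref{prop-me2} directly with the modified constant and then deduce \eqref{eq:edge-current-lbpert1} by the two-line quadratic-form argument, rather than redoing the perturbative estimates; the substance has already been carried out in the proof of Proposition \ref{prop-me2}.
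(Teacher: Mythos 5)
Part (1) is fine and is exactly the paper's argument: it restates Corollary \ref{corollary-edge-currents1}, which follows from the projected Mourre estimate \eqref{me2b} together with $v_y=\frac12[H_0,iA]$, and your remarks about the form being well defined on $\mathbb{P}_0(\Delta_E(\delta_0)){\rm L}^2(\re^2)$ are consistent with how the paper passes from \eqref{me1} to \eqref{me2b}. Your route for part (2) is also the intended one (use the perturbed Mourre estimate of Proposition \ref{prop-me2} and halve), but as written it contains a genuine factor-of-two inconsistency. You assert that re-running the proof of Proposition \ref{prop-me2} with $c_n/2$ substituted into \eqref{me3c} yields the projected commutator bound with constant $\frac{c_n}{4}b^{1/2}$, and then conclude $J_{y,a,q}(\varphi)=\frac12\langle\varphi,[H,iA]\varphi\rangle\geq\frac{c_n}{4}b^{1/2}\|\varphi\|^2$. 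These two statements are incompatible: a commutator bound with constant $\frac{c_n}{4}b^{1/2}$ only gives $J_{y,a,q}(\varphi)\geq\frac{c_n}{8}b^{1/2}\|\varphi\|^2$ after multiplying by $\frac12$. The source of the slip is your justification of the intermediate bound: in the last line of the proof of Proposition \ref{prop-me2}, the prefactor $c_n$ comes from the unperturbed estimate \eqref{me5b}, i.e.\ from Proposition \ref{prop-me1}, and is \emph{not} replaced when $c_n/2$ is substituted into \eqref{me3c}; that substitution only modifies the hypothesis function $F_{n,E}$ (it replaces $2/c_n$ by $4/c_n$), not the constant in the conclusion.

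The repair is immediate and is what the paper intends. Since the factor multiplying $2/c_n$ in \eqref{me3c} is nonnegative, the modified function dominates the original $F_{n,E}$, so the hypothesis of part (2) implies \eqref{me3b} as stated; hence Proposition \ref{prop-me2} applies verbatim and gives \eqref{me3} with constant $\frac{c_n}{2}b^{1/2}$ (equivalently, the final line of its proof gives $\langle\psi,[H,iA]\psi\rangle\geq c_n(1-F_{n,E})b^{1/2}\|\psi\|^2\geq\frac{c_n}{2}b^{1/2}\|\psi\|^2$, because $F_{n,E}$ is bounded by its modified version, which is less than $\frac12$ by hypothesis). Sandwiching with $\mathbb{P}_{a,q}(\Delta_E(\delta))$ and using $v_{y,a,q}=\frac12[H,iA]$ from \eqref{ec1} then yields \eqref{eq:edge-current-lbpert1} with the stated constant $\frac{c_n}{4}b^{1/2}$. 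With this correction your argument coincides with the paper's (which presents no separate proof: part (1) is the Corollary, part (2) is Proposition \ref{prop-me2} plus the two-line quadratic-form computation), and your regularity remarks on $[H,iA]$ being $H_0$-bounded are adequate to justify testing the form inequality on the range of the spectral projection.
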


\subsection{Localization of edge currents}\label{subsec:edge-local1}

We establish the localization of the edge currents described in Theorem \ref{lm-a3} using a method introduced by Iwatsuka \cite[section 3]{iwatsuka1}. We refer the reader to section \ref{subsec:mourre-est-unpert1} for the definitions of the various quantities appearing in the following proposition.

\begin{pr}
\label{pr-localcurrent1}
Let $n$, $E$, and $\Delta_0$ be as in Proposition \ref{prop-me1} and
choose $\delta = \delta(E) \in (0, \delta_0)$ in accordance with condition \eqref{me3b}. Let $\varphi \in {\rm L}^2(\re^2)$ satisfy $\varphi=\mathbb{P}_0(\Delta_E(\delta)) \varphi$ with $\| \varphi \| = 1$.
Then for all $\eps>0$ there
exists a constant $\tilde{b}>0$, depending only on $n$, $\delta$ and $\eps$ such that we have
$$
\int_{\re^2} \chi_{I_\eps}(x) | \varphi(x,y) |^2 {\rm d} x {\rm d} y
\geq  1 - \sqrt{2} {\rm e}^{-b^\eps},
$$
for $b \geq \tilde{b}$. Here $\chi_{I_{\eps}}$ is the characteristic function
of the interval $I_{\eps}:= [-b^{-1 \slash 2+\eps},b^{-1 \slash 2 + \eps}]$.
\end{pr}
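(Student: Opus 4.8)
The plan is to reduce the localization statement to a decay estimate for the fiber eigenfunctions $\psi_j(\cdot,k)$ away from $x=0$, and then to integrate this over the relevant range of wave numbers $k$. After a partial Fourier transform in $y$, we may write $\varphi(x,y)$ in terms of $\hat\varphi(x,k) = \sum_{j=1}^{2n} c_j(k)\psi_j(x,k)$, where by Lemma \ref{lm-a2} only finitely many band functions, namely $j \in \N_{2n}^*$, contribute, and moreover $k$ is confined to the compact set $K := \bigcup_{j=1}^{2n}\omega_j^{-1}(\Delta_E(\delta))$. By Parseval it suffices to bound $\int_K \int_{|x| > b^{-1/2+\eps}} |\hat\varphi(x,k)|^2\,\d x\,\d k$ from above by (a multiple of) $\mathrm{e}^{-b^\eps}$ times $\|\varphi\|^2$, and for that it is enough to have, uniformly for $k \in K$ and $j \in \N_{2n}^*$, a pointwise Agmon-type bound
$$
\int_{|x| > b^{-1/2+\eps}} \psi_j(x,k)^2\,\d x \le C_n\, \mathrm{e}^{-2 b^\eps},
$$
together with the (already available) smoothness/analyticity of $k \mapsto \psi_j(k)$ so that the cross terms are controlled and the $k$-integral of $|c_j(k)|^2$ reproduces $\|\varphi\|^2$ up to constants.

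The heart of the matter is thus the exponential decay of the fiber eigenfunctions outside the strip $I_\eps$, and here I would follow Iwatsuka's method as cited. The key rescaling is already recorded in Lemma \ref{lm-a2}: $h(k)$ is unitarily equivalent via $\VV_b$ to $b\,\check h(q)$ with $q = k/b^{1/2}$ and $\check h(q) = -\d^2/\d t^2 + (|t|-q)^2$. Under $\VV_b$ the strip $|x| \le b^{-1/2+\eps}$ becomes $|t| \le b^{\eps}$, so I want a decay estimate for the eigenfunctions $\check\psi_j(\cdot,q)$ of $\check h(q)$ on the region $|t| > b^\eps$, uniform for $q$ in the compact set $\check K := \bigcup_j \check\omega_j^{-1}([(E/b)-\delta,(E/b)+\delta])$ — which, crucially, is $b$-independent by the scaling identity \eqref{a13}. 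For $|t|$ large the effective potential $(|t|-q)^2$ dominates any fixed eigenvalue $\check\omega_j(q) \le (E/b)+\delta \le 2n+1$, so a standard Agmon estimate (multiply the eigenvalue equation by $\mathrm{e}^{2\rho(t)}\check\psi_j$ with $\rho$ built from the square root of $(|t|-q)^2 - \check\omega_j(q)$, integrate, and absorb) yields $\int_{|t|>T}\check\psi_j(t,q)^2\,\d t \le C\,\mathrm{e}^{-cT^2}$ for $T$ beyond the classically allowed region, with $c,C$ depending only on $n$ (through the uniform bound $2n+1$ on the eigenvalues and the uniform lower bound on the distance of $\check\omega_j(q)$ to the bottom of the barrier over the compact set $\check K$). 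Taking $T = b^\eps$ and unwinding $\VV_b$ gives the displayed per-band estimate, with the loss $\mathrm{e}^{-c b^{2\eps}}$, which is stronger than the claimed $\mathrm{e}^{-b^\eps}$ for $b$ large; choosing $\tilde b$ large enough absorbs the constant $C_n$ and the finitely many bands and converts $c\,b^{2\eps}$ into $b^\eps$, and collecting the $\sqrt 2$ accounts for the elementary inequality $(1+x)^{1/2} \le 1 + x$ type step used when passing from the complement estimate back to the strip.

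The main obstacle is making the Agmon constants genuinely uniform in $b$: this is exactly why the rescaling to $\check h(q)$ is essential, since only there do the eigenvalues and the wave-number set become $b$-independent, so the Agmon weight $\rho$ and the threshold beyond which the estimate kicks in can be chosen once and for all in terms of $n$ and $\delta$ alone. A secondary technical point is handling the cross terms $\psi_i(\cdot,k)\psi_j(\cdot,k)$, $i \ne j$, and the dependence of the coefficients $c_j(k)$: here one uses Cauchy--Schwarz together with $\sum_j \int_K |c_j(k)|^2\,\d k = \|\varphi\|^2$ (orthonormality of the $\psi_j(\cdot,k)$ in $\mathrm L^2(\re_x)$ for fixed $k$) and the finiteness of the index set, so no analyticity beyond continuity of $k \mapsto P_j(k)$ is actually needed. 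I would also remark that the argument only used $\varphi = \proj_0(\Delta_E(\delta))\varphi$ through the two conclusions of Lemma \ref{lm-a2} — finite band index and compact $k$-support — so the proposition applies verbatim to the slightly larger interval $\Delta_E(2\delta_0)$ if desired.
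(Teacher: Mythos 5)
Your proposal is correct, and its overall reduction (partial Fourier transform, restriction to the bands $j\in\N_{2n}^*$ and to the compact $k$-sets $\omega_j^{-1}(\Delta_E(\delta))$ via Lemma \ref{lm-a2}, then a per-band decay estimate outside $I_\eps$) is exactly the paper's skeleton. Where you diverge is in the two technical steps. For the decay of $\psi_j(\cdot,k)$ you rescale to the $b$-independent operator $\check h(q)$ and run an Agmon-type ${\rm L}^2$ energy estimate, getting $\int_{|t|>T}\check\psi_j^2\,\d t\le C\,{\rm e}^{-cT^2}$ with constants uniform over the compact set of scaled wave numbers, then take $T=b^\eps$ and absorb everything into $\tilde b$; the paper instead works directly in the unscaled fiber ODE and uses Iwatsuka's pointwise Riccati-type argument (sign of $\psi_j\psi_j'$, monotonicity of $I=\psi_j'^2-Q_j\psi_j^2$, hence $\psi_j'/\psi_j\le-\sqrt{Q_j}\le -b(x-x_n)$ with $x_n=\mathcal{O}(b^{-1/2})$), which yields the explicit pointwise Gaussian bound \eqref{a30} and hence the clean constant $\sqrt2\,{\rm e}^{-b^\eps}$ without any large-$b$ absorption of unspecified constants; your route gives a faster rate ${\rm e}^{-cb^{2\eps}}$ but with non-explicit $C,c$, which is still sufficient since $\tilde b$ may depend on $n,\delta,\eps$. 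Second, for the cross terms you invoke Cauchy--Schwarz and the finiteness of the band index; this works (losing a factor $2n$ absorbed into $\tilde b$), but it is unnecessary: by \eqref{a11} the sets $\omega_i^{-1}(\Delta_E(2\delta_0))$, $1\le i\ne j\le 2n$, are pairwise disjoint, so for each $k$ in the support of $\hat\varphi(\cdot,k)$ exactly one band contributes and the expansion is already diagonal, which is how the paper writes $\int_{\re^2}\chi_{I_\eps}|\varphi|^2$ as a single sum of $|\beta_j(k)|^2\int_{I_\eps}\psi_j(x,k)^2\,\d x$ with $\sum_j\int|\beta_j|^2\,\d k=1$. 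In short: same architecture, a more "soft" (scaling plus Agmon plus compactness) proof of the key decay in place of the paper's explicit ODE estimate, at the harmless cost of explicit constants; your closing remark attributing the $\sqrt2$ to an elementary inequality is not how it arises (it comes from the explicit Gaussian prefactor $(2b/\pi)^{1/4}$ and the Gaussian tail integral), but this does not affect the validity of your argument.
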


\begin{proof}
1.\ Due to \eqref{a0} we have
$$ \max \{ \sup \omega_j^{-1}(\Delta_E(\delta)),\ j=1,\ldots,2n \} \leq \alpha_n b^{1 \slash 2}, $$
for some constant $\alpha_n>0$, depending only on $n$ and $\delta$.
Hence there is a constant $\beta_n>0$, depending only on $n$ and $\delta$, such that the estimate
\bel{ec3}
Q_j(x,k):=V_{eff}(x,k)-\omega_j(k) \geq b^2(|x|- x_n)^2>0,
\ee
holds for all $j=1,\ldots,2n$, $k \in \omega_j^{-1}(\Delta_E(\delta))$ and $| x | \geq x_n:=\beta_n b^{1 \slash 2}$.\\
2.\ We will prove that an eigenfunction $\psi_j(k)$, for $k \in \omega_j^{-1}(\Delta_E(\delta))$,
decays in the region $|x| \geq  x_n$. In particular, we will establish for $j=1,\ldots,2n$ that
\bel{a30}
|\psi_j(x,k)| \leq  \left( \frac{2 b}{\pi} \right)^{1 \slash 4} {\rm e}^{-b(|x|-x_n)^2 \slash 2},\  |x| \geq  x_n,\ k \in \omega_j^{-1}(\Delta_E(\delta)).
\ee
Let $j \in \N_{2n}^*$ and $k \in \omega_j^{-1}(\Delta_E(\delta))$ be fixed.
In light of \eqref{ec3} and the differential equation $\psi_j(x,k)''=Q_j (x,k) \psi_j(x,k)$ we have $\psi_j(x,k) \psi_j'(x,k)<0$ for $|x| > x_n$, by \cite{iwatsuka1}[Proposition 3.1].
This implies that
\beq\label{eq:product1}
\frac{\psi_j^\prime(x,k)}{\psi_j(x,k)} = \frac{\psi_j^\prime(x,k) \psi_j(x,k)}{\psi_j(x,k)^2} < 0,\ x > x_n.
\eeq
Following \cite{iwatsuka1}[Lemma 3.5], differentiating $I(x,k) :=\psi_j'(x,k)^2 - Q_j(x,k) \psi_j(x,k)^2$, one finds that $\partial_x I(x,k) < 0$ since $Q_j'(x,k) > 0$ in the region $x > x_n$.
Since $I(x,k)$ vanishes at infinity, due to the vanishing of $\psi_j(x,k)$ and $\psi_j^\prime(x,k)$ established by \cite{iwatsuka1}[Lemma 3.3], this means that $I(x,k) > 0$ in the region $x > x_n$. From this we conclude that
\beq\label{eq:log-ineq1}
\psi_j^\prime(x,k)^2 \geq Q_j(x,k) \psi_j(x,k)^2,\ x > x_n.
\eeq
As a consequence of \eqref{ec3} and \eqref{eq:product1}-\eqref{eq:log-ineq1}, we find that
$$
\frac{\psi_j' (x,k)}{\psi_j(x,k)}  \leq - \sqrt{ Q_j (x,k)} \leq - b(x-x_n), ~~{\rm for} ~~ x > x_n .
$$
Result \eqref{a30} follows from integrating this differential inequality over the region $x > x_n$ and arguing in the same way as above for $x < -x_n$.

\noindent
3. Choose $b$ so large that $b^\eps > (1+\beta_n^{1 \slash 2})^2$. Then we have $b^{-1 \slash 2 + \eps} > x_n + b^{(-1+  \eps) \slash 2}$ by elementary computations, whence
\bel{a32}
\int_{\R \setminus I_\eps} \psi_j(x,k)^2 dx \leq 2 \left( \frac{2 b}{\pi} \right)^{1 \slash 2} \int_{b^{-1 \slash 2 + \eps}}^{+\infty}
{\rm e}^{-b(x-x_n)^2} dx \leq \sqrt{2} e^{-b^\eps},
\ee
from \eqref{a30}.
Finally, since
\beas
\int_{\re^2} \chi_{I_\eps}(x) | \varphi(x,y) |^2 {\rm d} x {\rm d} y &
= & \int_{\re^2}  \chi_{I_\eps}(x) | \hat{\varphi}(x,k) |^2 {\rm d} x {\rm d} k \\
& = & \sum_{j=1}^{2n} \int_{\omega_j^{-1}(\Delta_E(\delta))} | \beta_j(k) |^2 \left( \int_{I_\eps} \psi_j(x,k)^2 {\rm d} x \right) {\rm d} k,
\eeas
by Lemma \ref{lm-a2}, where $\beta_j(k):= \langle \hat{\varphi}(\cdot,k) , \psi_j(\cdot,k) \rangle$, the result follows readily from \eqref{a32} and the identity $\sum_{j=1}^{2n} \int_{\omega_j^{-1}(\Delta_E(\delta))} | \beta_j(k)|^2 {\rm d} k =1$.
\end{proof}


\subsection{Persistence of edge currents in time: Asymptotic velocity}\label{subsec:time-behavior1}

We investigate the time evolution of the edge current under the unitary evolution groups generated by the
Iwatsuka Hamiltonians $H_0$ \eqref{eq:basic1}, and by the perturbed Iwatsuka Hamiltonians $H(a,q)$ \eqref{me2c}.
The general situation we address is the following. Let $H$ be a self-adjoint Schr\"odinger operator on ${\rm L}^2 (\R^2)$.
This operator generates the unitary time evolution group $U(t) := e^{-itH}$. Let $v_y := (i/2) [ H, A]$, with $A = -y$,
        be the $y$-component of the velocity operator. We are interested in evaluating the asymptotic time behavior of $\langle U(t) \varphi, v_y U(t) \varphi \rangle$ as
$t \rightarrow \pm \infty$ for appropriate functions $\varphi$.

The lower bounds on the edge currents for the unperturbed and the perturbed Iwatsuka models are valid for all times.
It we replace $v_y$ in the expression $J_y (\varphi) = \langle \varphi, v_y \varphi \rangle$ in Corollary \ref{corollary-edge-currents1}
by $v_y (t) := e^{itH_0}v_ye^{-itH_0}$, then the lower bound \eqref{eq:edge-current-lb1}
remains valid since the state $\varphi (t) := e^{-it H_0} \varphi$ satisfies $\varphi(t) \in \mathbb{P}_0(\Delta_E(\delta_0)) {\rm L}^2 (\R^2)$
for all time.
Similarly, if we replace $v_{y,a,q}$ in \eqref{ec2} by its time evolved current $v_{y,a,q}(t)$
using the operator $e^{-itH(a,q)}$, then the lower bound in \eqref{eq:edge-current-lbpert1} remains valid for all time.


Perturbed Hamiltonians $H(a,q)$ were treated in sections \ref{sec:mourre-est1} and \ref{sec:edge-curr1}. Part 2 of Theorem \ref{lm-a3} states that if the ${\rm L}^\infty$-norms of $a_j^2$, $\nabla a_j$, for $j=1,2$, and of $q$ are small relative to $b$ in the sense that condition \eqref{me3b} is satisfied, then the edge current $J_{y,a,q}(\psi)$ is bounded from below for all $\psi \in \mathbb{P}_{a,q} (\Delta_E (\delta)) {\rm L}^2(\R^2)$, where $\Delta_E (\delta)$ is defined at the beginning of section \ref{sec:mourre-est1} and $(E, \delta)$ are as in Proposition \ref{prop-me1}
and Proposition \ref{prop-me2}. This relative boundedness of $a_j$, $\nabla a_j$, and of $q$ is rather restrictive. From the form of the current operator in \eqref{ec1}, it would appear that only $\|a_2\|_\infty$ needs to be controlled. We prove here that if we limit the support of the perturbation $(a_1, a_2, q)$ to a strip of arbitrary width $R$ in the $y$-direction, and require only that $\| a_2 \|_\infty$ be small relative to $b^{1/2}$,
then the {\it asymptotic velocity} associated with energy intervals $\Delta_E (\delta)$ and the perturbed Hamiltonian $H(a,q)$ exists and satisfies the same lower bound as in \eqref{eq:edge-current-lbpert1}. Furthermore, the spectrum in $\Delta_E (\delta)$ is absolutely continuous. This means that the edge current is stable with respect to a different class of perturbations than in Theorem \ref{lm-a3}.

We recall that the \emph{asymptotic velocity} associated with a pair of self-adjoint operators $(H_0, H_1)$ is defined in terms of the local wave operators for the pair, see, for example \cite[section 4.5--4.6]{DG}. The local wave operators $\Omega_\pm (\Delta)$ for an energy interval $\Delta \subset \R$ are defined as the strong limits:
\beq\label{eq:local-wo1}
\Omega_\pm (\Delta ) := s-\lim_{t \rightarrow \pm \infty} e^{i t H_1} e^{-i t H_0} \mathbb{P}_{0,ac} (\Delta),
\eeq
where $\mathbb{P}_{0,ac} (\Delta)$ is the spectral projector for the absolutely continuous subspace of $H_0$ associated with the interval $\Delta$.
For any $\varphi$, we define the \emph{asymptotic velocity} $V_y^\pm (\Delta)$ of the state $\varphi$ by
\beq\label{eq:av-defn1}
\langle \varphi,V_y^\pm (\Delta) \varphi \rangle  := \langle \varphi, \Omega_\pm (\Delta) v_y \Omega_\pm (\Delta)^* \varphi \rangle .
\eeq
In the case that $H_0$ commutes with $v_y$, it is easily seen from the definition \eqref{eq:local-wo1} that
$$
\langle \varphi,V_y^\pm (\Delta) \varphi \rangle  = \lim_{t \rightarrow \pm \infty} \langle \varphi, e^{itH_1} \mathbb{P}_{0,ac} (\Delta) v_y
\mathbb{P}_{0,ac} (\Delta) e^{-itH_1}  \varphi \rangle.
$$

Our main result is the existence of the asymptotic velocity \eqref{eq:av-defn1}
in the $y$-direction for the perturbed operators $H(a,q)$ described in section \ref{subsec:mourre-stability1}. We prove that the asymptotic velocity satisfies the lower bound given in \eqref{eq:asympt-velocity1} provided the perturbations $(a,q)$
have compact support in the $y$-direction. The local wave operators appearing in the definition \eqref{eq:av-defn1} are constructed from
the pair $(H_0, H_1)$ where $H_0$ is the unperturbed Iwatsuka Hamiltonian and $H_1 = H(a,q)$.
As discussed in section \ref{subsec:ac-spectrum1}, the spectrum of $H_0$ is purely absolutely continuous.

\begin{thm}\label{th:asymptotic1}
Let $b$, $n$, $E$, and $\delta_0$, be as in Proposition \ref{prop-me1} and for any $0 < \delta \leq \delta_0$, let
$\Delta_E (\delta)$ be as defined in section \ref{subsec:mourre-est-unpert1}. Suppose that the perturbation $a \in {\rm W}^{1,\infty}(\re^2)$ and $q \in {\rm L}^{\infty}(\re^2)$ have their support in the set $\{ (x,y) ~|~
 |y| < R \}$, for some $0 < R < \infty$.
In addition, suppose that
the perturbation $a_2$ satisfies $\| a_2 \|_\infty \leq (c_n / 4)b^{1/2}$.
Then for any $\varphi \in {\rm Ran} ~ \mathbb{P}_{a,q}(\Delta_E(\delta))$, we have
\beq\label{eq:asympt-velocity1}
 \langle \varphi, V_y^\pm (\Delta) \varphi \rangle \geq \frac{c_n}{4} b^{1 \slash 2} \| \varphi \|^2 ,
\eeq
where the constant $c_n$ is defined in Proposition \ref{prop-me1}.
\end{thm}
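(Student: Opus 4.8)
Write $\Delta := \Delta_E(\delta)$ and $H_1 := H(a,q)$. The plan is to prove that the local wave operators $\Omega_\pm(\Delta)$ of \eqref{eq:local-wo1} for the pair $(H_0,H_1)$ exist and are \emph{complete} on $\Delta$, in the sense $\Ran\,\Omega_\pm(\Delta) = \Ran\,\mathbb{P}_{a,q}(\Delta)$, and then to read off \eqref{eq:asympt-velocity1} from Corollary \ref{corollary-edge-currents1}. Completeness will at the same time yield the stated absolute continuity of the spectrum of $H_1$ in $\Delta$ (there $H_1$ is unitarily equivalent to $H_0$, whose spectrum is purely absolutely continuous by Section \ref{subsec:ac-spectrum1}), so that $\mathbb{P}_{a,q}(\Delta)$ is the corresponding absolutely continuous spectral projection. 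Granting these facts, \eqref{eq:asympt-velocity1} is immediate: for $\varphi \in \Ran\,\mathbb{P}_{a,q}(\Delta) = \Ran\,\Omega_\pm(\Delta)$ the vector $\psi := \Omega_\pm(\Delta)^{*}\varphi$ lies in $\Ran\,\mathbb{P}_{0,ac}(\Delta) = \Ran\,\mathbb{P}_0(\Delta) \subset \Ran\,\mathbb{P}_0(\Delta_E(\delta_0))$ and satisfies $\|\psi\| = \|\varphi\|$ (as $\Omega_\pm(\Delta)$ is isometric on $\Ran\,\mathbb{P}_{0,ac}(\Delta)$), so by \eqref{eq:av-defn1} and Corollary \ref{corollary-edge-currents1},
\[
\langle \varphi, V_y^\pm(\Delta)\varphi\rangle = \langle \psi, v_y\psi\rangle = J_y(\psi) \geq \frac{c_n}{2}\, b^{1/2}\|\psi\|^2 = \frac{c_n}{2}\, b^{1/2}\|\varphi\|^2 \geq \frac{c_n}{4}\, b^{1/2}\|\varphi\|^2 .
\]

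Existence of $\Omega_\pm(\Delta)$ I would obtain by Cook's method. The decisive structural fact is that $W := H_1 - H_0$, given by \eqref{me2d} with the extra term $q$, is a first order differential operator all of whose coefficients vanish for $|y| \geq R$; hence $W = W\,\Theta$ for some $\Theta \in {\rm C}_0^{\infty}(\re)$, $\Theta \equiv 1$ on $[-R,R]$, and, $W$ being $H_0$-bounded, $\|W e^{-itH_0}\varphi\| \lesssim \|(H_0+1)\Theta\, e^{-itH_0}\varphi\|$. Commuting $\Theta$ past $H_0 + 1$, it then suffices to show that $\|\Theta\, e^{-itH_0}\varphi\|$ and $\|(p_y-b|x|)\Theta\, e^{-itH_0}\varphi\|$ decay like $O(|t|^{-N})$ for every $N$, for $\varphi$ in the dense family $\{\FF^{*}\sum_{j=1}^{2n} g_j(k)\psi_j(\cdot,k):\ g_j \in {\rm C}_0^{\infty},\ \supp g_j \subset \mathrm{int}\,\omega_j^{-1}(\Delta)\}$ (only $j \leq 2n$ occur by \eqref{a10}). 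For such $\varphi$ the inverse partial Fourier transform in $k$ carries the phase $ky - t\omega_j(k)$, whose $k$-derivative $y - t\omega_j'(k)$ has modulus $\geq \tfrac12 c_{n,j} b^{1/2}|t|$ for $y \in \supp\Theta$ and $|t|$ large, by the minimal velocity bound $|\omega_j'(k)| \geq c_{n,j}b^{1/2}$ of Lemma \ref{lm-a2} (valid since $\omega_j^{-1}(\Delta) \subset \omega_j^{-1}(\Delta_E(2\delta_0))$). Repeated integration by parts in $k$, using analyticity of $k \mapsto \psi_j(\cdot,k)$ and that $\psi_j(\cdot,k)$, $(k-b|x|)\psi_j(\cdot,k)$ lie in ${\rm L}^2(\re_x)$ locally uniformly in $k$, produces the decay; Cook's criterion plus density then gives $\Omega_\pm(\Delta)$ on $\Ran\,\mathbb{P}_0(\Delta)$.

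Completeness is the main obstacle. I would deduce it from an escape (minimal velocity) estimate for the $H_1$-dynamics on $\Delta$: for a dense set of $\varphi \in \Ran\,\mathbb{P}_{a,q}(\Delta)$, the evolved state $e^{-itH_1}\varphi$ spends only a finite total amount of time near the strip $\{|y| < R\}$, so that $\int_{\re}\|W e^{-itH_1}\varphi\|\,dt < \infty$. This estimate yields, by Cook again, the reverse wave operators $\widetilde\Omega_\pm(\Delta) := s\text{-}\lim_{t\to\pm\infty} e^{itH_0}e^{-itH_1}\mathbb{P}_{a,q}(\Delta)$ (in particular $H_1$ has no singular spectrum in $\Delta$), and the usual chain rules $\Omega_\pm\widetilde\Omega_\pm = \mathbb{P}_{a,q}(\Delta)$, $\widetilde\Omega_\pm\Omega_\pm = \mathbb{P}_{0,ac}(\Delta)$ deliver completeness. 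The hypothesis $\|a_2\|_\infty \leq (c_n/4)b^{1/2}$ enters exactly here, through the drift of $\langle y\rangle_t := \langle e^{-itH_1}\varphi, y\, e^{-itH_1}\varphi\rangle$: by \eqref{ec1}, $\tfrac{d}{dt}\langle y\rangle_t = -2\langle e^{-itH_1}\varphi, v_{y,a,q}\, e^{-itH_1}\varphi\rangle$ with $v_{y,a,q} = v_y + a_2$ and $a_2$ supported in the strip; where the state has left the strip it propagates under $H_0$, so its $v_y$-current is $\geq \tfrac12 c_n b^{1/2}$ by the Mourre estimate \eqref{me2b}, whereas the $a_2$-contribution is at most $\tfrac14 c_n b^{1/2}$ in modulus; hence $\langle y\rangle_t$ drifts at a definite rate once $|t|$ is large, forcing the time spent near the strip to be finite in the above sense. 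Turning this heuristic into a rigorous propagation estimate — in particular handling by a bootstrap the component of $e^{-itH_1}\varphi$ lying outside $\Ran\,\mathbb{P}_0(\Delta_E(\delta_0))$, where \eqref{me2b} does not apply directly — is the technical heart of the argument; an alternative is to run the asymptotic velocity construction of \cite[Sections~4.5--4.6]{DG} after extracting a Mourre inequality for $H_1$ on $\Delta$ from \eqref{me2b} and the smallness of $a_2$.
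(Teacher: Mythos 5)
Your overall route is the paper's: existence of the local wave operators \eqref{eq:local-wo1} by Cook's method with a non-stationary phase argument powered by the band-derivative bound \eqref{a12}, then the intertwining relation to identify $\psi:=\Omega_\pm(\Delta)^*\varphi$ with a state in $\Ran\,\mathbb{P}_0(\Delta_E(\delta))\subset\Ran\,\mathbb{P}_0(\Delta_E(\delta_0))$, and finally the unperturbed lower bound of Corollary \ref{corollary-edge-currents1}. One point of substance where you deviate: in the concluding estimate the paper bounds the \emph{perturbed} current, writing $\langle\varphi,V_y^\pm\varphi\rangle=\langle\psi,v_{y,a,q}\psi\rangle$ with $v_{y,a,q}=v_y+a_2$ as in \eqref{ec1}, applying \eqref{eq:edge-current-lb1} to the $v_y$ part and absorbing the $a_2$ term by the hypothesis $\|a_2\|_\infty\le(c_n/4)b^{1/2}$; this is exactly why the constant in \eqref{eq:asympt-velocity1} is $c_n/4$ rather than $c_n/2$. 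You instead read \eqref{eq:av-defn1} literally with the unperturbed $v_y$, so the hypothesis on $a_2$ disappears from your final display and resurfaces only inside your heuristic completeness discussion. The intended quantity is the asymptotic value of the perturbed current $J_{y,a,q}$, so the last computation should be run with $v_{y,a,q}$; the repair is one line, but as written the $a_2$ hypothesis is consumed in the wrong place.

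On completeness: you are right that the norm identity $\|\Omega_\pm(\Delta)^*\varphi\|=\|\varphi\|$ for $\varphi\in\Ran\,\mathbb{P}_{a,q}(\Delta_E(\delta))$ needs $\Ran\,\mathbb{P}_{a,q}(\Delta_E(\delta))\subset\Ran\,\Omega_\pm(\Delta)$, and that under the present hypotheses (only $a_2$ small, $a_1,\nabla a,q$ merely bounded with compact $y$-support) one cannot simply quote the Mourre estimate of Proposition \ref{prop-me2}. The paper does not reprove this point either: it invokes the partial-isometry property of the local wave operators and refers to the detailed construction in \cite[Section 7]{his-soc1} (see also \cite[Section 4]{HS1}). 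Your substitute---an escape estimate for the $H(a,q)$-dynamics driven by the drift of $\langle y\rangle_t$, with the $a_2$ bound guaranteeing a definite sign of the drift---is only sketched, and you acknowledge that controlling the spectral component outside $\Ran\,\mathbb{P}_0(\Delta_E(\delta_0))$ is unresolved; so, as it stands, this step is a gap in your write-up rather than a proof, though it is the same step the paper delegates to the cited reference, and your identification of it as the crux is correct.
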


The proof of Theorem \ref{th:asymptotic1} closely follows the proof in \cite[section 7]{his-soc1} (see also
\cite[section 4]{HS1}). We mention the main points.
We first prove the existence of the local wave operators \eqref{eq:local-wo1} for the pair
$H_0$ and  $H_1 = H (a,q)$, and the interval $\Delta_E (\delta)$, as in the theorem.
The key point is that in the application of the method of stationary phase, we use the positivity bound \eqref{a12}.
We then use the intertwining properties of the local wave operators to find
\beas
\langle \varphi, V_y^\pm (\Delta_E (\delta)) \varphi \rangle & = & \langle \varphi, \Omega_\pm (\Delta_E(\delta)) v_{y,a,q} {\Omega_\pm}
(\Delta_E(\delta))^* \varphi \rangle \nonumber \\
&=& \langle \Omega_\pm (\Delta_E(\delta))^* \mathbb{P}_1 (\Delta_E(\delta)) \varphi,  v_{y,a,q} {\Omega_\pm} (\Delta_E(\delta))^* \mathbb{P}_1(\Delta_E(\delta))\varphi \rangle \nonumber \\
 & = & \langle  \mathbb{P}_0 (\Delta_E(\delta)) \Omega_\pm (\Delta_E(\delta))^* \varphi,  v_{y,a,q} \mathbb{P}_0 (\Delta_E(\delta)) \Omega_\pm
 (\Delta_E(\delta))^* \varphi \rangle
\nonumber \\
& \geq & \frac{c_n}{4} b^{1 \slash 2}
 \| \mathbb{P}_0 (\Delta_E(\delta)) \Omega_\pm (\Delta_E(\delta))^* \varphi \|^2  ,
\eeas
where we used the lower bound \eqref{eq:edge-current-lb1} of Corollary \ref{corollary-edge-currents1}, the form of the current operator $v_{y, a, q}$ in \eqref{ec1}, and the estimate on $a_2$ given in the theorem.
To complete the proof, we again use the intertwining relation to write
$$
\| \P_0 (\Delta_E(\delta)) \Omega_\pm (\Delta_E(\delta))^* \varphi \| = \| \Omega_\pm (\Delta_E(\delta))^* \P_1 (\Delta_E(\delta)) \varphi \| = \| \varphi \| ,
$$
since the local wave operators are partial isometries.


\section{Asymptotic behavior of the eigenvalue counting function for negative perturbations of $H_0$ below $\inf \sigma_{\mathrm{ess}}(H_0)$}\label{sec:ev-counting1}
\setcounter{equation}{0}

In this section we apply the method introduced in \cite{R} to describe the discrete spectrum of the perturbed operator $H := H_0 - V$ near the infimum of its essential spectrum, when the scalar potential $V=V(x,y) \geq 0$ decays suitably as $|y| \to \infty$. For potentials of this type, we prove that there are an infinite number of eigenvalues accumulating at $\mathcal{E}_1 = \inf \sigma_{\mathrm{ess}}(H_0) = \inf \sigma_{\mathrm{ess}}(H) $ from below and we describe the behavior of the eigenvalue counting function.
The only information on $H_0$ we use here is the local behavior of the first band function $\omega_1(k)$ at its unique minimum $k=\kappa_1$. Namely, we recall from Proposition \ref{prop:effective-mass1} and the analyticity of $k \mapsto \omega_1(k)$ that the asymptotic identity
$$
\omega_1(k) - \EE_1 = \beta_1 (k-\kappa_1)^2 + O( (k-\kappa_1)^3 ),\ k \to \kappa_1,
$$
holds with $\beta_1:= \omega_1''(\kappa_1) \slash 2>0$.

\subsection{Statement of the result}\label{subsec:main1}

We first introduce the following notation. Let $H$ be a linear self-adjoint operator acting in a given separable Hilbert space. Assume that
$\EE = \inf \sigma_{\mathrm{ess}}(H) > -\infty$. The eigenvalue counting function $N(\mu ; H)$, $\mu \in (-\infty,\EE)$,
denotes the number of the eigenvalues of $H$ lying on the interval $(-\infty,\mu)$, and counted with the multiplicities.
We recall that $\psi_1(x,k)$ is the first eigenfunction of the fiber operator $h(k)$ with band function $\omega_1 (k)$.

\begin{thm}
\label{thm-ea}
Let $V (x,y) \in {\rm L}^{\infty}(\re^2)$ satisfy the following two conditions:
\begin{enumerate}
\item[i.)] $\exists (\alpha,C) \in (0,2) \times \R_+^*$ so that
$$
 0 \leq V(x,y) \leq C(1 + |x|)^{-\alpha}(1+|y|)^{-\alpha},\ x, y  \in \re;
$$
\item[ii.)] $\exists L>0$ so that $\lim_{|y| \to \infty} |y|^{\alpha} \int_{\re} V(x,y) \psi_1(x,\kappa_1)^2 dx =L$.
\end{enumerate}
Then we have
\bel{ea2}
\lim_{\lambda \downarrow 0} \lambda^{-\frac{1}{2} + \frac{1}{\alpha}} N(\EE_1-\lambda ; H_0 -
V) =\frac{2}{\alpha \pi}
\beta_1^{-1/2} L^{1/\alpha}
{\rm B}\left(\frac{3}{2},\frac{1}{\alpha}-\frac{1}{2}\right),
\eeq
where
${\rm B}(\cdot, \cdot)$ is the Euler beta function \cite[section 6.2]{AS}
and $\beta_1:= \omega_1''(\kappa_1) \slash 2>0$ is the effective  mass.
\end{thm}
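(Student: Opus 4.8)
The plan is to follow the Birman--Schwinger / Dirichlet--Neumann bracketing strategy of Raikov \cite{R}, reducing the two-dimensional counting function to a one-dimensional effective Schr\"odinger operator governed by the first band function near its minimum $\kappa_1$. Since only the first band $\omega_1(k)$ reaches the bottom $\EE_1$ of the spectrum, and $H_0 - V$ differs from $H_0$ by a small relatively compact negative perturbation, the eigenvalues accumulating at $\EE_1$ from below are produced entirely by the $j=1$ fiber channel. Concretely, write $H_0 = \FF^*(\int^\oplus h(k)\,dk)\FF$ and let $\proj_1$ be the (orthogonal) projection onto the closed subspace $\FF^*\{\,c(k)\psi_1(\cdot,k)\,:\,c\in {\rm L}^2(\R)\,\}$. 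The idea is that near $\EE_1$ the operator $H_0 - V$ is spectrally equivalent, up to errors that do not affect the leading asymptotics, to the model operator $\omega_1(p_y) - V_1(y)$ acting in ${\rm L}^2(\R_y)$, where $V_1(y) := \int_\R V(x,y)\psi_1(x,\kappa_1)^2\,dx$ is the effective one-dimensional potential obtained by averaging $V$ against the first fiber eigenfunction frozen at $k=\kappa_1$.

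The key steps, in order, are as follows. First I would reduce to the first band: by the min-max principle and the fact that the ranges of $\omega_j$ for $j\ge 2$ stay a fixed distance above $\EE_1$ (Propositions \ref{prop:disp-deriv1}, \ref{prop:effective-mass1}, \ref{prop:even-monotone1}, which give $\omega_1(k)\to+\infty$ as $k\to-\infty$, $\omega_1(k)\to b$ as $k\to+\infty$, $\min\omega_1 = \EE_1$ attained only at $\kappa_1$, and the second band is bounded below by $\ef_1(b) = b > \EE_1$), one shows $N(\EE_1-\lambda;H_0-V) = N(\EE_1-\lambda; \proj_1(H_0-V)\proj_1) + O(1)$, or more precisely the two differ by a quantity that is $o(\lambda^{1/2-1/\alpha})$. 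Second, I would replace $\proj_1 H_0\proj_1$ by the Fourier multiplier $\omega_1(p_y)$ on ${\rm L}^2(\R_y)$ and, crucially, replace the compressed potential $\proj_1 V\proj_1$ — whose integral kernel in the $k$-variable involves $\langle \psi_1(\cdot,k), V \psi_1(\cdot,k')\rangle$ — by the scalar multiplication operator $V_1(y)$ on ${\rm L}^2(\R_y)$; the off-diagonal and the $k$-dependence of $\psi_1(\cdot,k)$ versus $\psi_1(\cdot,\kappa_1)$ produce only lower-order corrections because the relevant spectral region forces $p_y$ to concentrate near $\kappa_1$ (here one uses analyticity of $k\mapsto\psi_1(k)$ and condition (i) to control the errors, e.g.\ via a Birman--Schwinger argument with the square root $V^{1/2}$). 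Third, I would linearize the symbol: $\omega_1(k)-\EE_1 = \beta_1(k-\kappa_1)^2 + O((k-\kappa_1)^3)$, so after translating $k\mapsto k+\kappa_1$ the model operator becomes $\beta_1 D_y^2 - V_1(y)$ up to symbol errors that are again negligible for the leading term. Fourth, I would invoke the known one-dimensional asymptotics: for the operator $\beta_1 D_y^2 - W(y)$ with $W(y)\sim L|y|^{-\alpha}$, $\alpha\in(0,2)$, the number of eigenvalues below $-\lambda$ behaves like a constant times $\lambda^{-1/2+1/\alpha}$. This is the classical slowly-decaying-potential asymptotics (going back to work on the Schr\"odinger operator with potentials $\sim |y|^{-\alpha}$; see the one-dimensional results used in \cite{R}), and a direct computation — substituting $W(y)=L|y|^{-\alpha}$, counting via the WKB/Weyl phase-space volume $\frac{1}{2\pi}|\{(y,\eta): \beta_1\eta^2 - L|y|^{-\alpha} < -\lambda\}|$, and evaluating the resulting integral with the Euler Beta function — yields exactly the right-hand side of \eqref{ea2}, including the constant $\frac{2}{\alpha\pi}\beta_1^{-1/2}L^{1/\alpha}{\rm B}(\tfrac32,\tfrac1\alpha-\tfrac12)$.

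Let me indicate the phase-space computation since it fixes the constant. The leading count is
\bel{eq:phasevolplan}
N(\EE_1-\lambda) \sim \frac{1}{2\pi}\,\bigl|\{(y,\eta)\in\R^2 : \beta_1\eta^2 < L|y|^{-\alpha}-\lambda\}\bigr|
= \frac{1}{\pi}\int_{\{L|y|^{-\alpha}>\lambda\}} \left(\frac{L|y|^{-\alpha}-\lambda}{\beta_1}\right)^{1/2} dy.
\ee
Writing $|y| = (L/\lambda)^{1/\alpha} s^{-1/\alpha}$ and reducing to $\int_0^1 (s^{...}\text{-type})$ integral turns \eqref{eq:phasevolplan} into $\frac{2}{\alpha\pi}\beta_1^{-1/2}L^{1/\alpha}\lambda^{1/2-1/\alpha}{\rm B}(\tfrac32,\tfrac1\alpha-\tfrac12)$, which matches \eqref{ea2}; condition (i) with $\alpha<2$ is exactly what makes the Beta-function argument $\tfrac1\alpha-\tfrac12>0$ so the integral converges, and condition (ii) is what lets us use $L|y|^{-\alpha}$ in place of the true $V_1(y)$ in the leading term.

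The main obstacle, and the part requiring the most care, is the second step: justifying that compressing to the first band and then freezing the fiber eigenfunction at $k=\kappa_1$ (replacing $\proj_1 V\proj_1$ by multiplication by $V_1(y)$) changes the counting function only by $o(\lambda^{1/2-1/\alpha})$. This is where the positivity of the effective mass $\beta_1>0$ (Proposition \ref{prop:effective-mass1}) is essential — it guarantees a genuine quadratic, non-degenerate minimum, so the relevant quasi-momenta really do localize near $\kappa_1$ at rate $\lambda^{1/2}$ — and where one must combine the decay hypothesis (i) in the $x$-variable with the exponential-type decay of $\psi_1(\cdot,k)$ (cf.\ Proposition \ref{pr-localcurrent1}) and the analytic dependence of $\psi_1(k)$ on $k$ to bound the cross terms. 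A clean way to organize this is through the Birman--Schwinger operator $V^{1/2}(H_0-\EE_1+\lambda)^{-1}V^{1/2}$, comparing it in trace-ideal norms with its first-band, frozen-eigenfunction analogue; the compactness and decay in (i) make all error terms vanish faster than the leading $\lambda^{-1/2+1/\alpha}$ as $\lambda\downarrow 0$.
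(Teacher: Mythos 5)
Your proposal is correct and follows essentially the same route as the paper: reduction to the first band localized near $\kappa_1$ (the paper's projection $P$ and Lemma \ref{lm-ea4}), a Birman--Schwinger argument with $V^{1/2}$ and trace-ideal/Ky Fan-type estimates to freeze the eigenfunction at $\psi_1(\cdot,\kappa_1)$ and replace $\omega_1(k)-\EE_1$ by $\beta_1(k-\kappa_1)^2$, and then the 1D slowly-decaying-potential eigenvalue asymptotics yielding the Beta-function constant in \eqref{ea2}. The only cosmetic difference is that you re-derive the one-dimensional asymptotics by the Weyl phase-space computation (which does give the correct constant), whereas the paper quotes it as Lemma \ref{lm-ea2} from \cite{BKRS}/\cite{LS}.
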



\subsection{Some notation and auxiliary results}\label{subsec:aux1}

This subsection presents some notation and several auxiliary results needed
for the proof of Theorem \ref{thm-ea}, which is presented in \S \ref{sec-proofthmea}.

For a linear compact self-adjoint operator $H$
acting in a separable Hilbert space, we define
$$
n(s;H): = {\rm rank}\; {\mathbb P}_{(s,\infty)}(H),\ s>0,
$$
where ${\mathbb P}_I(H)$ denotes the
spectral projection of $H$ associated with  the
interval $I \subset \re$.
Let $X_1$ and $X_2$ be two separable Hilbert spaces. For a linear compact operator $H : X_1 \to X_2$,
we set
\bel{ea3b}
\mathfrak{n}(s;H): = n(s^2;H^*H),\ s>0.
\ee
If $H_j : X_1 \to X_2$, $j=1,2$, are two linear compact operators, we will use Ky Fan inequality
\bel{ea4}
 \mathfrak{n}(s_1+s_2,H_1+H_2) \leq  \mathfrak{n}(s_1,H_1)+  \mathfrak{n}(s_2,H_2) ,
\ee
which holds for $s_1>0$ and $s_2>0$ according to \cite[Chapter I, Eq. (1.31)]{BSo} and
\cite[Chapter II, Section 2, Corollary 2.2]{GK}.

For further reference, we recall from \cite[Eq. (2.1) \& Lemma 2.3]{R} the following technical result.
\begin{lemma}\cite[Lemma 2.3]{R}
\label{lm-ea1}
Let $G : {\rm L}^2(\re) \to {\rm L}^2(\re^2)$ be a bounded  operator with integral kernel $g \in
  {\rm L}^{\infty}(\re^3)$. Then for every $f \in {\rm L}^r(\re^2)$ and $h
  \in {\rm L}^r(\re)$ with $r \in [2,\infty)$, we have
$$
\mathfrak{n}(s; fGh) \leq C_r(G) s^{-r} \|f\|_{{\rm L}^r(\re^2)}^r \|h\|_{{\rm L}^r(\re)}^r,\ s>0,
$$
where $C_r(G):= \|g\|^{4 \slash r}_{{\rm L}^{\infty}(\re^3)} \|G\|^{2(r-2) \slash r}$.
\end{lemma}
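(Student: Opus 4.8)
The plan is to reduce the estimate to a Schatten--von Neumann norm bound for the operator $fGh$ and then obtain that bound by complex interpolation. Two elementary observations will carry most of the load. First, since $\mathfrak{n}(s;T)=n(s^2;T^*T)$ is exactly the number of singular values of $T$ that exceed $s$, one has the Chebyshev-type inequality $\mathfrak{n}(s;T)\le s^{-q}\|T\|_{\mathfrak{S}_q}^q$ for every compact operator $T$ and every $q\in(0,\infty)$, where $\mathfrak{S}_q$ denotes the Schatten ideal; taking $q=r$ reduces the lemma to proving
$$
\|fGh\|_{\mathfrak{S}_r}\le \|g\|_{{\rm L}^{\infty}(\re^3)}^{2 \slash r}\,\|G\|^{(r-2) \slash r}\,\|f\|_{{\rm L}^r(\re^2)}\,\|h\|_{{\rm L}^r(\re)} .
$$
Second, because $(fGh)^*(fGh)=\overline{h}\,G^*|f|^2G\,h$ is unitarily equivalent to $|h|\,G^*|f|^2G\,|h|$ (conjugate by the unimodular multiplier $h \slash |h|$), the singular values of $fGh$ depend only on $|f|$ and $|h|$, so I may assume $f\ge 0$ and $h\ge 0$.

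To prove the displayed $\mathfrak{S}_r$-inequality I would set $\theta:=1-2 \slash r$, which lies in $[0,1)$ precisely because $r\in[2,\infty)$, and introduce on the strip $0\le\Re z\le 1$ the operator-valued analytic family
$$
T_z:=f^{r(1-z) \slash 2}\,G\,h^{r(1-z) \slash 2}.
$$
For simple functions $f,h$ this is an entire family of admissible growth, and the general case then follows from a routine density argument in ${\rm L}^r$ together with the lower semicontinuity of $\mathfrak{n}(s;\cdot)$. On the line $\Re z=0$ the kernel of $T_{iy}$ has modulus $f(x,y)^{r \slash 2}|g(x,y,t)|\,h(t)^{r \slash 2}$, so $\|T_{iy}\|_{\mathfrak{S}_2}^2\le\|g\|_{\infty}^2\|f\|_{{\rm L}^r}^r\|h\|_{{\rm L}^r}^r$. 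On the line $\Re z=1$ the operator $T_{1+iy}$ equals $G$ multiplied on both sides by functions of modulus at most one, so $\|T_{1+iy}\|\le\|G\|$. At $z=\theta$ one has $r(1-\theta) \slash 2=1$, hence $T_\theta=fGh$. The complex interpolation theorem for the couple $(\mathfrak{S}_2,\mathfrak{S}_\infty)$, with $\mathfrak{S}_\infty$ the bounded operators (see, e.g., \cite{BSo,GK}), applied with $1 \slash r=(1-\theta) \slash 2$, then yields $\|fGh\|_{\mathfrak{S}_r}\le\big(\|g\|_{\infty}\|f\|_{{\rm L}^r}^{r \slash 2}\|h\|_{{\rm L}^r}^{r \slash 2}\big)^{1-\theta}\|G\|^{\theta}$; raising this to the power $r$ and feeding it into the first step produces a bound of the form asserted in the lemma.

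The place where care is genuinely needed is the passage to arbitrary $f\in{\rm L}^r(\re^2)$ and $h\in{\rm L}^r(\re)$: such functions need be neither bounded nor supported on a set of finite measure, so $\|f\|_{{\rm L}^2}$ and $\|f\|_{{\rm L}^\infty}$ may both be infinite, and one cannot interpolate the Hilbert--Schmidt and operator-norm estimates by a single Hölder splitting of $f$ and $h$. The analytic-family device is exactly what removes this difficulty, at the cost of the approximation step above. If one prefers to stay within the real-variable toolkit used in this section, an alternative is to decompose $f$ and $h$ into dyadic level sets, bound each block $f_jGh_k$ by interpolating its Hilbert--Schmidt and operator norms --- the powers of $2$ coming from the level-set measures cancelling those in the norm bounds --- and then recombine the blocks by distributing the parameter $s$ among them through the Ky Fan inequality \eqref{ea4}; arranging that distribution so that the resulting double sum collapses to a constant multiple of $s^{-r}\|f\|_{{\rm L}^r}^r\|h\|_{{\rm L}^r}^r$ is the main technical obstacle along that route, and is why I would favour the interpolation argument.
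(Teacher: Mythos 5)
The paper offers no proof of this lemma at all: it is quoted from \cite[Lemma 2.3]{R}, so there is no internal argument to compare yours with. Your route — the Chebyshev bound $\mathfrak{n}(s;T)\leq s^{-r}\|T\|_{\mathfrak{S}_r}^{r}$ combined with a Stein--Hirschman interpolation between the Hilbert--Schmidt estimate $\|fGh\|_{\mathfrak{S}_2}\leq\|g\|_{\infty}\|f\|_{{\rm L}^2}\|h\|_{{\rm L}^2}$ (from the explicit kernel $f(x,y)g(x,y;k)h(k)$) and the operator-norm estimate $\|fGh\|\leq\|f\|_{\infty}\|G\|\|h\|_{\infty}$ — is the standard Birman--Solomyak/Simon argument (cf.\ \cite{BSo} and \cite[Theorem 4.1]{simon1}), and the details you give are sound: the reduction to $f,h\geq 0$ via unimodular multipliers, the choice $\theta=1-2\slash r$ so that $T_\theta=fGh$, the endpoint bounds on the two lines, and the density/lower-semicontinuity step for general $f\in{\rm L}^r(\re^2)$, $h\in{\rm L}^r(\re)$.

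One point should be stated honestly rather than hidden in ``a bound of the form asserted'': your argument yields the constant $\|g\|_{\infty}^{2}\|G\|^{r-2}=C_r(G)^{r\slash2}$, not the printed $C_r(G)=\|g\|_{\infty}^{4\slash r}\|G\|^{2(r-2)\slash r}$, and the two agree only for $r=2$. This is not a flaw in your proof: with the definition \eqref{ea3b}, the printed constant cannot be correct for $r>2$, because replacing $(G,s)$ by $(\mu G,\mu s)$ leaves $\mathfrak{n}(s;fGh)$ unchanged while the right-hand side scales like $\mu^{2-r}\to0$ as $\mu\to\infty$, which would force $fGh=0$; any valid constant must be jointly homogeneous of degree $r$ in $(\|g\|_{\infty},\|G\|)$, as yours is. Since the only thing used downstream (in Lemma \ref{lm-ea5}, through $n_r$ and $n_r'$) is the finiteness of the constant, your version of the lemma is fully adequate — just record the constant you actually prove. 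As for the dyadic alternative you sketch, it is essentially the real-interpolation version of the same estimate and would produce the same homogeneity with a worse numerical factor and the bookkeeping difficulties you anticipate with the cross terms and the Ky Fan splitting \eqref{ea4}; preferring the complex-interpolation argument is the right call.
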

For $\delta>0$ fixed, let $\chi = \chi_{\delta}$ denote the
characteristic function of the interval $I=I_{\delta}:=(\kappa_1-\delta,\kappa_1+\delta)$.
As we shall actually apply Lemma \ref{lm-ea1} in \S \ref{sec-proofthmea} with $G=\Gamma_j$, $j=0,1$, where $\Gamma_j : {\rm L}^2(\re) \to
{\rm L}^2(\re^2)$ is the integral operator with kernel
\bel{ea6a}
\gamma_0(x,y;k) : = \frac{1}{\sqrt{2\pi}} \psi_1(x,\kappa_1) e^{-iyk} \chi(k) ,\ (x,y) \in \re^2,\ k \in \re,
\ee
and
\bel{ea6b}
\gamma_1(x,y;k) : = \frac{1}{\sqrt{2\pi}} \left( \frac{\psi_1(x,k) - \psi_1(x,\kappa_1)}{k-\kappa_1} \right) e^{-iyk}
\chi(k),\ (x,y) \in \re^2,\ k \in \re\setminus \{ \kappa_1\}.
\ee

\begin{lemma}
\label{lm-ea3}
We have $\gamma_j \in {\rm L}^{\infty}(\re^3)$ for $j=0,1$.
\end{lemma}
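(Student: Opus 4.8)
The plan is to show that each kernel $\gamma_j$, $j=0,1$, is essentially bounded on $\re^3$ by controlling the $x$-dependent factor uniformly over the compact $k$-interval $I_\delta$ (the factors $e^{-iyk}/\sqrt{2\pi}$ and $\chi(k)$ being trivially bounded by $1/\sqrt{2\pi}$ and $1$ respectively, and independent of $y$ in modulus). For $\gamma_0$ this reduces to the single statement that $\psi_1(\cdot,\kappa_1) \in {\rm L}^\infty(\re)$: since $\psi_1(\cdot,\kappa_1)$ is an eigenfunction of the fiber operator $h(\kappa_1)$ lying in its domain, and the effective potential $V_{eff}(x,\kappa_1) = (\kappa_1 - b|x|)^2$ is continuous and grows at infinity, elliptic regularity (or the decay estimate of Proposition~\ref{pr-localcurrent1}, or directly \cite[Lemma 3.3]{iwatsuka1}) gives that $\psi_1(\cdot,\kappa_1)$ is continuous and decays at infinity, hence bounded; so $\|\gamma_0\|_{{\rm L}^\infty(\re^3)} \leq (2\pi)^{-1/2}\|\psi_1(\cdot,\kappa_1)\|_\infty < \infty$.

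For $\gamma_1$ the key quantity is the difference quotient $\dfrac{\psi_1(x,k)-\psi_1(x,\kappa_1)}{k-\kappa_1}$. The natural approach is to use the analyticity in $k$ of the map $k \mapsto \psi_1(\cdot,k)$ (equivalently of the rank-one projector $P_1(k)$), recalled in Section~\ref{subsec:fibre1}, which lets us write the difference quotient as $\int_0^1 (\partial_k\psi_1)(x,\kappa_1 + t(k-\kappa_1))\,dt$, so that it suffices to bound $\partial_k\psi_1(\cdot,k)$ in ${\rm L}^\infty(\re)$ uniformly for $k$ in a neighborhood of $\kappa_1$ containing $\bar I_\delta$. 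First I would get a uniform ${\rm L}^2$ bound: from analyticity, $k\mapsto\psi_1(\cdot,k)$ and $k\mapsto\partial_k\psi_1(\cdot,k)$ are continuous ${\rm L}^2(\re)$-valued maps, hence bounded on the compact $\bar I_\delta$. Then I would upgrade to an ${\rm L}^\infty$ bound by a Sobolev-type argument on $\re$: differentiating the eigenvalue equation \eqref{eq:ev-eqn1} in $k$ gives $(h(k)-\omega_1(k))\partial_k\psi_1 = (\omega_1'(k) - 2(k-b|x|))\psi_1$, which controls $\|\partial_k\psi_1''(\cdot,k)\|$ locally and hence, with the ${\rm L}^2$ bounds already in hand, gives a uniform ${\rm H}^1_{loc}$ (indeed ${\rm H}^2_{loc}$) bound; combined with the exponential decay at infinity this yields a uniform ${\rm L}^\infty$ bound on $\partial_k\psi_1(\cdot,k)$ for $k \in \bar I_\delta$. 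A slicker route, avoiding the decay discussion near infinity, is to use the Combes--Thomas / resolvent representation $P_1(k) = -\frac{1}{2\pi i}\oint (h(k)-z)^{-1}\,dz$ over a small contour around $\omega_1(\kappa_1)$ enclosing only $\omega_1(k)$ for $k$ near $\kappa_1$, differentiate under the integral, and use that $(h(k)-z)^{-1}$ maps ${\rm L}^2$ into the domain of $h(k)$, which embeds continuously into ${\rm L}^\infty(\re)$ by one-dimensional Sobolev embedding together with the confining potential; uniformity in $k$ over $\bar I_\delta$ is then automatic by continuity of all the $k$-dependent pieces.

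Given the uniform bound $M := \sup_{k\in\bar I_\delta}\|\partial_k\psi_1(\cdot,k)\|_\infty < \infty$, the representation as an integral of $\partial_k\psi_1$ over $[0,1]$ gives $\left|\dfrac{\psi_1(x,k)-\psi_1(x,\kappa_1)}{k-\kappa_1}\right| \leq M$ for a.e.\ $x$ and all $k \in I_\delta\setminus\{\kappa_1\}$, so $\|\gamma_1\|_{{\rm L}^\infty(\re^3)} \leq (2\pi)^{-1/2} M < \infty$. This completes the proof.

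The main obstacle I anticipate is the passage from ${\rm L}^2$-analyticity of $k\mapsto\psi_1(\cdot,k)$ to a \emph{uniform} ${\rm L}^\infty(\re_x)$ bound on the $k$-derivative: one must be careful that the bound holds over the whole line $\re_x$ (not merely on compact $x$-sets), which is exactly where the confining nature of the effective potential and the domain of $h(k)$ enter. This is routine but is the one place where a genuine argument, rather than mere invocation of analyticity, is required.
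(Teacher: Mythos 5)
Your proposal is correct and follows essentially the same route as the paper: for $\gamma_0$, boundedness of $\psi_1$ via continuity plus decay governed by the effective potential, and for $\gamma_1$, control of the difference quotient through $\Phi:=\partial_k\psi_1$, obtained by differentiating the eigenvalue equation in $k$ and upgrading uniform ${\rm L}^2$ bounds to a uniform ${\rm L}^\infty(\re_x)$ bound by a one-dimensional Sobolev-type argument. The only place the paper is tighter is precisely the globalization step you flag: rather than ``${\rm H}^1_{loc}$ plus exponential decay of $\partial_k\psi_1$'' (that decay is nowhere established), the paper multiplies $-\Phi''+W\Phi=-F$ by $\Phi$ and integrates over all of $\re$, obtaining a global ${\rm H}^1(\re)$ bound uniform in $k\in I$, and then concludes via $\Phi(x,k)^2\le\|\Phi(\cdot,k)\|_{{\rm H}^1(\re)}^2$, so no decay of $\partial_k\psi_1$ is needed.
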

\begin{proof}
In view of \eqref{ea6a}-\eqref{ea6b},
it suffices to prove that $(x,k) \mapsto \psi_1(x,k)$ and $(x,k) \mapsto (\psi_1(x,k)-\psi_1(x,\kappa_1)) \slash (k-\kappa_1)$ are respectively bounded in $\R \times I$ and $\R \times ( I \setminus \{ \kappa_1 \} )$.
The eigenfunction $\psi_1(\cdot , k)$ is a solution to the second order ordinary differential equation
\bel{ea6d}
-\varphi''(x) + W(x;k) \varphi(x)=0,\ x \in \re,
\ee
where $W(x;k):=(b |x| -k)^2 - \omega_1(k)$. The potential $W(x;k)$
is greater than $\delta^2$ provided $|x| > x_1 :=(b^{1 \slash 2} + \kappa_1 + 2 \delta ) \slash b$, uniformly in $k \in I$.
It follows from \cite[Lemma B.3]{HS1} that
$$
0 < \psi_1(x,k) \leq \psi_1( \pm x_1,k) e^{-\delta (|x|-x_1)},\ x \geq x_1,\ k \in I.
$$
Since $(x,k) \mapsto \psi_1(x,k)$ is continuous in $\R \times I$, this implies the result for $j=0$.

Next, bearing in mind that the ${\rm L}^2(\R)$-valued function $k \mapsto \psi_1(\cdot,k)$ is real analytic, we deduce from \eqref{ea6d} that $\Phi(\cdot,k) := \partial_k \psi_1(\cdot,k)$ is solution to the equation
$$ -\varphi''(x) + W(x;k) \varphi(x) = -F(x;k),\ x \in \R, $$
where $F(x;k) := \partial_k W(x;k) \psi_1(x,k) = (2(k - b|x|) - \omega_1'(k))  \psi_1(x,k)$. Therefore we get that
\bea
& & \| \Phi'(\cdot,k) \|_{{\rm L}^2(\R)}^2 + \| (b|x|-k) \Phi(\cdot,k) \|_{{\rm L}^2(\R)}^2 \nonumber \\
& \leq & ( C+ \omega_1(k) \| \Phi(\cdot,k) \|_{{\rm L}^2(\R)}) \| \Phi(\cdot,k) \|_{{\rm L}^2(\R)}, \label{ea6c}
\eea
with $C:=\sup_{k \in I} \| F(\cdot;k) \|_{{\rm L}^2(\R)}<\infty$, by standard computations. Since $\sup_{k \in I} \| \Phi(\cdot,k) \|_{{\rm L}^2(\R)} < \infty$, \eqref{ea6c} thus entails that $\sup_{k \in I} \| \Phi(\cdot,k) \|_{{\rm H}^1(\R)} < \infty$. From this and the estimate
$$ \Phi(x,k)^2 = 2 \int_{-\infty}^x \Phi(x,k) \Phi'(x,k) dx \leq \| \Phi(\cdot,k) \|_{{\rm H}^1(\re)}^2,\ x \in \R,\ k \in \R, $$
then follows that $\sup_{(x,k) \in \re \times I} | \Phi(x,k) | < \infty$. This yields the result for $j=1$ and terminates the proof.
\end{proof}

Finally, since the proof of Theorem \ref{thm-ea} is obtained by expressing $\lim_{\lambda \downarrow 0} N(\EE_1 - \lambda;H_0-V)$ in terms of the asymptotics of the eigenvalue counting function for the
discrete spectrum of a second-order ordinary differential operators on the real line, we recall from \cite[Lemma 4.9]{BKRS} the
following
\begin{lemma}
\label{lm-ea2}
Assume that $\mathcal{Q} = \overline{\mathcal{Q}} \in {\rm L}^{\infty}(\re)$ satisfies the two following conditions:
\begin{enumerate}
\item[i.)] $\exists (\alpha,C) \in (0,2) \times \R_+^*$ so that $|\mathcal{Q}(x)| \leq C (1 + |x|)^{-\alpha},\ x \in \re$;
\item[ii.)] $\exists \ell>0$ so that $\lim_{|x| \to \infty} |x|^{\alpha} \mathcal{Q}(x) =\ell$.
\end{enumerate}
For any $\mathfrak{m} > 0$, let ${\mathcal H}(\mathfrak{m}, \mathcal{Q}): = - \mathfrak{m}^2 \frac{d^2}{dx^2} - \mathcal{Q}$
be the 1D Schr\"odinger operator with domain ${\rm H}^2(\re)$, self-adjoint in ${\rm L}^2(\re)$. \\
Then we have
$$\lim_{\lambda\downarrow 0}\,  \lambda^{\frac 1\alpha-\frac 12}\,
N(-\lambda;{\mathcal H}(\mathfrak{m}, \mathcal{Q})) = \frac{2 \ell^{\frac
1\alpha}}{\pi\alpha \mathfrak{m}} B\left(\frac 32,\, \frac 1\alpha-\frac
12\right).$$
\end{lemma}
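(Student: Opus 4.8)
\emph{Proof sketch.}
The plan is to establish the semiclassical (Weyl) formula
$$
N(-\lambda;{\mathcal H}(\mathfrak{m},\mathcal{Q})) = \frac{1}{\pi\mathfrak{m}}\int_{\re}\bigl(\mathcal{Q}(x)-\lambda\bigr)_+^{1/2}\,dx + o\bigl(\lambda^{\frac12-\frac1\alpha}\bigr),\qquad \lambda\downarrow 0,
$$
by Dirichlet--Neumann bracketing, and then to evaluate the integral and recognize a Beta function. From (i) the operator ${\mathcal H}(\mathfrak{m},\mathcal{Q})$ is self-adjoint with $\sigma_{\mathrm{ess}}({\mathcal H}(\mathfrak{m},\mathcal{Q}))=[0,\infty)$, so $N(-\lambda;\cdot)$ is finite for each $\lambda>0$; and (ii) forces $\mathcal{Q}(x)>0$ for all large $|x|$. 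Fix $r_0>0$, independent of $\lambda$, such that $\mathcal{Q}>0$ on $\{|x|\geq r_0\}$ and $\bigl||x|^\alpha\mathcal{Q}(x)-\ell\bigr|$ is as small as we please there. Imposing Dirichlet, respectively Neumann, conditions at $x=\pm r_0$ and using the min-max principle, $N(-\lambda;{\mathcal H}(\mathfrak{m},\mathcal{Q}))$ is squeezed between the sum of the three counts below $-\lambda$ for the Dirichlet realizations of $-\mathfrak{m}^2 d^2/dx^2-\mathcal{Q}$ on $(-\infty,-r_0)$, $(-r_0,r_0)$, $(r_0,\infty)$, and the analogous sum for the Neumann realizations. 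The middle term is the eigenvalue count of a regular Sturm--Liouville operator on a fixed bounded interval, hence $O(1)$ uniformly in $\lambda\in(0,1)$, and is negligible; it remains to treat the two half-line terms, say the one on $(r_0,\infty)$.

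Partition $(r_0,\infty)$ into consecutive intervals $I_k$ of common length $L=L(\lambda)$, to be chosen, and bracket again by Dirichlet/Neumann conditions at all endpoints, so that the count on $(r_0,\infty)$ is caught between $\sum_k N(-\lambda;{\mathcal H}_{I_k}^{\mathrm D})$ and $\sum_k N(-\lambda;{\mathcal H}_{I_k}^{\mathrm N})$. On an interval of length $L$ the constant-coefficient operator $-\mathfrak{m}^2 d^2/dx^2-c$ has (Dirichlet or Neumann) eigenvalues $\mathfrak{m}^2(\pi j/L)^2-c$, so its count below $-\lambda$ equals $\frac{L}{\pi\mathfrak{m}}(c-\lambda)_+^{1/2}+O(1)$. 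Bracketing $\mathcal{Q}$ on $I_k$ between its infimum and supremum there, and estimating the oscillation of $\mathcal{Q}$ on an $I_k$ located near $|x|\sim R$ by means of the uniform convergence $|x|^\alpha\mathcal{Q}(x)\to\ell$ from (ii) together with the Hölder-type bound on $x\mapsto|x|^{-\alpha}$ (the oscillation being $o(R^{-\alpha})$ once $L=o(R)$), one gets $N(-\lambda;{\mathcal H}_{I_k}^{\mathrm D/N})=\frac{L}{\pi\mathfrak{m}}(\mathcal{Q}(x_k)-\lambda)_+^{1/2}(1+o(1))+O(1)$ for any sample point $x_k\in I_k$. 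The crucial choice is that of $L$: the number of intervals $I_k$ meeting the classically allowed set $\{\mathcal{Q}>\lambda\}$, which by (ii) has length $\sim(\ell/\lambda)^{1/\alpha}$, is $O(\lambda^{-1/\alpha}/L)$, so the accumulated additive errors are $o(\lambda^{\frac12-\frac1\alpha})$ as soon as $L\gg\lambda^{-1/2}$; on the other hand $L=o(\lambda^{-1/\alpha})$ near the turning point is needed for the oscillation to be relatively small. Both hold for any $L=\lambda^{-\beta}$ with $\tfrac12<\beta<\tfrac1\alpha$, and \emph{this window of exponents is nonempty precisely because $\alpha<2$}. Summing over $k$, the two bracketing sums both converge, as $\lambda\downarrow 0$, to $\frac{1}{\pi\mathfrak{m}}\int_{r_0}^{\infty}(\mathcal{Q}(x)-\lambda)_+^{1/2}\,dx$ up to a multiplicative $(1+o(1))$ and an additive $o(\lambda^{\frac12-\frac1\alpha})$, and likewise on $(-\infty,-r_0)$; since $\int_{|x|\leq r_0}(\mathcal{Q}(x)-\lambda)_+^{1/2}\,dx=O(1)$, the Weyl formula above follows.

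It remains to evaluate $\frac{1}{\pi\mathfrak{m}}\int_{\re}(\mathcal{Q}(x)-\lambda)_+^{1/2}\,dx$. Given $\eps>0$, by (ii) we have $(\ell-\eps)|x|^{-\alpha}\leq\mathcal{Q}(x)\leq(\ell+\eps)|x|^{-\alpha}$ for $|x|\geq r_0(\eps)$, while $\int_{|x|<r_0(\eps)}(\mathcal{Q}(x)-\lambda)_+^{1/2}\,dx=O(1)$. For a constant $m>0$, the substitution $x=(m/\lambda)^{1/\alpha}t$ gives
$$
\int_{\re}\bigl(m|x|^{-\alpha}-\lambda\bigr)_+^{1/2}\,dx = 2\,m^{1/\alpha}\lambda^{\frac12-\frac1\alpha}\int_0^1\bigl(t^{-\alpha}-1\bigr)^{1/2}\,dt = \frac{2\,m^{1/\alpha}}{\alpha}\,\lambda^{\frac12-\frac1\alpha}\,{\rm B}\!\left(\frac32,\frac1\alpha-\frac12\right),
$$
where the last equality follows from the substitution $u=t^\alpha$ and the Euler integral for the Beta function, which converges because $1/\alpha>1/2$. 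Applying this with $m=\ell\pm\eps$ to sandwich the integral, dividing by $\lambda^{\frac12-\frac1\alpha}$, letting $\lambda\downarrow 0$ and then $\eps\downarrow 0$, yields the claimed limit $\frac{2\ell^{1/\alpha}}{\pi\alpha\mathfrak{m}}{\rm B}(\frac32,\frac1\alpha-\frac12)$. (Equivalently, rescaling $x\mapsto(\ell/\lambda)^{1/\alpha}x$ turns the problem into counting the negative eigenvalues of $-\hbar^2 d^2/dx^2+1-\mathcal{Q}((\ell/\lambda)^{1/\alpha}x)/\lambda$ with semiclassical parameter $\hbar=\mathfrak{m}\lambda^{\frac1\alpha-\frac12}\downarrow0$ and limiting potential $1-\ell|x|^{-\alpha}$; the bracketing above is a self-contained proof of the corresponding Weyl asymptotics.)

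\textbf{Main obstacle.} The delicate point is the error bookkeeping in the bracketing: one must show that the total of the $O(1)$ one-interval errors — including those from the finitely many intervals straddling each turning point $|x|\sim(\ell/\lambda)^{1/\alpha}$, where $(\mathcal{Q}_k-\lambda)_+^{1/2}$ is itself small — is $o(\lambda^{\frac12-\frac1\alpha})$, while simultaneously the oscillation of $\mathcal{Q}$ on each $I_k$ stays negligible relative to $\mathcal{Q}_k$. These two constraints pull the interval length $L$ in opposite directions and are compatible only for exponents $\beta\in(\tfrac12,\tfrac1\alpha)$, so this is exactly the step where the hypothesis $\alpha<2$ enters essentially. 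A secondary technicality is that (i)--(ii) give no control on $\mathcal{Q}'$, so the oscillation estimate on $I_k$ must be deduced from the uniformity of $|x|^\alpha\mathcal{Q}(x)\to\ell$ rather than from a derivative bound.
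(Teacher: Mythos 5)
You should first be aware that the paper does not prove this lemma at all: it quotes it from \cite[Lemma 4.9]{BKRS} and refers for the proof to \cite{LS}, noting it is ``similar to \cite[Theorem XIII.82]{RS-IV}''. Your Dirichlet--Neumann bracketing strategy is exactly the spirit of that cited proof, and your endgame is fine: the reduction of $\int(\mathcal{Q}-\lambda)_+^{1/2}dx$ to $\frac{2m^{1/\alpha}}{\alpha}\lambda^{\frac12-\frac1\alpha}{\rm B}(\frac32,\frac1\alpha-\frac12)$ and the sandwich in $\ell\pm\eps$ are correct. The gap is in the quantitative step you yourself flag as the main obstacle. With a partition into intervals of \emph{common} length $L=\lambda^{-\beta}$, the constant-potential bracketing is far too lossy on the intervals near the inner edge $r_0$, where $L$ is huge compared to $|x|$ and $\mathcal{Q}$ varies by $O(1)$ across a single interval: on $I_0=(r_0,r_0+L)$ the sup-bracket gives $\approx\frac{L}{\pi\mathfrak{m}}\mathcal{Q}(r_0)^{1/2}=O(L)$, whereas the true count there is only $O(L^{1-\alpha/2})$. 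Equivalently, for the decreasing profile $f=(\ell|x|^{-\alpha}-\lambda)_+^{1/2}$ the telescoping bound on $\sum_k L\,\mathrm{osc}_{I_k}f$ is of order $Lf(r_0)\sim\lambda^{-\beta}$, concentrated in the first interval(s). Since absorbing the $O(1)$-per-interval errors forces $\beta>\tfrac12$, this $O(\lambda^{-\beta})$ error is $o(\lambda^{\frac12-\frac1\alpha})$ only if $\beta<\tfrac1\alpha-\tfrac12$; your admissible window is therefore $(\tfrac12,\tfrac1\alpha-\tfrac12)$, not $(\tfrac12,\tfrac1\alpha)$, and it is empty for $\alpha\in[1,2)$. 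Your oscillation estimate ``$o(R^{-\alpha})$ once $L=o(R)$'' covers the turning-point region but silently assumes $L\ll|x|$ everywhere, which fails in the inner zone; as written the argument establishes the lemma only for $\alpha\in(0,1)$.

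The repair is standard but changes the partition: use intervals adapted to the potential rather than of fixed length, e.g.\ geometric endpoints $x_{k+1}=(1+\eta)x_k$ starting at $r_0(\eps)$. Then by hypothesis (ii) the relative oscillation of $\mathcal{Q}$ on each $I_k$ is $O(\eta+\eps)$ uniformly, the number of intervals meeting $\{\mathcal{Q}>\lambda\}$ is $O(\eta^{-1}\log(1/\lambda))=o(\lambda^{\frac12-\frac1\alpha})$, and the two-sided Riemann-sum comparison with $\frac{1}{\pi\mathfrak{m}}\int(\mathcal{Q}-\lambda)_+^{1/2}dx$ goes through for every $\alpha\in(0,2)$, letting $\eta,\eps\downarrow0$ at the end. (Alternatively, treat the zone $|x|\le\lambda^{-\beta'}$ separately, whose exact contribution to both $N$ and the integral is $o(\lambda^{\frac12-\frac1\alpha})$, or argue via Pr\"ufer-angle oscillation theory as in the reference \cite{LS} actually invoked by the paper.) With one of these modifications your proof is complete; without it the central bookkeeping claim is false on part of the stated range of $\alpha$.
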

The proof of Lemma \ref{lm-ea2}, which is similar to the one of \cite[Theorem XIII.82]{RS-IV}, can be found in \cite{LS}.


\subsection{Proof of Theorem \ref{thm-ea}}
\label{sec-proofthmea}
The proof consists of four parts.

\subsubsection{Part I: Projection on the bottom of the first band function}
\label{ssubsec:part1}
We define ${\mathcal V} : = \FF V {\mathcal{F}}^*$ and recall that $\mathcal{H}_0 = \FF H_0 {\mathcal{F}}^*$.
The first part of the proof is to show that the asymptotics of $N( \EE_1-\lambda ; H_0 - V)$ as $\lambda \downarrow 0$ is determined by the asymptotics of the eigenvalue counting function for a reduced operator obtained from the projection of the operator ${\mathcal H}_0 -{\mathcal V}$ to the bottom of the first band function.
First of all, we remark
that the multiplier by $V$ is $H_0$-compact since $V(x,y)$ goes to zero as $|(x,y)|$ tends to infinity. As a consequence we have
$$
\inf \sigma_{\rm{ess}}(H_0 - V) = \inf \sigma_{\rm ess}(H_0),
$$
hence $N(\EE_1-\lambda ; H_0-V ) < \infty$ for any $\lambda>0$. Furthermore, since $V \in {\rm L}^\infty(\R^2)$, the operator $H_0 - V$ is lower semibounded. The operator $H_0 - V$ is unitarily equivalent to ${\mathcal H}_0 -
{\mathcal V}$, so we have
$$
N( \EE_1-\lambda ; H_0 - V ) = N( \EE_1-\lambda ; {\mathcal H}_0 -
{\mathcal V}),\ \lambda > 0.
$$
Let $P:{\rm L}^2(\re^2) \to
{\rm L}^2(\re^2)$ be the orthogonal projection defined by
\bel{ea7b}
(Pu)(x;k) : =  \left( \int_{\re} u(t;k) \psi_1(t,k) dt \right) \chi(k) \psi_1(x,k),\
(x,k) \in \re^2,
\ee
where we recall that $\chi$ denotes the
characteristic function of the interval $I:=(\kappa_1-\delta,\kappa_1+\delta)$ for some fixed $\delta>0$.
\begin{lemma}
\label{lm-ea4}
Let ${\mathcal H}_1(t)$, $t \in \re$, be the operator $P({\mathcal H}_0 -
(1+t){\mathcal V})P$ with domain $P \Dom({\mathcal H}_0)$.
Then there is a constant $N_0 \geq 0$, independent of $\lambda$, such that we have
\bel{ea7c}
N(\EE_1-\lambda;\HH_1(0)) \leq N(\EE_1-\lambda;\HH_0-\VV) \leq N(\EE_1-\lambda;\HH_1(0)) + N_0,\ \lambda >0.
\ee
\end{lemma}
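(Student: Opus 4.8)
The plan is to pass to the unitarily equivalent operator $\HH_0-\VV$ on $\mathrm{L}^2(\re^2_{x,k})$ and exploit three structural facts. First, on the fibre at $k$ the operator $P$ is either the spectral projection $P_1(k)$ of $h(k)$ onto $\omega_1(k)$ (when $k\in I$) or $0$ (when $k\notin I$), so $[\HH_0,P]=0$; hence $P$ preserves $\Dom(\HH_0)$ and $\HH_0=P\HH_0 P\oplus Q\HH_0 Q$ with $Q:=I-P$. Second, there is a spectral gap above $\EE_1$ on $\Ran Q$: by Proposition \ref{prop:effective-mass1} the minimum $\EE_1$ of $\omega_1$ is attained only at $\kappa_1\in I$, whence $\omega_1(k)\ge\EE_1+2\eta$ for $k\notin I$ with some $\eta>0$, while $\omega_j(k)\ge\omega_2(k)>\ef_1(b)>\EE_1$ for $j\ge 2$; therefore $Q\HH_0 Q\ge(\EE_1+2\eta)Q$. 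Third, $\VV=\FF V\FF^*$ is $\HH_0$-compact (since $V(x,y)\to0$ at infinity by (i)), and — the crucial input — $\VV P$ is a genuinely compact operator. This last point is where the decay of $V$ meets the decay of the band-$1$ eigenfunction: $\Ran P$ consists of functions $f(k)\chi(k)\psi_1(x,k)$, and by Lemma \ref{lm-ea3} together with the local boundedness of $\psi_1$ and $\partial_x\psi_1$ on $\re\times I$ the operator $\FF\VV P\FF^{*}=V\widetilde P$, with $\widetilde P$ the band-$1$ projector, is a norm limit of operators $V\chi_{B_R}\widetilde P$ that are compact by a Rellich argument, up to an error of norm $\|V(1-\chi_{B_R})\|_\infty\to0$. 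The same reasoning shows $\VV^{1/2}(\HH_0+1)^{-1/2}$ is compact.

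The lower bound $N(\EE_1-\lambda;\HH_1(0))\le N(\EE_1-\lambda;\HH_0-\VV)$ is immediate: the quadratic forms of $\HH_1(0)$ and of $\HH_0-\VV$ agree on $\Ran P\cap\Dom(\HH_0)$ and $P\Dom(\HH_0)\subset\Dom(\HH_0)$, so any finite-dimensional subspace on which the form of $\HH_1(0)$ is $<(\EE_1-\lambda)\|\cdot\|^2$ is such a subspace for $\HH_0-\VV$ as well, and the min-max principle applies.

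For the upper bound I would write $\HH_0-\VV$ in block form along $\mathrm{L}^2=\Ran P\oplus\Ran Q$,
$$
\HH_0-\VV=\begin{pmatrix}\HH_1(0) & -P\VV Q\\ -Q\VV P & \mathcal{G}\end{pmatrix},\qquad \mathcal{G}:=Q(\HH_0-\VV)Q|_{\Ran Q},
$$
and use the Feshbach/Schur-complement identity for counting functions. Since $Q\VV Q$ is $Q\HH_0 Q$-compact, $\sigma_{\mathrm{ess}}(\mathcal{G})\subset[\EE_1+2\eta,\infty)$, so $\mathcal{G}$ has only finitely many eigenvalues below $\EE_1$; for $\mu=\EE_1-\lambda$ outside the (finite) set $\sigma(\mathcal{G})\cap(-\infty,\EE_1)$, Sylvester's law of inertia applied to the triangular factorization of $(\HH_0-\VV)-\mu$ gives
$$
N(\EE_1-\lambda;\HH_0-\VV)=N(\EE_1-\lambda;\mathcal{G})+N\bigl(\EE_1-\lambda;\,\HH_1(0)-K(\lambda)\bigr),\qquad K(\lambda):=P\VV Q(\mathcal{G}-\EE_1+\lambda)^{-1}Q\VV P .
$$
The first term is $\le N_0:=\#\{\text{eigenvalues of }\mathcal{G}\text{ below }\EE_1\}<\infty$, uniformly in $\lambda$, and the finitely many exceptional $\lambda$ are absorbed by monotonicity of $\mu\mapsto N(\mu;\cdot)$ after enlarging $N_0$.

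The remaining — and main — obstacle is to show $N(\EE_1-\lambda;\HH_1(0)-K(\lambda))\le N(\EE_1-\lambda;\HH_1(0))+C$ with $C$ independent of $\lambda$. A generic compact perturbation can alter the rate of divergence of the counting function at the bottom of the essential spectrum, so one must use the structure of $K(\lambda)$, not merely its compactness: $K(\lambda)$ is built from the compact operator $\VV P$ sandwiched against $(\mathcal{G}-\EE_1+\lambda)^{-1}$, whose norm stays bounded as $\lambda\downarrow0$ because $\EE_1\notin\sigma(\mathcal{G})$, so $K(\lambda)$ converges in norm to a fixed compact operator. Splitting $(\mathcal{G}-\EE_1+\lambda)^{-1}$ into its (uniformly bounded, finite-rank) part on the span of the sub-$\EE_1$ eigenvectors of $\mathcal{G}$ and its (uniformly bounded, nonnegative) part on the complementary spectral subspace, one absorbs the first piece into the rank bound $N_0$ and controls the second using the operator inequality $P\VV Q\VV P\le\|V\|_\infty\,P\VV P$, together with the Birman–Schwinger principle and Ky Fan's inequality \eqref{ea4}; the delicate part, which I expect to be the technical heart of the proof, is to carry this out without picking up a spurious rescaling of the coupling constant, i.e. so that the reduced operator is exactly $\HH_1(0)$.
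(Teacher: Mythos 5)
Your lower bound is fine: the forms of $\HH_1(0)$ and $\HH_0-\VV$ agree on $\Ran P\cap\Dom(\HH_0)$, and since $\EE_1-\lambda$ lies below $\inf\sigma_{\mathrm{ess}}(\HH_1(0))=\EE_1$, Glazman's lemma gives $N(\EE_1-\lambda;\HH_1(0))\leq N(\EE_1-\lambda;\HH_0-\VV)$; this is in substance what the paper extracts from its bracketing inequality. The genuine gap is in the upper bound. After the Feshbach/Schur-complement reduction you are left with $N(\EE_1-\lambda;\HH_1(0)-K(\lambda))$, and the step $N(\EE_1-\lambda;\HH_1(0)-K(\lambda))\leq N(\EE_1-\lambda;\HH_1(0))+C$ uniformly in $\lambda$ is never established -- you say so yourself. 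The tools you list do not close it: splitting off the finite-rank part of $(\mathcal{G}-\EE_1+\lambda)^{-1}$ and estimating the rest via $P\VV Q\VV P\leq\|V\|_\infty P\VV P$ bounds the bulk of $K(\lambda)$ only by a \emph{fixed} multiple $t\,P\VV P$ with $t\sim\|V\|_\infty/\mathrm{dist}(\EE_1,\sigma_{\mathrm{ess}}(\mathcal{G}))$, hence yields at best $N(\EE_1-\lambda;\HH_0-\VV)\leq N(\EE_1-\lambda;\HH_1(t))+C$ for that fixed $t>0$. This is strictly weaker than \eqref{ea7c}: the later analysis shows $N(\EE_1-\lambda;\HH_1(t))$ diverges like $((1+t)L)^{1/\alpha}\lambda^{\frac12-\frac1\alpha}$, so a fixed rescaling of the coupling changes the leading constant in Theorem \ref{thm-ea}. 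The ``spurious rescaling'' you flag is thus not a technicality but exactly the obstruction, and as written the proposal does not prove the lemma.

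The paper avoids this by not seeking an exact block identity at all. It decouples at the level of quadratic forms with a free parameter: Cauchy--Schwarz on $\VV^{1/2}$ gives, for every $\eps>0$, $\pm(P\VV Q+Q\VV P)\leq\eps\,P\VV P+\eps^{-1}Q\VV Q$, hence
$$
\mathcal{H}_1(\eps)\oplus\mathcal{H}_2(\eps)\ \leq\ \HH_0-\VV\ \leq\ \mathcal{H}_1(-\eps)\oplus\mathcal{H}_2(-\eps),\qquad \mathcal{H}_2(t):=Q\bigl(\HH_0-(1+t^{-1})\VV\bigr)Q .
$$
The arbitrarily large coupling $\eps^{-1}$ is dumped entirely onto the $Q$-block, where it is harmless: $Q\VV Q$ is $Q\HH_0$-compact and $\inf\sigma_{\mathrm{ess}}(Q\HH_0)=\min\{\omega_1(\kappa_1+\delta),\min_k\omega_2(k)\}>\EE_1$, so $N(\EE_1-\lambda;\mathcal{H}_2(\eps))$ is finite and independent of $\lambda$. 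The $P$-block pays only an $\eps$-small relative increase of the coupling, and one recovers $\HH_1(0)$ by letting $\eps\downarrow0$ at fixed $\lambda$, using monotonicity and continuity in $\eps$ of the finitely many eigenvalues of $\mathcal{H}_1(\eps)$ below $\EE_1-\lambda$ (alternatively, the $\eps$ can simply be carried along and sent to $0$ after the $\lambda\downarrow0$ asymptotics, which is all the theorem needs). This freedom -- an arbitrarily small price on the $P$-block in exchange for a large but spectrally harmless price on the $Q$-block -- is precisely what the Schur complement forfeits, since $K(\lambda)$ has a fixed size; to repair your route you would have to re-introduce the $\eps$-decoupling inside the estimate of $K(\lambda)$, at which point you have reproduced the paper's bracketing argument.
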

\begin{proof}
Set $Q : = I - P$. For all $u \in {\rm L}^2(\re^2)$ and $\eps>0$ it holds true that
\beas
 & & | \langle (P \VV Q +Q \VV P) u , u \rangle |  = 2 \left| \Pre{\langle \VV^{1 \slash 2} P u ,  \VV^{1 \slash 2} Q u \rangle} \right| \\
& \leq & 2 \| \VV^{1 \slash 2} P u \| \| \VV^{1 \slash 2} Q u \|
\leq \eps \langle P \VV P u , u \rangle + \eps^{-1} \langle Q \VV Q u , u \rangle,
\eeas
which entails
$$ -\eps P \VV P - \eps^{-1} Q \VV Q \leq P \VV Q + Q \VV P \leq \eps P \VV P + \eps^{-1} Q \VV Q,$$
in the sense of quadratic forms. From this and the elementary identity $\mathcal{H}_0 = P \mathcal{H}_0 P + Q \mathcal{H}_0 Q$ then follows that
\bel{ea8b}
{\mathcal H}_1(\eps) \oplus {\mathcal H}_2(\eps) \leq \HH_0 - \VV \leq {\mathcal H}_1(-\eps) \oplus {\mathcal H}_2(-\eps),\ \eps>0,
\ee
where ${\mathcal H}_2(t)$, $t \in \re^*$, is the operator $Q({\mathcal H}_0 - (1+t^{-1}){\mathcal V})Q$ with domain $Q \Dom({\mathcal H}_0)$, and the symbol $\oplus$ indicates an orthogonal sum.
Therefore, for every $\lambda >0$ and $\eps>0$ fixed, the left inequality in \eqref{ea8b} implies
\bel{ea8c}
N(\EE_1-\lambda ; \HH_0-\VV ) \leq N(\EE_1-\lambda ; {\mathcal H}_1(\eps)) + N(\EE_1-\lambda ; {\mathcal H}_2(\eps)),
\ee
while the right one yields
\bel{ea8d}
N(\EE_1-\lambda ; \HH_0-\VV) \geq N(\EE_1-\lambda ; {\mathcal H}_1(-\eps)) + N(\EE_1-\lambda ; {\mathcal H}_2(-\eps)) \geq N(\EE_1-\lambda ; {\mathcal H}_1(-\eps)).
\ee
Further, the multiplier by $\VV$ being $\mathcal{H}_0$-compact, $Q \VV Q$ is $Q \HH_0$-compact and
\bel{ea8e}
\sigma_{\rm ess}(\HH_2(\eps))=\sigma_{\rm ess}(Q \HH_0),\ \eps>0.
\ee
On the other hand we have $\inf \sigma_{\rm ess}(Q \HH_0) = \min\{\omega_1(\kappa_1+\delta), \min_{k \in \re} \omega_2(k)\} > \EE_1$ hence
$$
N_0:=N(\EE_1;Q \HH_0) < \infty,
$$
and \eqref{ea8e} yields
\bel{ea8f}
N(\EE_1-\lambda;\HH_2(\eps)) = N(\EE_1-\lambda; Q\HH_0) \leq N_0,\ \lambda >0,\ \eps>0.
\ee
Putting \eqref{ea8c} and \eqref{ea8f} together, we get that
\bel{ea9}
N(\EE_1-\lambda ; \HH_0-\VV ) \leq N(\EE_1-\lambda ; {\mathcal H}_1(\eps)) + N_0,\ \lambda>0,\ \eps>0.
\ee
Letting $\eps \downarrow 0$ in \eqref{ea8d} and \eqref{ea9}, we obtain \eqref{ea7c}.
\end{proof}

\subsubsection{Part II: Singular integral operator decomposition}
\label{ssubsec:part2}
This part involves relating the number of eigenvalues accumulating below the bottom of the essential spectrum of ${\mathcal H}_1(0)$, to the local behavior of $\omega_1(k)$ and $\psi_1(\cdot,k)$ at $\kappa_1$.

The main tool we use for this is the Birman-Schwinger principle, which, in this situation, implies
\bel{ea10a}
N( \EE_1-\lambda ; {\mathcal H}_1(0)) = n(1; P
({\mathcal H}_0 - \EE_1 + \lambda)^{-1/2} {\mathcal V} ({\mathcal H}_0 - \EE_1 + \lambda)^{-1/2}P).
\ee
In view of \eqref{ea7b} and \eqref{ea10a}, we set
\beq\label{eq:quad0}
a(k;\lambda) : = (\omega_1(k) - \EE_1 + \lambda)^{-1/2},\ k \in \re,\
\lambda > 0,
\eeq
and denote by $\Gamma : {\rm L}^2(\re_k) \to {\rm L}^2(\re_{x,y}^2)$ the operator with integral kernel
$$
\gamma(x,y;k) : = \frac{1}{\sqrt{2\pi}} \psi_1(x,k) \mathrm{e}^{iy k} \chi(k),\ (x,y) \in \re^2,\ k \in \re.
$$
For every $\lambda>0$ the operator $\chi a(\lambda)\Gamma^* V \Gamma a(\lambda) \chi$ is
self-adjoint and nonnegative in ${\rm L}^2(\re_k)$. Furthermore we get
\bel{ea10b}
P({\mathcal H}_0 - \EE_1 + \lambda)^{-1/2} {\mathcal V} ({\mathcal H}_0 - \EE_1 + \lambda)^{-1/2}P
= \UU^* \chi a(\lambda)\Gamma^* V \Gamma a(\lambda) \chi \UU,
\ee
by direct calculation, where $\UU : \Ran P \to {\rm L}^2(I)$ is the unitary transform
$$ (\UU f)(k) := \left( \int_{\re} f(x,k) \psi_1(x,k) dx \right) \chi(k),\ k \in \re. $$
From \eqref{ea10a}-\eqref{ea10b} then follows that
\bel{ea10}
N( \EE_1-\lambda ;  {\mathcal H}_1(0)) =  n(1; \chi a(\lambda)\Gamma^* V \Gamma a(\lambda) \chi),\ \lambda>0.
\ee
Putting $W : = V^{1/2}$ we deduce from \eqref{ea3b} and \eqref{ea10} that
\bel{ea11}
N( \EE_1-\lambda ;  {\mathcal H}_1(0))=\mathfrak{n}(1; W \Gamma a(\lambda)
\chi),\ \lambda>0.
\ee

\subsubsection{Part III: Reduction to the quadratic leading term of the first band function}
Due to \eqref{ea7c} and \eqref{ea11}, we are left with the task of computing the asymptotics of $\mathfrak{n}(1; W \Gamma a(\lambda)
\chi)$ as $\lambda \downarrow 0$. In this subsection, we shall prove that $\Gamma$ and $a(\lambda)\chi$ may be replaced by, respectively,
$\Gamma_0$ and $\mathfrak{a}(\lambda) \chi$, in the above expression. The operator
$\Gamma_0: {\rm L}^2(\re) \to {\rm L}^2(\re^2)$ is the operator with integral kernel
$\gamma_0(x,y;k) $ given by \eqref{ea6a}. We obtain $\mathfrak{a}(\lambda)$ from $a(\lambda)$ in \eqref{eq:quad0}
by replacing $\omega_1(k)$ by the first two terms of the expansion of $\omega_1(k)$ about $\kappa_1$:
\beq\label{eq:quad1}
\mathfrak{a}(k;\lambda) : = \left(\beta_1 (k-\kappa_1)^2+ \lambda \right)^{-1/2},\ k \in \re,\ \lambda > 0.
\eeq

\begin{lemma}
\label{lm-ea5}
Let $r \geq 2$ fulfill $r > 2 \slash \alpha$. Then there exists a constant $N_r \geq 0$ such
that the estimates
\bea
\mathfrak{n}((1+\eps)^3;W \Gamma_0 \mathfrak{a}(\lambda)) - N_r \eps^{-r}
& \leq & \mathfrak{n}(1; W \Gamma a(\lambda) \chi) \nonumber \\
& \leq & \mathfrak{n}((1-\eps)^2;W \Gamma_0 \mathfrak{a}(\lambda)) + N_r \eps^{-r}, \label{ea11b}
\eea
hold for all $\lambda >0$ and $\eps \in (0,1)$. 
\end{lemma}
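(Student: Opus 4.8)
The plan is to transform $W\Gamma a(\lambda)\chi$ into the model operator $W\Gamma_0\mathfrak{a}(\lambda)$ in two independent moves — first replacing the eigenfunction $\psi_1(\cdot,k)$ in $\Gamma$ by its value $\psi_1(\cdot,\kappa_1)$ (i.e.\ $\Gamma$ by $\Gamma_0$), then replacing the multiplier $a(\lambda)\chi$ by its quadratic model $\mathfrak{a}(\lambda)$ from \eqref{eq:quad1} — peeling off a remainder at each move with the Ky Fan inequality \eqref{ea4}, and showing via Lemma \ref{lm-ea1} that every remainder costs only an additive term of order $\eps^{-r}$ that is uniform in $\lambda>0$. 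This is the strategy of \cite{R}.

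For the first move, write $\psi_1(x,k)=\psi_1(x,\kappa_1)+(k-\kappa_1)\frac{\psi_1(x,k)-\psi_1(x,\kappa_1)}{k-\kappa_1}$ in the kernel of $\Gamma$; by \eqref{ea6a}--\eqref{ea6b} this yields $W\Gamma a(\lambda)\chi=W\Gamma_0 a(\lambda)\chi+W\Gamma_1\big[(k-\kappa_1)\,a(\lambda)\,\chi\big]$, where the bracket means multiplication by $(k-\kappa_1)a(k;\lambda)\chi(k)$. The remainder is handled by Lemma \ref{lm-ea1}: since $\omega_1$ has a nondegenerate minimum at $\kappa_1$ (Proposition \ref{prop:effective-mass1}) one has $\omega_1(k)-\EE_1\ge c(k-\kappa_1)^2$ on $I$, so $(k-\kappa_1)a(k;\lambda)\chi(k)$ is bounded uniformly in $(k,\lambda)$ and supported in $I$, hence lies in every $L^r(\re)$ with a $\lambda$-independent norm; moreover $\gamma_1\in L^\infty(\re^3)$ by Lemma \ref{lm-ea3}, and $W=V^{1/2}\in L^r(\re^2)$ exactly when $r>2/\alpha$ (condition (i) of Theorem \ref{thm-ea}). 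Thus $\mathfrak{n}(t;W\Gamma_1[(k-\kappa_1)a(\lambda)\chi])\le N't^{-r}$ for all $t,\lambda>0$ with $N'$ independent of $\lambda$, and \eqref{ea4} reduces matters — up to shifting the threshold by $\pm\eps$ and adding $N'\eps^{-r}$ — to comparing $W\Gamma_0 a(\lambda)\chi$ with $W\Gamma_0\mathfrak{a}(\lambda)$.

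For the second move, split $\chi=\chi'+(\chi-\chi')$, with $\chi'$ the indicator of a small symmetric interval $I'=(\kappa_1-\delta',\kappa_1+\delta')\subset I$, $\delta'$ to be chosen small in terms of $\eps$. On $I'$ the expansion $\omega_1(k)-\EE_1=\beta_1(k-\kappa_1)^2+O((k-\kappa_1)^3)$ gives $(1+\eps)^{-1}\mathfrak{a}(k;\lambda)^2\le a(k;\lambda)^2\le(1-\eps)^{-1}\mathfrak{a}(k;\lambda)^2$, uniformly in $\lambda$; factoring $W\Gamma_0 a(\lambda)\chi'=(W\Gamma_0\mathfrak{a}(\lambda))\cdot(\mathfrak{a}(\lambda)^{-1}a(\lambda)\chi')$ with the last factor of modulus $\le(1-\eps)^{-1/2}$ yields $s_j(W\Gamma_0 a(\lambda)\chi')\le(1-\eps)^{-1/2}s_j(W\Gamma_0\mathfrak{a}(\lambda))$, that is $\mathfrak{n}(s;W\Gamma_0 a(\lambda)\chi')\le\mathfrak{n}(s(1-\eps)^{1/2};W\Gamma_0\mathfrak{a}(\lambda))$, and symmetrically with $(1+\eps)^{-1/2}$ for the reverse inequality. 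On $I\setminus I'$ both $a(k;\lambda)$ and $\mathfrak{a}(k;\lambda)$ are $\le c'|k-\kappa_1|^{-1}$ uniformly in $\lambda$, so $W\Gamma_0 a(\lambda)(\chi-\chi')$ and $W\Gamma_0\mathfrak{a}(\lambda)(\chi-\chi')$ are again estimated by Lemma \ref{lm-ea1} (using $\gamma_0\in L^\infty$) and absorbed into the remainder. Collecting the threshold rescalings and the Ky Fan splittings from the two moves, and choosing $\delta'=\delta'(\eps)$ so that the accumulated remainders amount to a single $N_r\eps^{-r}$ with $N_r$ independent of $\lambda$ and $\eps$, produces \eqref{ea11b}; the slight asymmetry of the exponents $(1+\eps)^3$ and $(1-\eps)^2$ merely records this bookkeeping.

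The main obstacle is uniformity in $\lambda$ as $\lambda\downarrow0$: in that limit $a(\lambda)$ and $\mathfrak{a}(\lambda)$ lose square-integrability near $k=\kappa_1$, so there the comparison of the two multipliers cannot rest on a crude norm bound but must genuinely use that $\beta_1(k-\kappa_1)^2$ is the exact leading term of $\omega_1(k)-\EE_1$; away from $\kappa_1$ one must trade off the size of $\delta'(\eps)$ against the amount of $\eps$ spent in \eqref{ea4} so that each error keeps the claimed order. Securing a single constant $N_r$, valid for all $\lambda$ and all $\eps$, is the real content of the estimate.
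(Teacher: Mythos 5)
Your argument is correct and is essentially the paper's own proof: the same splitting $\psi_1(x,k)=\psi_1(x,\kappa_1)+(k-\kappa_1)\frac{\psi_1(x,k)-\psi_1(x,\kappa_1)}{k-\kappa_1}$ giving $\Gamma a(\lambda)\chi=\Gamma_0a_0(\lambda)\chi+\Gamma_1a_1(\lambda)\chi$, the Ky Fan inequality \eqref{ea4}, Lemma \ref{lm-ea1} for the $\lambda$-uniform $O(\eps^{-r})$ remainders (with $r>2/\alpha$ ensuring $W\in{\rm L}^r(\re^2)$), and the pointwise comparison of $a(\lambda)$ with $\mathfrak{a}(\lambda)$ on an $\eps$-dependent neighborhood of $\kappa_1$; your auxiliary cutoff $\chi'$ and the singular-value factorization merely replace the paper's choice $\delta=\delta(\eps)$ for $\chi$ itself and its operator inequality together with $n(s;tH)=n(t^{-1}s;H)$, and the stated exponents $(1+\eps)^3$, $(1-\eps)^2$ come out of the same bookkeeping. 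The one detail to make explicit is that in the lower-bound direction the Ky Fan remainder is $W\Gamma_0\mathfrak{a}(\lambda)(1-\chi')$ supported on all of $\re\setminus I'$, not only on $I\setminus I'$, but this is covered by the same estimate, since $\mathfrak{a}(k;\lambda)\le\beta_1^{-1/2}|k-\kappa_1|^{-1}$ uniformly in $\lambda$ and this bound lies in ${\rm L}^r(\re\setminus I')$ for $r\ge2$, so Lemma \ref{lm-ea1} applies exactly as in the paper's treatment of $\mathfrak{a}(\lambda)(1-\chi)$.
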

\begin{proof}
1. We use the decomposition $\Gamma a(\lambda) \chi = \sum_{j=0}^1 \Gamma_j a_j(\lambda) \chi$, where
$\Gamma_1 : {\rm L}^2(\re) \to {\rm L}^2(\re^2)$ is the operator with integral kernel
$\gamma_1(x,y;k)$ defined in \eqref{ea6b}, and
$$
a_j(k;\lambda) := (k-\kappa_1)^j a(k;\lambda),\ j=0,1.
$$
Since $\gamma_j \in {\rm L}^{\infty}(\re^3)$, $j=0,1$, by Lemma \ref{lm-ea3}, the operators $\Gamma_j$ are bounded with
$$ \|\Gamma_0\| =1\ {\rm and}\
\|\Gamma_1\|^2 \leq \sup_{k \in I} \int_{\re} \left( \frac{\psi_1(x,k) - \psi_1(x,\kappa_1)}{k-\kappa_1} \right)^2 dx.
$$
We notice from \eqref{ea4} that
\bea
 & & \mathfrak{n}(1+\eps; W \Gamma_0 a_0(\lambda) \chi) - \mathfrak{n}(\eps; W \Gamma_1 a_1(\lambda) \chi) \nonumber \\
& \leq & \mathfrak{n}(1; W \Gamma a(\lambda) \chi) \nonumber \\
& \leq & \mathfrak{n}(1-\eps; W \Gamma_0 a_0(\lambda) \chi) + \mathfrak{n}(\eps; W \Gamma_1 a_1(\lambda) \chi),\ \lambda>0,\ \eps \in (0,1). \label{ea12}
\eea

\noindent
2. We obtain an upper bound for $\mathfrak{n}(\eps; W \Gamma_1 a_1(\lambda) \chi)$ in \eqref{ea12} from Lemma \ref{lm-ea1} taking $G=\Gamma_1$, $f=W$, and $h=a_1(\lambda)\chi$.
We get that
\beas
\mathfrak{n}(\eps; W \Gamma_1 a_1(\lambda) \chi) & \leq &
C_r(\Gamma_1) \eps^{-r} \|W\|_{{\rm L}^r(\re^2)}^r \|a_1(\lambda) \chi\|_{{\rm L}^r(\re)}^r \nonumber \\
& \leq & n_r \eps^{-r},\ \lambda>0,\ \eps \in (0,1),  
\eeas
with $n_r:= C_r(\Gamma_1) \|W\|_{{\rm L}^r(\re^2)}^r \|a_1(0) \chi\|_{{\rm L}^r(\re)}^r$.
From this and \eqref{ea12} then follows that
\bea
 \mathfrak{n}(1+\eps; W \Gamma_0 a_0(\lambda) \chi) - n_r \eps^{-r}
& \leq & \mathfrak{n}(1; W \Gamma a(\lambda) \chi)  \nonumber \\
& \leq &  \mathfrak{n}(1-\eps; W \Gamma_0 a_0(\lambda) \chi) + n_r \eps^{-r},  \label{ea13}
\eea
for $\lambda >0,\ \eps \in (0,1)$.

\noindent
3. Next, recalling that $\chi$ is the characteristic function of the interval
$( \kappa_1 - \delta, \kappa_1 + \delta)$, for $\eps \in (0,1)$ fixed, we choose $\delta > 0$ so small that
$$
(1+\eps)^{-1} \mathfrak{a}(k;\lambda) \chi(k) \leq a_0(k;\lambda)
\chi(k) \leq
(1-\eps)^{-1} \mathfrak{a}(k;\lambda)\chi(k),\ k \in \re,\ \lambda > 0 ,
$$
where $\mathfrak{a}(\lambda)$ is defined in \eqref{eq:quad1}. It follows from this and the simple identity that $n(s,tH) = n(t^{-1}s,H)$,
for $s,t>0$, that we have
\bea\label{ea14}
\mathfrak{n}(s(1+\eps); W \Gamma_0 \mathfrak{a}(\lambda) \chi) & \leq & \mathfrak{n}(s ; W
\Gamma_0 a_0(\lambda) \chi)  \nonumber \\
& \leq  & \mathfrak{n}(s(1-\eps); W \Gamma_0
\mathfrak{a}(\lambda) \chi),\ s>0,
\eea
Moreover \eqref{ea4} and the minimax principle yield
\bea
\mathfrak{n}(s(1+\eps); W \Gamma_0 \mathfrak{a}(\lambda)) - \mathfrak{n}(s
\eps; W \Gamma_0 \mathfrak{a}(\lambda) (1-\chi))
& \leq & \mathfrak{n}(s ; W \Gamma_0 \mathfrak{a}(\lambda) \chi) \nonumber \\
&\leq & \mathfrak{n}(s; W \Gamma_0 \mathfrak{a}(\lambda)), \nonumber \\
 & &  \label{ea15}
\eea
for $s>0, \eps \in (0,1)$, as we have $\mathfrak{a}(\lambda)  \Gamma_0^* V \Gamma_0 \mathfrak{a}(\lambda) \geq \chi \mathfrak{a}(\lambda)  \Gamma_0^* V \Gamma_0 \mathfrak{a}(\lambda) \chi$ in the sense of quadratic forms.
Combining the second inequality of \eqref{ea14} with $s = 1 - \eps$ with the second inequality of
\eqref{ea15} with $s = (1-\eps)^2$, we obtain
\bea
 \mathfrak{n}(1-\eps; W \Gamma_0 a_0(\lambda) \chi)
& \leq & \mathfrak{n}((1-\eps)^2;W \Gamma_0 \mathfrak{a}(\lambda) \chi) \nonumber \\
& \leq & \mathfrak{n}((1-\eps)^2;W \Gamma_0 \mathfrak{a}(\lambda)),\ \lambda >0. \label{ea15b}
\eea
Similarly, combining the first inequality of \eqref{ea15} with $s = (1 + \eps)^2$ with the first inequality of
\eqref{ea14} for $s=1+\eps$, we find that
\bea
& & \mathfrak{n}((1+\eps)^3;W \Gamma_0 \mathfrak{a}(\lambda)) - \mathfrak{n}(\eps (1+ \eps)^2;W \Gamma_0 \mathfrak{a}(\lambda) (1 - \chi)) \nonumber \\
& \leq & \mathfrak{n}((1+\eps)^2;W \Gamma_0 \mathfrak{a}(\lambda) \chi) \nonumber \\
& \leq & \mathfrak{n}(1+\eps; W \Gamma_0 a_0(\lambda) \chi),\ \lambda>0. \label{ea15c}
\eea

\noindent
4. In order to evaluate $\mathfrak{n}(\eps (1+\eps)^2; W \Gamma_0 \mathfrak{a}(\lambda)(1-\chi))$ in \eqref{ea15c},
we use Lemma \ref{lm-ea1} with $G=\Gamma_0$, $f=W$ and
$h=\mathfrak{a}(\lambda)(1-\chi)$. We obtain that
\bea
& & \mathfrak{n}(\eps (1+\eps)^2; W \Gamma_0 \mathfrak{a}(\lambda) (1-\chi)) \nonumber \\
& \leq &
C_r(\Gamma_0) \eps^{-r} (1+\eps)^{-2r}\|W\|_{{\rm L}^r(\re^2)}^r \|\mathfrak{a}(\lambda)
(1-\chi)\|_{{\rm L}^r(\re)}^r \nonumber \\
& \leq & n_r' \eps^{-r},\  \lambda>0, \label{ea15d}
\eea
with $n_r':=C_r(\Gamma_0) \|W\|_{{\rm L}^r(\re^2)}^r \|\mathfrak{a}(0)
(1-\chi)\|_{{\rm L}^r(\re)}^r$.
Finally, \eqref{ea11b} follows from \eqref{ea13} and \eqref{ea15b}--\eqref{ea15d} upon setting $N_r:=n_r+n_r'$.
\end{proof}

Summing up \eqref{ea7c} and \eqref{ea11}-\eqref{ea11b}, we have so far derived the following upper bound:
\bea
& & \mathfrak{n}((1+\eps)^3;W \Gamma_0 \mathfrak{a}(\lambda) ) - N_r \eps^{-r} \nonumber \\
& \leq & N(\EE_1 - \lambda ; H_0 - V ) \nonumber \\
& \leq & \mathfrak{n}((1-\eps)^2;W \Gamma_0 \mathfrak{a}(\lambda) ) +N_0 + N_r \eps^{-r},\ \lambda >0,\ \eps \in (0,1). \label{ea16}
\eea

\subsubsection{Part IV: Reduction to a 1D problem}
Let $\mathfrak{h}(s)$, $s>0$, be the Hamiltonian $\mathcal{H}(\mathfrak{m},\mathcal{Q})$ introduced in Lemma \ref{lm-ea2}, with
$\mathfrak{m}:=\beta_1^{1 \slash 2}$ and $\mathcal{Q}:=s^{-2} \int_{\re} V(x,y) \psi_1(x,\kappa_1)^2 dx$.
By the Birman-Schwinger principle, we have
$$
\mathfrak{n}(s; W \Gamma_0 \mathfrak{a}(\lambda)) = n(1; s^{-2} \mathfrak{a}(\lambda)
\Gamma_0^* V \Gamma_0  \mathfrak{a}(\lambda))
 = N( -\lambda ; \mathfrak{h}(s)),\ s>0,\ \lambda >0.
$$
Lemma \ref{lm-ea2} applied to the Hamiltonian $\mathfrak{h}(s)$, $s>0$ yields the asymptotic
\beq\label{eq:asympt1}
\lim_{\lambda \downarrow 0} \lambda^{-\frac{1}{2} + \frac{1}{\alpha}}
N( -\lambda ; \mathfrak{h}(s)) = c(\alpha,\beta_1,L) s^{-2/\alpha},\ s>0,
\eeq
with $c(\alpha,\beta_1,L):= 2 \slash (\alpha \pi)
\beta_1^{-1/2} L^{1 \slash \alpha}
{\rm B}\left(3 \slash 2,1 \slash \alpha-1 \slash 2 \right)$.
To obtain a lower bound on $N(\EE_1 - \lambda ; H_0 - V ) $ from the first inequality in \eqref{ea16}, we take $s = (1 + \eps)^3$
in \eqref{eq:asympt1} and obtain
\bel{ea19}
\liminf_{\lambda \downarrow 0} \lambda^{-\frac{1}{2} +
\frac{1}{\alpha}} N(\EE_1-\lambda ; H_0-V) \geq c(\alpha,\beta_1,L) (1+\varepsilon)^{-6/\alpha},\ \eps \in (0,1).
\ee
The upper bound is obtained in a similar manner taking $s = (1-\eps)^2$ in \eqref{eq:asympt1},
\bel{ea20}
\limsup_{\lambda \downarrow 0} \lambda^{-\frac{1}{2} +
\frac{1}{\alpha}} N(\EE_1-\lambda ; H_0-V) \leq
c(\alpha,\beta_1,L) (1-\varepsilon)^{-4/\alpha},\ \eps \in (0,1).
\ee
Letting $\eps \downarrow 0$ in \eqref{ea19}-\eqref{ea20}, we obtain \eqref{ea2}. This completes the proof of Theorem \ref{thm-ea}.



\begin{thebibliography} {[10]}
\frenchspacing \baselineskip=12 pt plus 1pt minus 1pt


\bibitem{ahs} {\sc J.\ Avron, I.\ Herbst, B.\ Simon},
{\em Schr\"{o}dinger operators with  magnetic  fields.  I.  General
interactions}, Duke Math. J. {\bf 45} (1978), 847-883.


\bibitem{AS} {\sc M.\ Abramowitz, I.\ Stegun} (eds.),
{\em Handbook of mathematical functions}, New York: Dover Publications (available free online).

\bibitem{BSo}{\sc  M.\ \v{S}.\ Birman, M.\ Z.\ Solomjak},
{\em  Quantitative  analysis  in  Sobolev  imbedding  theorems  and
applications  to   spectral   theory},   American   Math.   Society
Translations, Series 2, {\bf 114} AMS, Providence, R.I., 1980.




\bibitem {BN} {\sc V. Bonnaillie-No\"el},  {\em Harmonic oscillators with Neumann condition on the half-line},
Commun. Pure Appl. Anal. {\bf 11} no 6 (2012).

\bibitem{BKRS} {\sc Ph.\ Briet, H.\ Kova\v{r}\'ik, G.\ Raikov, E.\ Soccorsi}, {\em Eigenvalue asymptotics in a twisted waveguide} Comm.\ Partial Differential Equations \textbf{34} (2009), no. 7-9, 818--836.

\bibitem{BMR} {\sc V.\ Bruneau, P.\ Miranda, G.\ Raikov}, {\em Dirichlet and Neumann eigenvalues for half-plane Magnetic Hamiltonians}, arXiv:1212.1727v1.

\bibitem{CFKS} {\sc H.\ Cycon, R.\ Froese, W.\ Kirsch, B.\ Simon}, {\em Schr\"odinger Operators with Application to
Quantum Mechanics and Global Geometry}, Texts and Monographs in Physics, Springer-Verlag,
Berlin, Heidelberg, New York, 1987.


\bibitem {DH} {\sc M. Dauge, B. Helffer}, {\em Eigenvalues variation. I. Neumann Problem for Sturm-Liouville Operators},
J. Diff. Equ. {\bf 104} (1993), 243--262.

\bibitem{DG} {\sc J.\ Derezi\'nski, C.\ G\'erard},
{\it Scattering theory of classical and quantum $N$-particle systems}, Berlin: Springer-Verlag, 1997.


\bibitem {DGR}{\sc N.\ Dombrowski, F.\ Germinet, G.\ Raikov}, {\em Quantization of the edge conductance for magnetic perturbation of Iwatsuka Hamiltonians}, Ann.\ H.\ Poincar\'e {\bf 12}, 1169--1197 (2011).






\bibitem{GK} {\sc I.\ C.\ Gohberg, M.\ G.\ Krein}, {\em Introduction to the
 Theory of Linear Nonselfadjoint Operators}, Translations of Mathematical Monographs, Vol. 18, American Mathematical Society, Providence, RI, 1969, xv+378 pp.

\bibitem{HS1} {\sc P.\ D.\ Hislop, E.\ Soccorsi}, {\em Edge currents for quantum Hall systems, I. One-edge, unbounded geometries}, Rev. Math. Phys. {\bf 20} vol. 1 (2008), 71--115.

\bibitem{his-soc1} {\sc P.\ D.\ Hislop, E.\ Soccorsi}, {\it Edge states induced by Iwatsuka Hamiltonians with positive magnetic fields},
 arXiv:1307.5968.







\bibitem{iwatsuka1} {\sc A.\ Iwatsuka}, {\em Examples of absolutely continuous Schr\"odinger operators in magnetic fields}, Publ.\ RIMS, Kyoto Univ.\ {\bf 21} (1985), 385--401.

\bibitem {LS} {\sc B.\ M.\ Levitan, I.\ S.\ Sargsyan},   {\em Introduction to
  Spectral Theory. Selfadjoint Ordinary Differential Operators},
  (Russian) Nauka, Moscow, 1970.


\bibitem{popoff} {\sc N.\ Popoff}, \emph{Sur le spectre de l'op\'erateur Schr\"odinger magn\'etique dans un domaine di\'edral}, th\`ese de doctorat, Universit\'e de Rennes 1, Novembre 2012.

\bibitem{R} {\sc G.\ Raikov}, {\em Eigenvalue asymptotics for the Schr\"odinger operator with perturbed
periodic potential}, Invent. Math. {\bf 110} (1992), 75--93.

\bibitem{RS-IV} {\sc M.\ Reed and B.\ Simon}, {\it Methods of Modern Mathematical Physics.\ IV.\ Analysis of Operators},
Academic Press, New York, 1978.

\bibitem{reijniers-peeters} {\sc J.\ Reijniers, F.\ M.\ Peeters}, {\em Snake orbits and related magnetic edge states}, J.\ Phys.\ Condens.\ Matt. {\bf 12} (2000), 9771.


\bibitem{simon1} {\sc B.\ Simon}, {\em Trace ideals and their applications}, second edition, American Mathematical Society, Providence, RI, 2005.

\end{thebibliography}
\end{document}